\documentclass{article}
\usepackage[margin=1in]{geometry}
\usepackage{amsmath,amssymb}
\usepackage{amsfonts}
\usepackage{natbib}
\usepackage{psfrag}
\usepackage{enumerate}
\usepackage{graphicx}
\usepackage{amsthm}
\usepackage{xcolor}
\usepackage{bm}
\usepackage{xr}

\title{Modelling across extremal dependence classes}
\author{J.\ L.\ Wadsworth$^{1}$, J.\ A.\ Tawn$^{1}$, A.\ C.\ Davison$^{2}$ and D.\ M.\ Elton$^{1}$\\
$^{1}$Lancaster University, UK\\
$^{2}$Ecole Polytechnique F\'{e}d\'{e}rale de Lausanne, Switzerland}

\newcommand{\E}{\textsf{E}}
\newcommand{\Prob}{\textsf{P}}
\newcommand{\dsp}{\,\mbox{d}}
\newcommand{\abs}[1]{\lvert{#1}\rvert}

\newcommand{\norm}[2][m]{\lVert{#2}\rVert_{#1}}

\newcommand{\nf}{\tau}
\newcommand{\mnf}{\nu}
\newcommand{\muI}{\Omega_0}
\newcommand{\nuI}{\Omega}
\newcommand{\lep}{v'}
\newcommand{\rep}{v''}
\newcommand{\upm}{m_+}
\newcommand{\lowm}{m_-}
\newcommand{\upn}{\mu_1}
\newcommand{\lown}{\mu_2}
\newcommand{\xmarg}[1][]{\phi_{#1}}
\newcommand{\ixmarg}{l_A}
\newcommand{\xq}{q_A(t^{\beta})}
\newcommand{\iymarg}{l_B}
\newcommand{\yq}{q_B(t^{\gamma})}
\newcommand{\xyj}[1][]{\theta_{\beta,\gamma}{#1}}
\newcommand{\jsvf}{g_v}
\newcommand{\cm}{C_\delta}
\newcommand{\esupp}{\Lambda}
\newcommand{\intg}[1]{\mathcal{I}_{#1}}
\newcommand{\Isvi}{\phi}
\newcommand{\Isvo}{u}
\newcommand{\Iind}{\beta}
\newcommand{\Lrvi}{u}
\newcommand{\Lrvo}[1][]{\phi_{#1}}
\newcommand{\Lcrvo}[1][]{\psi_{#1}} 
\newcommand{\Lind}{a}
\newcommand{\LimA}{\alpha}
\newcommand{\Lintv}[1][]{I_{#1}} 
\newcommand{\Lii}{\rho}

    \newcommand\independent{\protect\mathpalette{\protect\independenT}{\perp}}
    \def\independenT#1#2{\mathrel{\rlap{$#1#2$}\mkern2mu{#1#2}}}


\newtheorem{lem}{Lemma}
\newtheorem{prop}{Proposition}
\theoremstyle{definition}
\newtheorem{Example}{Example}
\theoremstyle{definition}
\newtheorem{Case}{Case}
\theoremstyle{definition}
\newtheorem{Remark}{Remark}
\theoremstyle{definition}
\newtheorem{Cond}{Condition}
\newtheorem{As}{Assumption}

\numberwithin{equation}{section}

\begin{document}
\maketitle

\begin{abstract}
Different dependence scenarios can arise in multivariate extremes, entailing careful selection of an appropriate class of models. In bivariate extremes, the variables are either asymptotically dependent or are asymptotically independent. Most available statistical models suit one or other of these cases, but not both, resulting in a stage in the inference that is unaccounted for, but can substantially impact subsequent extrapolation. Existing modelling solutions to this problem are either applicable only on sub-domains, or appeal to multiple limit theories. We introduce a unified representation for bivariate extremes that encompasses a wide variety of dependence scenarios, and applies when at least one variable is large. Our representation motivates a parametric model that encompasses both dependence classes. We implement a simple version of this model, and show that it performs well in a range of settings.
\end{abstract}

\textbf{Keywords:} asymptotic independence, censored likelihood, conditional extremes, dependence modelling, extreme value theory, multivariate regular variation.

\section{Introduction}
\label{sec:Introduction}

The first challenge faced when modelling extremes of two or more variables is to decide which type of dependence they exhibit.  There are two possibilities in the bivariate case. For a random vector $(Z_1,Z_2)$, with marginal distributions $F_1,F_2$, define the limiting probability
\begin{align}
\chi = \lim_{u\to 1}\Prob\{F_1(Z_1)>u~|~F_2(Z_2)>u\},  \label{eq:chi}
\end{align}
if it exists. The pair $(Z_1,Z_2)$ are termed \emph{asymptotically dependent} if $\chi>0$, and \emph{asymptotically independent} if $\chi=0$. In higher dimensions the situation becomes more complicated;  \citet{WadsworthTawn13} outline the idea of $k$-dimensional joint tail dependence, which is summarized by $\sum_{i=0}^{k-2} {k\choose i}$ limits such as~\eqref{eq:chi}. For this reason, we focus on bivariate data, but discuss higher dimensional cases in Section~\ref{sec:Discussion}.

It is important to detect the appropriate dependence class because most models for bivariate extremes encompass one or the other, but not both. Classical multivariate extreme value theory \citep[e.g.,][Chapter 5]{Resnick87} yields asymptotic dependence models \citep{ColesTawn91,deHaandeRonde98}. Its first stage is usually to transform variables to a common marginal distribution. Suppose that $(X_P,Y_P) = [\{1-F_1(Z_1)\}^{-1},\{1-F_2(Z_2)\}^{-1}]$ have marginal standard Pareto distributions (interpreted asymptotically, if $F_1,F_2$ are discontinuous). In the asymptotic dependence case the basic modelling principle is that for an arbitrary pair of norms $\|\cdot\|_a$ and $\|\cdot\|_b$, the pseudo angular and radial variables 
\begin{align}
\bm{W} = (X_P,Y_P) / \|(X_P,Y_P)\|_a,\ R = \|(X_P,Y_P)\|_b, \label{eq:RW}
\end{align}
become independent in the limit, in the sense that
\begin{align}
\lim_{t\to\infty} \Prob\{\bm{W} \in B, R>t(r+1) ~|~ R>t\} = H(B)(r+1)^{-1},\ r\geq 0,~ B\subset\mathcal{S}^a:=\{\bm{w}\in\mathbb{R}^2_+ : \|\bm{w}\|_a=1\}, \label{eq:RWAD}
\end{align}
for continuity sets of the limit measure $H$. The limit holds for both dependence classes, but is only useful under asymptotic dependence: under any form of asymptotic independence, $H(\cdot)$ is a discrete two-point distribution that places atoms of probability on the endpoints of the continuous arc $\mathcal{S}^a$, $(0,1)/\|(0,1)\|_a$, $(1,0)/\|(1,0)\|_a$. Since $\|\cdot\|_a$ is arbitrary, we henceforth use the $L_1$-norm, $\|\cdot\|_1$, and redefine $H$ to be the limiting distribution of $W=X_P/(X_P+Y_P)$, with $H(w)=H([0,w])$ $(0\leq w\leq 1)$. Under asymptotic dependence, $H$ has mass on the interior of $[0,1]$ and likelihood-based statistical modelling typically assumes the existence of a \emph{spectral density},\ $h(w)=\dsp H(w)/\dsp w$ \citep{ColesTawn91}. One common goal of multivariate extreme value modelling is to estimate probabilities such as $\Prob\{(Z_1,Z_2)\in A\}$, where the set $A$ is extreme in at least one margin. Under asymptotic dependence, this  is aided by inference on $h$, and the independent limit distribution of the scaling appearing in~\eqref{eq:RWAD}.

The degeneracy of $H$ under asymptotic independence occurs because~\eqref{eq:chi} implies that the very largest values of $Z_1$ or $Z_2$, and hence of $X_P$ or $Y_P$, occur singly, pushing all the mass of $W$ to the boundaries of the interval $[0,1]$. This is due to the heavy tails of Pareto random variables: since the high quantiles on the Pareto scale are very large, one of $X_P$ and $Y_P$ will dominate the other when $R$ is extreme.

This argument suggests that the choice of margins is central to simplifying extremal dependence modelling. Thus, rather than~\eqref{eq:RWAD}, we assume that there exist a common marginal distribution $F:(0,x^F)\to[0,1]$, where $x^F \leq \infty$ is the upper endpoint of the support, a norm $\|\cdot\|_*$, and normalization functions $a(t)>0$ and $b(t)$, such that the positive random variables $(X,Y) = [F^{-1}\{F_1(Z_1)\},F^{-1}\{F_2(Z_2)\}]$ satisfy
\begin{align}
\lim_{t\to\infty} \Prob\left\{ \left. \frac{X}{X+Y}\leq w, \|(X,Y)\|_*> a(t)r + b(t) \right|  \|(X,Y)\|_*>  b(t) \right\} = J(w)\bar{K}(r),\ r\geq 0, \label{eq:RWgen}
\end{align}
at continuity points of $J$, where $J$ is a non-degenerate probability distribution having mass on the interior of $[0,1]$, and $\bar{K}$ is the survivor function of the generalized Pareto, GP$(\sigma,\lambda)$, distribution. That is, 
\begin{align}
\bar{K}(r) = (1+\lambda r/\sigma)_+^{-1/\lambda},\ r\geq0, ~\sigma>0,~ \lambda\in\mathbb{R},~a_+=\max(a,0); \label{eq:gp}
\end{align}
the case $\lambda=0$ is interpreted as the limit $\bar{K}(r)=\exp(-r/\sigma)$. In~\eqref{eq:RWgen}, $a(t)$ and $b(t)$ are the same as in the theory for univariate extremes for the variable $\|(X,Y)\|_*$; see Chapter~1 of \citet{Leadbetter83}, for example. When $(Z_1, Z_2)$ are asymptotically dependent and $F(\cdot) = 1-(\cdot)^{-1}$, so that $(X,Y)$ have standard Pareto margins, then \eqref{eq:RWgen} is equivalent to~\eqref{eq:RWAD}, with $a(t)=b(t)=t$ and $\bar{K}(r) = (1+r)^{-1}$; thus $\sigma=\lambda=1$, and the distribution $J$ in \eqref{eq:RWgen} equals $H$ as defined following~\eqref{eq:RWAD}. When $(Z_1,Z_2)$ are asymptotically independent, then a marginal $F$ with a lighter tail is required to obtain a distribution $J$ placing mass in $(0,1)$. The extremal dependence is then described by the combination of $J$, $\|\cdot\|_*$ and $\lambda$. Section~\ref{sec:Motivation} contains further discussion of the meaning and interpretation of~\eqref{eq:RWgen}, and motivates it with a variety of examples.

Under asymptotic dependence, the norms used in transformation~\eqref{eq:RW} to $\bm{W}$ and $R$ are arbitrary and need not be the same. In~\eqref{eq:RWgen}, we have again defined a pseudo angular and radial transformation
\begin{align}
W = X /(X+Y),\ R = \|(X,Y)\|_*, \label{eq:RW2}
\end{align}
where for later simplicity we use the $L_1$-norm in the definition of $W$, but the norm $\|\cdot\|_*$ defining $R$ must be chosen so that the limit~\eqref{eq:RWgen}  holds. The inverse of~\eqref{eq:RW2} is
\begin{align}
(X,Y) = R \left(\frac{W}{\|(W,1-W)\|_*},\frac{1-W}{\|(W,1-W)\|_*}\right). \label{eq:XYRW}
\end{align}
When assumption~\eqref{eq:RWgen} holds, we see from~\eqref{eq:XYRW} that for large $R$ the variables $(X,Y)$ behave as if the angular component $(W/\|(W,1-W)\|_*, (1-W)/\|(W,1-W)\|_*)$ is randomly scaled by an independent generalized Pareto variable. However, it is not straightforward to exploit this statistically, because the flexibility in~\eqref{eq:RWgen} stems from not having specified the margins $F$ in which we make the pseudo radial-angular transformation. Nonetheless, the dependence structure defined by~\eqref{eq:XYRW} must describe a rich variety of extremal dependencies, and motivates a copula model, described in Section~\ref{sec:Model}, that we can apply to both asymptotically dependent and asymptotically independent data. This model can indeed capture many extremal dependence structures, reproducing the entire ranges of common summary statistics for extremal dependence in both dependence classes.

In Section~\ref{sec:LinkExisting} we review current statistical methods for bivariate extremes, focussing on those providing a non-trivial treatment of asymptotic independence. In Section~\ref{sec:Motivation} we present examples to illustrate assumption~\eqref{eq:RWgen}, and discuss further the interpretation of the limit assumption. Section~\ref{sec:Model} introduces a statistical model and describes its dependence properties. Inference approaches are developed in Section~\ref{sec:Inference}, with some simulations to assess how well a given version of the model can estimate rare event probabilities, and in Section~\ref{sec:Applications} we apply our model to oceanographic data previously analyzed using both dependence structures. We conclude the article by outlining extensions to higher dimensions and discussing related issues.

\section{Existing methodology incorporating asymptotic independence}
\label{sec:LinkExisting}

Many inferential approaches for extremal dependence assume the applicability of equation~\eqref{eq:RWAD} with asymptotic dependence; see for example \citet{ColesTawn91}, \citet{Einmahl97}, \citet{deHaandeRonde98}, \citet{Mikosh05} and \citet{SabourinNaveau14}. \citet{LedfordTawn97} noted a gap in the theory for practical treatment of asymptotic independence and introduced the \emph{coefficient of tail dependence}, $\eta\in(0,1]$. For $(X_P,Y_P)$ as defined in Section~\ref{sec:Introduction}, this coefficient may be defined through the equation
\begin{align}
\Prob(X_P>tx,Y_P>ty) =\mathcal{L}(tx,ty)t^{-1/\eta}(xy)^{-1/2\eta},\ tx,ty \geq1, \label{eq:eta}
\end{align}
where $\mathcal{L}$ is bivariate slowly varying at infinity, i.e., $\mathcal{L}(tx,ty)/\mathcal{L}(t,t)\to d\{x/(x+y)\}$, $t\to\infty$, with $d:(0,1)\to (0,\infty)$ termed the \emph{ray dependence function}, depending only on the \emph{ray} $q:=x/(x+y)$. When $\eta=1$ and $\mathcal{L}(t,t)\not\to 0$ as $t\to\infty$ we obtain asymptotic dependence, but otherwise there is asymptotic independence. 

Setting $x=y=1$ in~\eqref{eq:eta} gives $\Prob(X_P>t,Y_P>t) = \mathcal{L}(t,t)t^{-1/\eta}$. Under asymptotic dependence, $\eta=1$ and the dependence is summarized by the parameter $\chi=\lim_{t\to\infty}\mathcal{L}(t,t)>0$. Under asymptotic independence, $\chi=0$ and $\eta \leq 1$ summarizes the degree of dependence.

The parameters $\chi$ and $\eta$ do not explain all the features of the extremal dependence of $(Z_1,Z_2)$. Under asymptotic dependence, the function $d(q)$ prescribes how to scale $(xy)^{-1/2}$ in order to find joint survivor probabilities across different rays, $q \in [0,1]$ in Pareto margins. When $\chi>0$, the link between $d$ and $H$, as defined following equation~\eqref{eq:RWAD}, is 
\begin{align}
d(q) = \frac{2}{\chi}\int_{0}^1\min\left\{w\left(\frac{1-q}{q}\right)^{1/2}, (1-w)\left(\frac{q}{1-q}\right)^{1/2}\right\} \dsp H(w).\label{eq:dH}
\end{align}
By definition, $d(1/2)=1$, so $\chi=2 \int_{0}^{1} \min(w,1-w) \dsp H(w)$. \citet{RamosLedford09} offered a characterization of the function $d(q)$ when $\eta\neq 1$, beginning with the limit assumption
\begin{align}
\lim_{t\to\infty}\Prob(X_P > tx, Y_P> ty~|~X_P > t, Y_P> t) = d\{x/(x+y)\}(xy)^{-1/2\eta},~~~x,y \geq 1. \label{eq:RL}
\end{align}
In this case we may write 
\begin{align}
 d(q) = \eta \int_{0}^{1} \min\left\{w\left(\frac{1-q}{q}\right)^{1/2}, (1-w)\left(\frac{q}{1-q}\right)^{1/2}\right\}^{1/\eta} \dsp H_\eta(w),
 \label{eq:dHeta}
\end{align}
where $H_\eta$ is the \emph{hidden angular measure}, characterized in \citet{RamosLedford09}; see also \citet{Resnick02, Resnick06} and \citet{DasResnick14} for further details of this framework of \emph{hidden regular variation}. Suitable parametric models for $H_\eta$ give probability models for simultaneously extreme random variables on regions of the form $(X_P,Y_P)\in (v,\infty)^2$ for large $v$; see \citet{RamosLedford09} for examples. 

Unfortunately the Ramos--Ledford--Tawn approach is applicable only within regions where both variables are large. However, under asymptotic independence, the variables $(X_P,Y_P)$ do not grow in their joint extremes at the same rate as their marginal extremes, so these may not be the regions of most practical interest. \citet{WadsworthTawn13} provided an alternative representation for multivariate tail probabilities, allowing study of regions where one variable may be larger than the other. Their assumption was
\begin{align}
\Prob(X_P > t^\beta, Y_P> t^\gamma) = L(t;\beta,\gamma) t^{-\kappa(\beta,\gamma)}, ~~~\beta,\gamma\geq 0, \max(\beta,\gamma)>0, \label{eq:WT}
\end{align}
where the function $\kappa$ is homogeneous of order 1, and the function $L(\cdot;\beta,\gamma)$ is slowly varying at infinity, i.e., for all $a>0$, $\lim_{t\to\infty} L(t a;\beta,\gamma)/L(t;\beta,\gamma) = 1$. Under asymptotic independence $\kappa$ was shown to display structure similar to that provided by $d$ under asymptotic dependence. Representation~\eqref{eq:WT} is useful for estimation of joint survivor probabilities when one variable may be much larger than the other, although the inferential methodology of \citet{WadsworthTawn13} does not easily extend to regions more general than joint survivor regions. Example~\ref{eg:kappa} in Section~\ref{sec:Motivation} covers some special cases of this set-up.

\citet{HeffernanTawn04} developed a very general modelling assumption that we present in the adapted form of \citet{HeffernanResnick07}. For $(X_E,Y_E) = (-\log\{1-F_1(Z_1)\},-\log\{1-F_2(Z_2)\})$ with (asymptotically) standard exponential marginal distributions, they assume the existence of a non-degenerate $G$~in
\begin{align}
\lim_{t\to\infty} \Prob\left\{\left.\frac{X_E-b(Y_E)}{a(Y_E)}\leq x, Y_{E}>t+y ~\right |~ Y_E>t\right\} = G(x)e^{-y},~~ y\geq 0.\label{eq:HT}
\end{align}
Inference under~\eqref{eq:HT} is semiparametric, as the functions $a(Y_E)$ and $ b(Y_E)$ are typically chosen to be $Y_E^{\alpha}, \beta Y_E$, $\alpha\in(-\infty,1)$, $\beta\in[0,1]$, for non-negative dependence, and $G$ is estimated nonparametrically. Asymptotic dependence arises in the model only when $\alpha=0$, $\beta=1$, and then any structure is captured through $G$. Once more the limiting independence of the normalized $Y_E$ and $\{X_E-b(Y_E)\} / a(Y_E)$ is crucial to the inference. This method is a very flexible approaches to multivariate extreme value modelling, though we address some of its drawbacks with the representation~\eqref{eq:RWgen} and the associated model to be developed in Section~\ref{sec:Model}. One problem is that when conditioning on different variables, consistency of the resulting models is an unresolved issue \citep{LiuTawn14}. The need for nonparametric estimation of $G$ may be viewed as a strength or weakness, but can lead to difficulties in estimating non-zero probabilities \citep{PengQi04,WadsworthTawn13}. 

Like the methods described above, the new approach described in Section~\ref{sec:Model} is suitable for both asymptotically dependent and asymptotically independent data. However, it is motivated by a single limit representation, and may be applied when either variable is large. Moreover, our framework allows a smooth transition across the dependence class boundary, in a sense to be described in Section~\ref{sec:Transition}.

\section{Limit Assumption}
\label{sec:Motivation}

In Section~\ref{sec:Equivalent} we provide a condition that is equivalent to~\eqref{eq:RWgen} under additional smoothness assumptions. This condition is useful to illustrate applicability of~\eqref{eq:RWgen} when these extra assumptions are met. In Section~\ref{sec:Uniqueness} we discuss flexibility in how the limit may be exploited, and then discuss the interpretation of the limit assumption. 
A variety of examples are presented in Section~\ref{sec:RandomScaling}.

\subsection{Alternative Condition}
\label{sec:Equivalent}

Suppose that $(X,Y) = [F^{-1}\{F_1(Z_1)\},F^{-1}\{F_2(Z_2)\}]$ are continuous random variables with a joint density, so this is also true for $(R,W)$, as defined in~\eqref{eq:RW2}. This assumption is more restrictive than necessary, but it facilitates development and is often reasonable. Let  $c(u_1,u_2)$ denote the density of the \emph{copula}, i.e., the density of $\{F(X),F(Y)\} = \{F_1(Z_1),F_2(Z_2)\}$. Then, with $f$ denoting the density of $F$, the joint density of $(X,Y)$ is $f_{X,Y}(x,y) = c\{F(x),F(y)\}f(x)f(y)$. The Jacobian of the transformation from $(X,Y)$ to $(R,W)$ as defined in~\eqref{eq:RW2} is $r\|(w,1-w)\|_*^{-2}$, and the density $f_{R,W}(r,w)$ of $(R,W)$ equals
\begin{align}
  c\left[F\left\{\frac{rw}{\|(w,1-w)\|_*}\right\},F\left\{\frac{r(1-w)}{\|(w,1-w)\|_*}\right\}\right]f\left\{\frac{rw}{\|(w,1-w)\|_*}\right\}f\left\{\frac{r(1-w)}{\|(w,1-w)\|_*}\right\} \frac{r}{\|(w,1-w)\|_*^2}. \label{eq:rwcopula}
\end{align}

To demonstrate applicability of~\eqref{eq:RWgen}, we use the following simpler condition, which is valid when the relevant densities and limits exist. In Appendix~\ref{sec:AuxiliaryResults} we show that under mild assumptions~\eqref{eq:RWgen} is implied by
 \begin{align}
\lim_{t\to\infty} \Prob\{W\leq w \mid R = b(t)\} = J(w), \label{eq:dfcondn}
\end{align}
with $b(t) = F_{R}^{-1}(1-1/t)$, the $1-1/t$ quantile of $R$; or, terms of the joint density function $f_{R,W}(r,w)$, 
\begin{align}
\int_{0}^{w} f_{R,W}\{b(t),v\} \,\mbox{d}v \sim J(w)\int_{0}^{1} f_{R,W}\{b(t),v\} \,\mbox{d}v,~~ t\to\infty. \label{eq:denscondn}
\end{align}
Thus, when integration over the $W$ coordinate does not affect the rate at which the joint density decays in $r$ as $r\to r^F := \sup\{r:F_R(r)<1\}$, then condition~\eqref{eq:dfcondn}, and hence~\eqref{eq:RWgen}, is satisfied. Expression~\eqref{eq:rwcopula} shows how the transformed margins, defined by $F,f$, and the copula, $c$, interact for~\eqref{eq:denscondn} to apply.

In order to study the domain of attraction of the radial variable $R$, we assume differentiability of its density $f_{R}(r)$, and define the reciprocal hazard function $h_R(r):=\{1-F_{R}(r)\}/f_R(r)$. If $\lim_{r\to\infty}h_R'(r) =:\lambda \in (-\infty,\infty)$ then $R$ lies in the domain of attraction of the GP distribution with shape parameter $\lambda$ \citep{Pickands86}. Moreover if one takes $b(t) = F_{R}^{-1}(1-1/t)$, and $a(t)=h_R\{b(t)\}$, then $\sigma=1$ in~\eqref{eq:gp}, i.e.
\begin{align*}
\lim_{t\to\infty} \frac{1-F_R\{a(t)r + b(t)\}}{1-F_{R}\{b(t)\}} = \bar{K}(r) = (1+\lambda r)_+^{-1/\lambda}, ~~ r\geq 0.
\end{align*}

\subsection{Uniqueness of limits}
\label{sec:Uniqueness}
In general, for a given copula, no unique choice of marginal distribution $F$  leads to assumption~\eqref{eq:RWgen} being satisfied. Consider, for example, the independence copula, with $c(u_1,u_2)=1, (u_1,u_2)\in[0,1]^2$. The following cases are all covered by~\eqref{eq:RWgen}:

\begin{enumerate}[(i)]
 \item gamma margins, with shape parameter $\alpha>0$.  Then $R = \|(X_G,Y_G)\|_* = X_G+Y_G$, has a GP$(1,0)$ limit. The limiting distribution for $W$ is Beta$(\alpha,\alpha)$;
 \item Weibull margins, with shape parameter $\alpha>1$. Then $R = \|(X_W,Y_W)\|_* = (X_W^\alpha+Y_W^\alpha)^{1/\alpha}$, has a GP$(1,0)$ limit. The limiting distribution for $W$ has density $j(w) \propto w^{\alpha-1}(1-w)^{\alpha-1}\{w^{\alpha}+(1-w)^{\alpha}\}^{-2}$;
 \item uniform$(0,1)$ margins. Then $R=\|(X_U,Y_U)\|_* = \max(X_U,Y_U)$, has a GP$(1,-1)$ limit. The limiting distribution for $W$ has density $j(w) \propto \max(w,1-w)^{-2}$; 
 \item truncated Gaussian margins. Then $R=\|(X_N,Y_N)\|_* = (X_N^2+Y_N^2)^{1/2}$, has a GP$(1,0)$ limit. The limiting distribution for $W$ has density $j(w) \propto \{w^2 + (1-w)^2\}^{-1}$.
\end{enumerate}
The corresponding marginal densities may all be expressed as $f(x) = x^\beta e^{-x^\gamma} \gamma/\Gamma\{(\beta+1)/\gamma\}$, with (i) $\beta=\alpha-1,\gamma=1$; (ii) $\beta=\alpha-1, \gamma=\alpha$; (iii) $\beta=0, \gamma \to \infty$; and (iv) $\beta=0,\gamma=2$. In each case the norm $\|\cdot\|_*$ is the $L_\gamma$ norm, and the resulting density for $W$ satisfies 
$$
j(w)\propto w^{\beta}(1-w)^\beta/\{w^\gamma+(1-w)^\gamma\}^{(2\beta+2)/\gamma} =  w^{\beta}(1-w)^\beta/\|(w,1-w)\|_*^{2\beta+2},\quad 0<w<1, 
$$
 demonstrating a link between the margins of $(X,Y)$, the norm $\|\cdot\|_*$, and the distribution $J(w)$.

This lack of uniqueness also applies to multivariate regularly varying random vectors with asymptotically dependent copulas: equal heavy-tailed margins with any positive shape parameter will give a convergence as in~\eqref{eq:RWgen}, and the resulting distribution of $W$ will depend on this shape parameter and the norm used to define $R$; see Example~\ref{eg:mvrv} of Section~\ref{sec:RandomScaling}. Hence in considering how the distribution $J$ describes the extremal dependence, one must simultaneously consider $\lambda$, $\|\cdot\|_*$ and $J$. In convergence~\eqref{eq:RWAD}, by contrast, the effect of the margins is removed by standardization, and the extremal dependence depends only on $H$ and the norm used to define $R$.

The necessity of considering $\lambda$, $\|\cdot\|_*$ and $J$ together can be more clearly seen by observing what convergence~\eqref{eq:RWgen} implies for that of the normalized $(X,Y)$. Multiplying $\{\|(X,Y)\|_*-b(t)\}/a(t)$ by $(X,Y)/\|(X,Y)\|_*$, and conditioning on the event ${\mathcal B}=\{\|(X,Y)\|_* > b(t)\}$, the continuous mapping theorem gives that on ${\mathcal B}$, 
\begin{align}
\frac{(X,Y) }{a(t)} - \frac{b(t)}{a(t)}\frac{(X,Y)}{\|(X,Y)\|_*}  ~ \overset{d}{\to} ~R^*(W_1,W_2),\quad  t\to\infty, \label{eq:xylim}
\end{align}
with $W_1=W/\|(W,1-W)\|_*$, $W_2=(1-W)/\|(W,1-W)\|_*$. Here $\overset{d}{\to}$ denotes convergence in distribution, and $R^*\sim$ GP$(1,\lambda)$ is a random variable with survivor function $\bar{K}$. Equations~\eqref{eq:RWgen}, \eqref{eq:XYRW} and~\eqref{eq:xylim} suggest that for large $t$ we have the approximate distributional equality on ${\mathcal B}$, 
\begin{align}
(X,Y) \overset{d}{\approx} \{a(t)R^* +b(t)\} (W_1,W_2).~\label{eq:modapprox}
\end{align}
 Therefore the extremes of   $(X,Y)$ are described by the combination of the shape parameter $\lambda$, the norm $\|\cdot\|_*$ defining the sphere $\mathcal{S}^*$ on which $(W_1,W_2)$ live, and the distribution $J$ giving the density of $W$ on $[0,1]$.

\subsection{Examples}
\label{sec:RandomScaling}
We present three broad classes of examples, assuming throughout that derivatives of second order terms are also second order.

\begin{Example}
\label{eg:mvrv}
Suppose that $(X,Y)$ have $\alpha$-Pareto margins, $\Prob(X>x)=x^{-\alpha}, x>1$, and that $\Prob(X>tx,Y>ty)$ is a differentiable bivariate regularly varying function of index $-\alpha$ as $t\to\infty$. Then one can write
\begin{align*}
\Prob(X>tx,Y>ty) = \{1+o(1)\}\delta^{(\alpha)}(tx,ty) =\{\chi +o(1)\} d^{(\alpha)}\{x/(x+y)\}(xy)^{-\alpha/2} t^{-\alpha}, ~~t\to\infty,\quad  tx,ty>1,
\end{align*}
with $\chi>0$ as in~\eqref{eq:chi}, $\delta^{(\alpha)}$ a homogeneous function of order $-\alpha$, and $d^{(\alpha)}$ the associated ray dependence function, discussed in Section~\ref{sec:LinkExisting}. Such examples are asymptotically dependent. Then taking $\|\cdot\|_*=\|\cdot\|$, an arbitrary  norm, yields
\begin{align*}
f_{R,W}(r,w) = \{1+o(1)\}r^{-1-\alpha}\delta_{12}^{(\alpha)}\left\{\frac{w}{\|(w,1-w)\|},\frac{1-w}{\|(w,1-w)\|}\right\}\frac{1}{\|(w,1-w)\|^2},\quad r\to\infty;
\end{align*}
here $\delta_{12}^{(\alpha)}$, the joint derivative of $\delta^{(\alpha)}$, is homogeneous of order $-\alpha-2$. The reciprocal hazard function of $R$ satisfies $h_{R}(r)=r\{1/\alpha+o(1)\}$ ($r\to\infty$), so the limiting distribution of normalized exceedances of $R$ is generalized Pareto with $\lambda=1/\alpha$. The limiting density of $W$ is 
\begin{align*}
j(w) \propto \delta_{12}^{(\alpha)}\left(w,1-w\right)\|(w,1-w)\|^\alpha,\quad w\in(0,1).
\end{align*}
\end{Example}

\begin{Example}
\label{eg:kappa}
Suppose that  $(X,Y)$ have standard exponential margins, and that for a constant $C>0$,
\begin{align*}
\Prob(X>tx,Y>ty) = \{C+o(1)\}\exp\{-\kappa(x,y)t\}, \quad  t\to\infty, \quad x,y>0,
\end{align*}
where $\kappa:(0,\infty)^2 \to (0,\infty)$ is a differentiable positive homogeneous function that defines a norm. This special case of the set-up of \citet{WadsworthTawn13} is satisfied by the Morgenstern, inverted extreme value, Ali--Mikhail--Haq, and Pareto copulas, amongst others; see \citet{Heffernan00} for a summary of their extremal dependence properties. All such examples are asymptotically independent, with $\eta=1/\kappa(1,1)$. Let $\kappa_i$ denote the partial derivative of $\kappa$ with respect to its $i$th argument, and similarly let $\kappa_{12}$ denote the joint derivative. Taking $\|(x,y)\|_* = \kappa(x,y)$ gives
\begin{align*}
f_{R,W}(r,w) = \{C+o(1)\}\exp(-r)\left\{\frac{\kappa_1(w,1-w)\kappa_2(w,1-w)}{\kappa(w,1-w)^2}r - \frac{\kappa_{12}(w,1-w)}{\kappa(w,1-w)}\right\}, ~~r\to\infty,
\end{align*}
which satisfies condition~\eqref{eq:denscondn}. Furthermore, since the reciprocal hazard function $h_R(r) = 1 +o(1)$ as $r\to\infty$, $\lambda=0$:  normalized exceedances of $R$ have a  limiting exponential distribution.  The limiting density of $W$ as $r\to\infty$ is 
\begin{align*}
j(w) \propto \frac{\kappa_1(w,1-w)\kappa_2(w,1-w)}{\kappa(w,1-w)^2} ,~~w\in(0,1).
\end{align*}
\end{Example}

\begin{Example}
\label{eg:elliptical}
Let $(X,Y)$ be elliptically distributed, truncated to the positive quadrant, so one can write 
\begin{align*}
(X,Y) = Q \Sigma^{1/2} (U_1,U_2),
\end{align*}
with $\Sigma^{1/2}$ the Cholesky factor of a positive-definite matrix, $(U_1,U_2)$ lying on the part of the unit circle such that $\Sigma^{1/2}(U_1,U_2)$ lies in the positive quadrant, and $Q$ a random variable known as the generator. Then the norm $\|(x,y)\|_*=\{(x,y)\Sigma^{-1}(x,y)^T\}^{1/2}$ returns the variable $Q$, i.e., $R=Q$. Thus we have exact independence of $R$ and $W$, and the density of $W$ is 
\begin{align*}
j(w) \propto \|(w,1-w)\|_*^{-2} = (1-\rho^2)\{w^2-2\rho w(1-w) + (1-w)^2\}^{-1},\quad w\in(0,1).
\end{align*}
The exact form of the limiting distribution for exceedances of $R$ depends on $Q$: \citet{Abdous05} consider extremes of elliptical distributions and provide details on the domain of attraction of the generator. The variables $X$ and $Y$ are asymptotically dependent if and only if $Q$ has regularly varying tails \citep{HultLindskog02}. This links precisely to the asymptotic dependence features described in Section~\ref{sec:ExtremalDependence}. As highlighted by Example~\ref{eg:mvrv}, the norm $\|\cdot\|_*$ may be chosen arbitrarily if $Q$ has a heavy tail, though an advantage of the norm $\|(x,y)\|_*$ is that independence is exact, rather than asymptotic, in the sense of equation~\eqref{eq:RWgen}. The Gaussian is the best-known elliptical distribution; its extremes are asymptotically independent, with $R$ having the Weibull density $f_R(r) = re^{-r^2/2}$ $(r>0)$; thus $\lambda=0$.
\end{Example}

 Like elliptical copulas, Archimedean survival copulas have a radial-angular decomposition, with the pseudo-angles being uniformly distributed on $[0,1]$ \citep{McNeilNeslehova09}. Thus~\eqref{eq:RWgen} is satisfied whenever the radial variable falls into the domain of attraction of a generalized Pareto distribution.

\subsection{Application of~\eqref{eq:RWgen}}
\label{sec:ApplicationOfLimit}
In order to apply~\eqref{eq:RWgen} directly, one must know the (class of) margins $F$, and the (class of) norm $\|\cdot\|_*$, to which it applies. The basis of statistical procedures assuming asymptotic dependence is that any choice of heavy-tailed margins and norm will lead to a limit, and so that choice is arbitrary. If asymptotic dependence cannot be assumed, then the correct class of marginal distributions and the correct norm must be chosen, and this makes direct exploitation of~\eqref{eq:RWgen} challenging. One might choose among marginal classes based on some measure of fit, but this would not account for uncertainty in the dependence class. For this reason we aim to construct a model having the essential features of~\eqref{eq:modapprox}.

\section{Model}
\label{sec:Model}
\subsection{Introduction}
\label{sec:ModelIntro}

We use the observations of Section~\ref{sec:Motivation}, and in particular equation~\eqref{eq:modapprox}, to motivate a model that can capture both asymptotic dependence and asymptotic independence. Consider the dependence structure of
\begin{align}
\label{eq:mod1}
\begin{split}
 &(A,B) = S (V_1, V_2),\\
(V_1,V_2) = (V,1-V)/\|(V,1-V)\|_m \in~& \mathcal{S}^m=\{\bm{v}\in \mathbb{R}^2_+ : \|\bm{v}\|_m = 1\},\quad V\sim F_V~\independent~ S\sim \mbox{GP}(1, \lambda),
\end{split}
 \end{align}
where $F_V$ is a distribution defined on $[0,1]$. The norm $\|\cdot\|_m$ and distribution $F_V$ are modelling choices; $\lambda$ and any parameters of $F_V$ are to be inferred. Model~\eqref{eq:mod1} reflects the structure of~\eqref{eq:modapprox}, which provides an asymptotic representation of the extremes of a wide variety of dependence structures. As we show in Section~\ref{sec:ExtremalDependence}, the dependence structure of \eqref{eq:mod1} is broad enough to capture both types of extremal dependence structures. Although~\eqref{eq:mod1} is motivated by~\eqref{eq:modapprox}, we adopt different notation in order to emphasize that the former is a modelling approach rather than than an assumption on the underlying random vector. 

 Model~\eqref{eq:mod1} has parameters that are common to the margins and dependence structure, but we are interested only in exploiting its copula,
\begin{align}
  C(u_1,u_2) = F_{A,B}\{F_A^{-1}(u_1),F_B^{-1}(u_2)\}, \label{eq:copula}
\end{align}
where, $F_{A,B}$, $F_A$, and $F_B$ are the joint and marginal distribution functions of~\eqref{eq:mod1}. We refer to $F_A,F_B$ as \emph{pseudo-marginals} throughout, as they are unrelated to the true marginals of the observable random vector, reflecting only those in which the factorization~\eqref{eq:mod1} holds best for the extremes. 

Representation~\eqref{eq:modapprox} holds when a suitable pseudo-radial variable is large. By analogy, it is reasonable to assume that~\eqref{eq:mod1} holds only when some norm of the variables is large. This will be implemented in our inference strategy, explained in Section~\ref{sec:Inference}. Thus, if the observed vector $(Z_1,Z_2)$ has joint distribution function $F_{1,2}$, then we suppose for all sufficiently extreme observations that $F_{1,2}(z_1,z_2) \approx C\{F_1(z_1),F_2(z_2)\}$, with $C$ as in~\eqref{eq:copula}. Finally note that the fact that $A$ and  $B$ may have different margins is not incompatible with the spirit of~\eqref{eq:modapprox}, as the margins therein are those of $(X,Y)$ given that $ \|(X,Y)\|_*>0$, which may be unequal if the dependence structure is asymmetric.

\subsection{Extremal dependence properties}
\label{sec:ExtremalDependence}
We detail the extremal dependence properties of the model~\eqref{eq:mod1} under some mild restrictions on the types of norm considered and the support of $V$. Proofs of all propositions may be found in  Appendix~\ref{sec:AuxiliaryResults}. The following conditions on $\|\cdot\|_m$ are imposed throughout this section.

\begin{Cond}[Symmetry]
\label{Cond1}
$\|(x,y)\|_m = \|(y,x)\|_m$.
\end{Cond}

\begin{Cond}[Boundary]
\label{Cond2}
$\|(x,y)\|_m \geq \|(x,y)\|_\infty$.
\end{Cond}

\begin{Cond}[Equality with $L_\infty$]
\label{Cond3}
$\|(x_0,y_0)\|_m=\|(x_0,y_0)\|_\infty$ for some $x_0\neq y_0$.
\end{Cond}

These conditions specify ranges for the marginal projections $V_1=V/\|(V,1-V)\|_m$, and $V_2=(1-V)/\|(V,1-V)\|_m$ to be $[0,1]$. In particular the mapping $T:[0,1]\to[0,1]$ given by $T(v)=v/\|(v,1-v)\|_m$ is surjective. Condition~\ref{Cond3} imposes that if equality with $\|\cdot\|_\infty$ occurs at $(1,1)$, then since we must also have equality somewhere off the diagonal, the norm must behave locally like $\|\cdot\|_\infty$ around $(1,1)$, by convexity. This specifically rules out cases such as $\|(x,y)\|_m = \max\{ax+(1-a)y,ay+(1-a)x\}$, $a>1$, for which $\|(1,1)\|_m=1$, but which does not behave locally like the $L_\infty$ norm; these can induce dependence properties different from those claimed under Condition~\ref{Cond3}.

We focus on the dependence measures $\chi$ (equation~\eqref{eq:chi}) and $\eta$ (equation~\eqref{eq:eta}) and the function $\kappa$ (equation~\eqref{eq:WT}). These were defined following a transformation of the variables to standard Pareto margins, but for exposition of calculation, here we will exploit the equivalence $\Prob(X_P>t^\beta,Y_P>t^\gamma)= \Prob\{A > q_A(t^\beta), B> q_B(t^\gamma)\}$, where $q_{i}(t):=F_i^{-1}(1-1/t)$ ($t\geq 1$, $i\in\{A,B\}$) is the $1-1/t$ quantile function. \citet{WadsworthTawn13} show that under asymptotic dependence, if~\eqref{eq:WT} holds, then $\kappa(\beta,\gamma) \equiv \max(\beta,\gamma)$, whereas more interesting structures are obtained under asymptotic independence. The dependence structure of asymptotically dependent distributions is described by the ray dependence function $d$ or distribution $H$ in equation~\eqref{eq:dH}. We discuss these below, also giving the corresponding quantities for the Ramos--Ledford framework under hidden regular variation.

The marginal and joint survivor functions are key to the study of dependence. The former can be expressed as $\Prob(A>x) = \E\{\Prob(S V_1 > x\mid V_1)\}$ and $\Prob(B>y) = \E\{\Prob(S V_2 > y\mid V_2)\}$, where, noting the link between $(V_1,V_2)$ and $V$, $\E$ denotes expectation with respect to $V$. This provides
 \begin{align}
 \label{eq:marg}
 \Prob(A>x) &=  \E\left\{ (1+ \lambda x / V_1)_+^{-1/\lambda} \right\}, & \Prob(B>y) &=  \E\left\{ (1+ \lambda y / V_2)_+^{-1/\lambda} \right\}.
 \end{align}
 The joint survivor function can likewise be expressed as
 \begin{align}
 \label{eq:joint}
 \Prob(A>x,B>y) &=\E \left[\{1+\lambda \max(x / V_1,y/V_2)\}_+^{-1/\lambda} \right].
 \end{align}

Below we present $\chi$, $\eta$ and $\kappa(\beta,\gamma)$ for the different ranges of $\lambda$, and types of norm under consideration. For all cases we assume:

\begin{As}
 \label{As_supp}
 The distribution function of $V$, $F_V:[0,1] \to [0,1]$, is continuous and strictly increasing. 

\end{As}
Equivalently the measure associated to $F_V$ has no point masses and its support is the entire unit interval. With $T$ as defined above, define $\lep :=\inf\{v\in[0,1]:T(v)=1\}$ and $\rep :=\sup\{v\in[0,1]:T(v)=1\}$.

\begin{Case}[$\lambda>0$] \label{C:lplf}

Define the positive quantity
\begin{align}
 \chi_{\lambda} = \E\left[\min\left\{V_1^{1/\lambda} / \E(V_1^{1/\lambda}), V_2^{1/\lambda} / \E(V_2^{1/\lambda})\right\}\right]. \label{eq:chi_lambda}
\end{align}

\begin{prop}
\label{l>0xyj:prop}
If $\beta,\gamma>0$, then 
\[
{\Prob}\bigl\{A > \xq,\,B > \yq\bigr\}=t^{-\max(\beta,\gamma)}\xyj(t),\quad t\geq 1, 
\]
where $\xyj$ is slowly varying at infinity. Furthermore, $\xyj(t)\to\chi_\lambda$ as $t\to\infty$ if $\beta=\gamma$, and $\xyj(t)\to1$ otherwise.
\end{prop}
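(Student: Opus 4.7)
The strategy is to exploit the product structure $(A,B)=(SV_1,SV_2)$ with $S\sim\mbox{GP}(1,\lambda)$. For $\lambda>0$ the tail $(1+\lambda s)^{-1/\lambda}$ is regularly varying of index $-1/\lambda$, so one first derives marginal asymptotics for $A$ and $B$ and then combines them with~\eqref{eq:joint}.

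Starting from~\eqref{eq:marg}, multiply $\Prob(A>x)$ by $x^{1/\lambda}$. Condition~\ref{Cond2} gives $V_1\le 1$, so the integrand $x^{1/\lambda}(1+\lambda x/V_1)^{-1/\lambda}$ is bounded by $\lambda^{-1/\lambda}$ uniformly in $x$ and converges pointwise to $\lambda^{-1/\lambda}V_1^{1/\lambda}$ as $x\to\infty$. Dominated convergence yields
\[
\Prob(A>x)\sim\lambda^{-1/\lambda}\E(V_1^{1/\lambda})\,x^{-1/\lambda},\qquad x\to\infty,
\]
and inverting gives $t^\beta\xq^{-1/\lambda}\to\lambda^{1/\lambda}/\E(V_1^{1/\lambda})$ as $t\to\infty$. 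Analogous statements hold for $B$ and $V_2$.

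Assuming without loss of generality that $\gamma\ge\beta$, rewrite~\eqref{eq:joint} scaled by $t^\gamma$ as
\[
t^\gamma\Prob(A>\xq,B>\yq)=\E\left[t^\gamma\bigl\{1+\lambda\max(\xq/V_1,\yq/V_2)\bigr\}^{-1/\lambda}\right].
\]
For almost every $V\in(0,1)$, both $V_1,V_2$ are positive and, since $\xq,\yq\to\infty$, the integrand simplifies to $\lambda^{-1/\lambda}\min\bigl\{V_1^{1/\lambda}t^\gamma\xq^{-1/\lambda},\,V_2^{1/\lambda}t^\gamma\yq^{-1/\lambda}\bigr\}\{1+o(1)\}$. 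By the quantile asymptotics of the previous paragraph, the first entry is equivalent to $V_1^{1/\lambda}\lambda^{1/\lambda}\E(V_1^{1/\lambda})^{-1}t^{\gamma-\beta}$ and the second converges to $V_2^{1/\lambda}\lambda^{1/\lambda}\E(V_2^{1/\lambda})^{-1}$. When $\gamma>\beta$ the first entry diverges so the pointwise limit is $V_2^{1/\lambda}/\E(V_2^{1/\lambda})$, whose expectation is $1$; when $\gamma=\beta$ both entries have finite positive limits, so the integrand tends to $\min\{V_1^{1/\lambda}/\E(V_1^{1/\lambda}),V_2^{1/\lambda}/\E(V_2^{1/\lambda})\}$, whose expectation is $\chi_\lambda$ by~\eqref{eq:chi_lambda}.

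To justify the interchange one needs a dominator independent of $t$: use the crude bound $1+\lambda\max(\cdot)\ge\lambda\yq/V_2$ to obtain $t^\gamma\cdot(\text{integrand})\le\lambda^{-1/\lambda}V_2^{1/\lambda}\cdot t^\gamma\yq^{-1/\lambda}$, which is bounded by a constant for all $t$ sufficiently large, since $t^\gamma\yq^{-1/\lambda}$ converges by step one and $V_2\le 1$. Slow variation of $\xyj$ then follows trivially from convergence to a positive constant, as $\xyj(ta)/\xyj(t)\to 1$ for every $a>0$. The main obstacle is the careful book-keeping of constants so that the factors $\lambda^{\pm 1/\lambda}\E(V_j^{1/\lambda})^{\pm 1}$ cancel correctly, together with the observation that the null sets $\{V=0\}$ and $\{V=1\}$, on which $V_1$ or $V_2$ vanishes, are handled by the continuity of $F_V$ in Assumption~\ref{As_supp}.
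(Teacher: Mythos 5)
Your proof is correct, and its backbone --- marginal tail asymptotics by dominated convergence, then a second dominated-convergence passage to the limit in the joint survivor function \eqref{eq:joint} --- is the same as the paper's; the differences are in how two steps are packaged. For the margins, the paper writes $\xq=t^{\lambda\beta}\ixmarg(t)$ via the inversion Lemma~\ref{geninv:lem} applied to $s^{\beta}\Prob(A>s^{\lambda\beta})$ (Proposition~\ref{l>0xmarg:prop}), whereas you show $x^{1/\lambda}\Prob(A>x)\to\lambda^{-1/\lambda}\E(V_1^{1/\lambda})$ and invert directly; this shortcut is legitimate here precisely because the slowly varying factor tends to a positive constant, so no de Bruijn-type conjugation is needed (it does implicitly use that $x\mapsto\Prob(A>x)$ is continuous and strictly decreasing onto $(0,1]$, which the paper records explicitly and which is immediate from \eqref{eq:marg} when $\lambda>0$). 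For $\beta\neq\gamma$, the paper instead compares the joint probability with the marginal $\Prob\{B>\yq\}=t^{-\gamma}$ through the split \eqref{GPjointXY:eq1}, bounding the discrepancy by $t^{-\gamma}\{\lambda\yq\}^{-1/\lambda}F_V\{r(t)\}$ with $r(t)=\xq/(\xq+\yq)\to 0$ and invoking $F_V(0)=0$; you run the same dominated-convergence computation as on the diagonal, noting that the branch of the minimum containing $t^{\gamma}\xq^{-1/\lambda}=t^{\gamma-\beta}\,t^{\beta}\xq^{-1/\lambda}$ diverges almost surely, so the limiting integrand is $V_2^{1/\lambda}/\E(V_2^{1/\lambda})$ with mean one. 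Both routes use Assumption~\ref{As_supp} in the same way (no mass of $V$ at the endpoints), and your bound $\{1+\lambda\max(\xq/V_1,\yq/V_2)\}^{-1/\lambda}\le(\lambda\yq/V_2)^{-1/\lambda}$ with $V_2\le 1$ is a valid dominating function for all large $t$. Your treatment is slightly more unified across the cases $\beta=\gamma$ and $\beta\neq\gamma$; what the paper's longer route buys is the explicit slowly varying representation of the quantile functions, which is reused later (Remark~\ref{rmk:1} and Appendix~\ref{sec:AppB}) and in the other cases of $\lambda$.
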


It is an immediate corollary that $\eta = 1$, and $\chi=\chi_\lambda>0$. However, for any fixed $F_V$, as $\lambda \to 0$ the dependence weakens to asymptotic independence, by the following: 
\begin{prop}
\label{chi:prop}
Given a fixed $F_V$, $\chi_\lambda\to0$ as $\lambda\to0^+$.
\end{prop}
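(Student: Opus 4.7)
The plan is to bound $\chi_\lambda$ by a ratio of simpler moments via AM--GM, and then to show this bound tends to zero by separately analysing numerator and denominator, splitting according to whether $\|(1/2,1/2)\|_m > 1/2$ or $= 1/2$. Setting $m_i(\lambda):=\E(V_i^{1/\lambda})$ and using $\min(a,b)\le\sqrt{ab}$ gives
\[
\chi_\lambda \;\le\; \frac{\E[(V_1V_2)^{1/(2\lambda)}]}{\sqrt{m_1(\lambda)\,m_2(\lambda)}}.
\]
Two preliminary observations underpin the argument: (i) writing $V_1V_2 = V(1-V)/\|(V,1-V)\|_m^2$ and combining AM--GM ($V(1-V)\le 1/4$) with Condition~\ref{Cond2} ($\|(V,1-V)\|_m\ge 1/2$) shows $V_1V_2\le 1$, with equality possible only when $V=1/2$ and $\|(1/2,1/2)\|_m=1/2$; (ii) Conditions~\ref{Cond1} and~\ref{Cond3} supply $v_0\in(1/2,1]$ with $\|(v_0,1-v_0)\|_m=v_0$, so $V_1(v_0)=1$, whence $\sup V_1=\sup V_2=1$.

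When $\|(1/2,1/2)\|_m>1/2$, observation~(i) strengthens to $V_1V_2<1$ throughout $[0,1]$, so continuity and compactness yield $M:=\sup_v V_1(v)V_2(v)<1$, and the numerator is bounded by $M^{1/(2\lambda)}$, decaying exponentially in $1/\lambda$. For the denominator, continuity of $V_i$ together with $\sup V_i=1$ and Assumption~\ref{As_supp} give $p_\alpha^{(i)} := \Prob(V_i>1-\alpha)>0$ for every $\alpha>0$, whence $m_i(\lambda)\ge(1-\alpha)^{1/\lambda}p_\alpha^{(i)}$. Choosing $\alpha\in(0, 1-\sqrt M)$ then yields
\[
\chi_\lambda \;\le\; \frac{1}{\sqrt{p_\alpha^{(1)}p_\alpha^{(2)}}}\left(\frac{\sqrt M}{1-\alpha}\right)^{1/\lambda} \;\longrightarrow\; 0.
\]

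When $\|(1/2,1/2)\|_m=1/2$, the set $S:=\{v\in[1/2,1]:\|(v,1-v)\|_m=v\}$ is convex: for $v_1,v_2\in S$ and $v$ between them, convexity of $\|\cdot\|_m$ gives $\|(v,1-v)\|_m\le v$, while Condition~\ref{Cond2} gives $\ge v$, forcing equality. Since $1/2\in S$ by hypothesis and $v_0\in S$ with $v_0>1/2$ by observation~(ii), $S$ contains an interval of positive length on which $V_1\equiv 1$; symmetrically, $V_2\equiv 1$ on an interval of positive length. Assumption~\ref{As_supp} then gives $\Prob(V_i=1)>0$, and dominated convergence yields $m_i(\lambda)\to\Prob(V_i=1)>0$. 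Meanwhile $V_1V_2=1$ only on the $F_V$-null set $\{V=1/2\}$, so $(V_1V_2)^{1/(2\lambda)}\to 0$ $F_V$-a.s., and being bounded by $1$, $\E[(V_1V_2)^{1/(2\lambda)}]\to 0$ by dominated convergence. Hence $\chi_\lambda\to 0$.

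The main obstacle is the first case, in which $m_i(\lambda)$ can itself vanish polynomially in $\lambda$ (as for $L_p$ norms with $1<p<\infty$, where $V_i$ attains its supremum only at an isolated endpoint). The crude Markov-type lower bound on $m_i$ works precisely because the numerator's exponential decay, guaranteed by $M<1$, outpaces it; the threshold $\alpha<1-\sqrt M$ is chosen specifically to preserve this exponential advantage.
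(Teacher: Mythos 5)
Your proof is correct, but it takes a different route from the paper's. The paper bounds the minimum by splitting the expectation at $v=1/2$ — using the first component of the min on $[0,1/2]$ and the second on $[1/2,1]$ — to get $\chi_\lambda\le\mathcal{R}_-+\mathcal{R}_+$, and then kills each ratio separately: when $\lep=1/2<\rep$ (your Case B) dominated convergence suffices, since the numerator of $\mathcal{R}_-$ vanishes while its denominator tends to $\upm>0$; when $\lep>1/2$ (your Case A) it picks $\delta$ with $\nf\ge1+\delta$ on $[0,1/2]$ and compares the numerator's rate $(1+\delta)^{-1/\lambda}$ against a denominator lower bound $(1+\delta/2)^{-1/\lambda}\cm$ coming from the positive $F_V$-mass of $I_\delta=\{\nf\le1+\delta/2\}$. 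You instead apply $\min(a,b)\le\sqrt{ab}$ to obtain a single global ratio and analyse the product $V_1V_2=v(1-v)/\|(v,1-v)\|_m^2$, whose supremum $M$ is strictly below $1$ exactly when $\|(1/2,1/2)\|_m>1/2$, i.e.\ $\lep>1/2$ — so your case split coincides with the paper's, with the geometric-mean bound doing the work of the paper's domain split. The ingredients are then parallel: your lower bound $\E(V_i^{1/\lambda})\ge(1-\alpha)^{1/\lambda}\Prob(V_i>1-\alpha)$, justified by continuity of $V_i$, $\sup V_i=1$ (a consequence of Conditions~\ref{Cond1} and~\ref{Cond3}, which also give the paper's $\rep>1/2$) and Assumption~\ref{As_supp}, plays the role of the paper's $I_\delta$ argument, while your convexity argument showing $V_1\equiv1$ on $[1/2,v_0]$ in Case B, so that $\E(V_i^{1/\lambda})\to\Prob(V_i=1)>0$, replaces the paper's direct identification of the limit with $\upm>0$; the null set $\{V=1/2\}$ then handles the numerator exactly as the paper's null set handles the numerator of $\mathcal{R}_-$. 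What your version buys is a tidier, symmetric single-ratio bound and a geometrically transparent dichotomy ($M<1$ iff the norm exceeds the sup-norm on the diagonal); what the paper's buys is that it never needs the product $V_1V_2$ or its supremum, working directly with the one-sided tails of $\nf$, which keeps the argument closer to the quantities reused in the surrounding propositions. Both proofs use the same structural hypotheses in the same places, so neither is more general than the other.
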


\end{Case}

\begin{Remark}
\label{rmk:1}
 The ray dependence function~\eqref{eq:dH} for $\lambda>0$ is
\begin{align*}
 d(q) = \frac{1}{\chi_\lambda} \E\left[\min\left\{\frac{V_1^{1/\lambda}}{\E(V_1^{1/\lambda})}\left(\frac{1-q}{q}\right)^{1/2},\frac{V_2^{1/\lambda}}{\E(V_2^{1/\lambda})}\left(\frac{q}{1-q}\right)^{1/2}\right\}\right], \quad q\in[0,1].
 \end{align*}
 If $F_V$ has a Lebesgue density $f_V$, the associated spectral density $h(w) = \dsp H(w)/\dsp w$ is given by
 \begin{align*}
 h(w;\lambda,f_V) = \frac{1}{2} \frac{\lambda^{1-1/\lambda} w^{\lambda-1}(1-w)^{\lambda-1}\mu_1^\lambda\mu_2^\lambda}{\|(w\mu_1)^\lambda,((1-w)\mu_2)^\lambda\|_m^{1/\lambda}\{(w\mu_1)^\lambda+((1-w)\mu_2)^\lambda\}^2} \times f_V\left\{\frac{(\mu_1 w)^\lambda}{(w\mu_1)^\lambda+((1-w)\mu_2)^\lambda}\right\},
\end{align*}
with $\mu_1 = \E(V_1^{1/\lambda})/\lambda^{1/\lambda}$, $\mu_2 = \E(V_2^{1/\lambda})/\lambda^{1/\lambda}$. This satisfies $\int_0^1 w h(w;\lambda,f_V) dw= 1/2$, a necessary moment constraint on $H$, even if $\int_0^1 v f_V(v) \dsp v \neq 1/2$. Justification for these forms is given in Appendix~\ref{sec:AppB}.

\end{Remark}

\begin{Case}[$\lambda=0$]\label{C:l0}

\begin{prop}
\label{xyj:prop}
Let $\beta,\gamma>0$, and define $\omega:=\beta/(\beta+\gamma)$. Then 
\[
\Prob\bigl\{A>\xq,\,B>\yq\bigr\}=t^{-\kappa(\beta,\gamma)}\xyj(t),\quad t\geq 1, 
\]
where $\xyj$ is slowly varying at infinity, and 
\[
 \kappa(\beta,\gamma) = \left\{\begin{array}{ll} \norm{(\beta,\gamma)}, & \omega \in[1-\lep,\lep]\\
                                        \|(\beta,\gamma)\|_{\infty}, &\mbox{otherwise.}
                                      \end{array}\right.
\]

\end{prop}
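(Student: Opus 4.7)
The plan is to apply Laplace's method to
\[
\Prob(A>\xq,\,B>\yq) = \E\bigl[\exp\{-\max(\xq/V_1,\,\yq/V_2)\}\bigr],
\]
after first pinning down the marginal quantiles. Condition~\ref{Cond2} gives $V_1,V_2\le 1$ with $V_1=1$ precisely on $V\in[\lep,\rep]$, so a boundary-layer argument on $\Prob(A>x)=\E[\exp(-x/V_1)]$ yields $\Prob(A>x)=e^{-x}L_A(x)$ with $L_A$ slowly varying at infinity (converging to $\Prob(V_1=1)$ if positive, otherwise decaying polynomially from the contribution of $V_1$ near $1$). Hence $\xq=\beta\log t + o(\log t)$, and analogously for $\yq$.

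For the joint calculation, set $\xi(v)=\max(\xq/V_1(v),\,\yq/V_2(v))$. Since $V_1(v)/V_2(v)=v/(1-v)$ covers $(0,\infty)$, the decreasing $\xq/V_1$ and the increasing $\yq/V_2$ cross uniquely at $v_*=\xq/(\xq+\yq)$, giving $\xi(v_*)=\|(\xq,\yq)\|_m$ and $v_*\to\omega$ as $t\to\infty$. Conditions~\ref{Cond2} and~\ref{Cond3} together force $\lep\geq 1/2$, so I would then split on the position of $\omega$. If $\omega\in(1-\lep,\lep)$, eventually $v_*$ sits outside both flat sets $[\lep,\rep]$ and $[1-\rep,1-\lep]$, and $\xi$ has a V-shape at $v_*$ with strictly positive one-sided slopes $\alpha_\pm(t)$ of order $\log t$. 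Laplace on a V-shaped minimum gives
\[
\E[e^{-\xi(V)}] \sim f_V(v_*)\bigl\{\alpha_-(t)^{-1}+\alpha_+(t)^{-1}\bigr\}\,e^{-\|(\xq,\yq)\|_m},
\]
and positive homogeneity converts $\|(\xq,\yq)\|_m = \|(\beta,\gamma)\|_m\log t + O(1)$ into the exponent $t^{-\|(\beta,\gamma)\|_m}$, with a slowly varying prefactor in $t$. If instead $\omega>\lep$ (the case $\omega<1-\lep$ is symmetric), $v_*$ eventually lies in the flat set where $V_1\equiv 1$; there $\xi(v)=\max(\xq,\,\yq v/(1-v))$ equals $\xq$ on $[\lep,v_*]$ and is strictly larger elsewhere, so
\[
\E[e^{-\xi(V)}] = e^{-\xq}\{F_V(v_*)-F_V(\lep)\} + o(e^{-\xq}).
\]
Combined with $e^{-\xq} = t^{-\beta}/L_A(\xq)$ and the observation that $\omega>\lep\geq 1/2$ forces $\beta=\|(\beta,\gamma)\|_\infty$, this has the form $t^{-\|(\beta,\gamma)\|_\infty}\xyj(t)$ with $\xyj$ slowly varying. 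The two regimes match at the boundary $\omega=\lep$, where $V_1=1$ at the minimiser so that $\|(\xq,\yq)\|_m=\xq=\|(\xq,\yq)\|_\infty$.

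The main obstacle is the Laplace step: because $\xi$ has a non-smooth V-shape at its minimum, one must carefully track the two one-sided slopes and combine them with the slowly varying factors inherited from $L_A$ and $L_B$. This bookkeeping is especially delicate in degenerate cases such as $\|\cdot\|_m=\|\cdot\|_1$, where the flat set is a single point and the marginal tail acquires a polynomial-in-$x$ correction that must reappear in $\xyj$ without altering the exponent $\kappa(\beta,\gamma)$.
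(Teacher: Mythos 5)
Your route (Laplace/saddle-point analysis of $\E[\exp\{-\max(\xq/V_1,\yq/V_2)\}]$) is genuinely different from the paper's, but as written it has a real gap: it uses regularity of $F_V$ that this proposition does not grant. In Case~\ref{C:l0} only Assumption~\ref{As_supp} is in force ($F_V$ continuous and strictly increasing); no Lebesgue density exists in general, and Assumption~\ref{As_cd} is invoked only for $\lambda<0$. So the Laplace prefactor $f_V(v_*)\{\alpha_-(t)^{-1}+\alpha_+(t)^{-1}\}$ is not available, and the same issue infects your marginal step, where the claim that $e^{x}\Prob(A>x)$ is slowly varying with ``polynomial'' decay is exactly the kind of statement that needs proof without a density. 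Worse, the sub-cases you defer as ``delicate bookkeeping'' are precisely the ones where your leading constants degenerate: whenever the flat set $[\lep,\rep]$ of the norm has $F_V$-measure zero (e.g.\ $\lep=\rep$, which happens for every $L_p$ norm with $p<\infty$, including $\|\cdot\|_1$), one has $\ixmarg(t)\to 0$, so $\xq=\beta\log t+\log\ixmarg(t)$ has an unbounded correction and your step $\|(\xq,\yq)\|_m=\|(\beta,\gamma)\|_m\log t+O(1)$ fails; converting $e^{-\|(\xq,\yq)\|_m}$ into $t^{-\|(\beta,\gamma)\|_m}\times(\text{slowly varying})$ then requires either smoothness of $\|\cdot\|_m$ in the direction $(\beta,\gamma)$ or a separate uniformity argument. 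Similarly, in your regime $\omega>\lep$ the constant $F_V(v_*)-F_V(\lep)$ can be zero (again when $\lep=\rep$), so the asymptotics you write down collapse exactly where the result is hardest.

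The paper avoids all of this by never extracting pointwise asymptotic constants: it splits the joint survivor integral at $r(t)=\xq/(\xq+\yq)\to\omega$ and applies Lemma~\ref{intregvar:lem}, which shows that $\int_{\Lintv[s]}\Lrvi^{-\Lind(v)}(s)\dsp F_V(v)$ is regularly varying of index $-\Lii\min_{\Lintv}\Lind$ using only continuity of the exponent and the fact that $F_V$ charges every interval — no density, and valid even when the limit of the slowly varying factor is $0$. That lemma (or an equivalent uniform-in-$v$ estimate) is the ingredient your sketch is missing; if you supplied it, you would essentially have reproduced the paper's argument, since the case analysis you perform on the location of $\omega$ relative to $[1-\lep,\lep]$ mirrors the paper's classification of the minimiser of $\mnf(v)=\max\{\beta\nf(v),\gamma\nf(1-v)\}$.
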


It is an immediate corollary that $\eta = \|(1,1)\|_m^{-1}$. When $\eta<1$ then $\chi=0$, i.e., we have asymptotic independence. When $\eta=1$, then $\chi=\lim_{t\to\infty}\xyj(t)$ when $\beta = \gamma$. Proposition~\ref{prop:lam0chi} in Appendix~\ref{sec:AuxiliaryResults} states that this limit is still zero, i.e., we still have asymptotic independence.

\end{Case}

\begin{Case}[$\lambda<0$ and $\|(1,1)\|_m= \|(1,1)\|_\infty$]
\label{C:lnli}
For this case only, we further assume:
\begin{As}
 \label{As_cd}
 $F_V$ is continuously differentiable near $1/2$ with $F_V'(1/2)>0$. 
\end{As}

\begin{prop}
\label{l<0xyj:prop}
If $\beta,\gamma>0$, then 
\[
\Prob\bigl\{A>\xq,\,B> \yq\bigr\}=t^{-\kappa(\beta,\gamma)}\xyj(t),\quad t\geq 1, 
\]
where $\kappa(\beta,\gamma)=(1+\lambda)\max(\beta,\gamma)-\lambda (\beta+\gamma)$ and $\xyj$ is slowly varying at infinity with
\begin{equation}
\label{xyjlim:eq}
\lim_{t\to\infty}\xyj(t) = \frac{F_V'(1/2)}{4} \times
\begin{cases}
\upm^\lambda\lowm^{-1},&\beta<\gamma,\\
\bigl\{\min(\upm,\lowm)^\lambda-\frac{1+\lambda}{1-\lambda}\max(\upm,\lowm)^{\lambda}\bigr\}\max(\upm,\lowm)^{-1}, &\beta=\gamma, \\
\lowm^\lambda\upm^{-1},&\beta>\gamma,
\end{cases}
\end{equation}
for $\upm = \Prob(V\in[\lep,\rep])$ and $\lowm = \Prob(V\in[1-\rep,1-\lep])$.
\end{prop}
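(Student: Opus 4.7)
The plan is to evaluate the expectations in~\eqref{eq:marg} and~\eqref{eq:joint} asymptotically as $x,y\to-1/\lambda$, the common upper endpoint of $A$ and $B$, and then invert the marginal asymptotics to read off the joint rate in the quantile scale. The structural fact driving everything is that under the Case~\ref{C:lnli} hypothesis $\|(1,1)\|_m=\|(1,1)\|_\infty=1$, the map $T$ satisfies $T(1/2)=1/\|(1,1)\|_m=1$, forcing $\lep=1/2$; by convexity and Condition~\ref{Cond3} the $L_m$-unit sphere shares a nondegenerate flat piece with the $L_\infty$-sphere near $(1,1)$, so $V_1\equiv 1$ on $\{V\in[1/2,\rep]\}$ (probability $\upm$) and $V_2\equiv 1$ on $\{V\in[1-\rep,1/2]\}$ (probability $\lowm$), whereas on $[0,1-\rep)\cup(\rep,1]$ the smaller of $V_1,V_2$ is at most $(1-\rep)/\rep<1$.

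Setting $x=(1-\epsilon_A)/|\lambda|$ rewrites the integrand in~\eqref{eq:marg} as $((V_1-1+\epsilon_A)/V_1)_+^{1/|\lambda|}$. The set $\{V\in[1/2,\rep]\}$ contributes exactly $\upm\,\epsilon_A^{1/|\lambda|}$; on $\{V\in[1-\rep,1/2)\}$ the expansion $V_1(v)=v/(1-v)=1-4(1/2-v)+O((1/2-v)^2)$ combined with Assumption~\ref{As_cd} gives a contribution of order $\epsilon_A^{1+1/|\lambda|}$; and the residual piece from $v$ slightly exceeding $\rep$ is $o(\epsilon_A^{1/|\lambda|})$ by a similar bound on the range where $V_1>1-\epsilon_A$. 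Hence $\Prob(A>x)=\upm\,\epsilon_A^{1/|\lambda|}(1+o(1))$, and setting $x=\xq$ yields $\epsilon_A=\upm^\lambda t^{\beta\lambda}(1+o(1))$; an analogous calculation for $B$ gives $\epsilon_B=\lowm^\lambda t^{\gamma\lambda}(1+o(1))$.

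For the joint survivor, $\lambda<0$ together with $\lambda\max(a,b)=\min(\lambda a,\lambda b)$ converts~\eqref{eq:joint} into $\E[\min\{(V_1-1+\epsilon_A)/V_1,\,(V_2-1+\epsilon_B)/V_2\}_+^{1/|\lambda|}]$. On $[0,1-\rep)\cup(\rep,1]$ whichever of $V_1,V_2$ is smaller is bounded by $(1-\rep)/\rep$, so once $\epsilon_A,\epsilon_B<(2\rep-1)/\rep$ the minimum is negative and the integrand vanishes identically. On $\{V\in[1/2,\rep]\}$ one has $V_1=1$ and $V_2=(1-v)/v$, and with $\tau=v-1/2$, $(V_2-1+\epsilon_B)/V_2=\epsilon_B-4\tau+O(\tau^2+\tau\epsilon_B)$; changing variables to $u=\epsilon_B-4\tau$ and replacing $f_V(1/2+\tau)$ by $F_V'(1/2)$ via Assumption~\ref{As_cd} yields a contribution $\{F_V'(1/2)/4\}\int_0^{\epsilon_B}\min(\epsilon_A,u)^{1/|\lambda|}\dsp u\,(1+o(1))$; symmetrically, $\{V\in[1-\rep,1/2]\}$ contributes $\{F_V'(1/2)/4\}\int_0^{\epsilon_A}\min(u,\epsilon_B)^{1/|\lambda|}\dsp u\,(1+o(1))$.

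Summing these two integrals and using the identity $(1-1/|\lambda|)/(1+1/|\lambda|)=-(1+\lambda)/(1-\lambda)$ gives
\begin{align*}
\Prob(A>x,B>y)=\frac{F_V'(1/2)}{4}\,\epsilon_-^{1/|\lambda|}\left[\epsilon_+-\frac{1+\lambda}{1-\lambda}\epsilon_-\right](1+o(1)),
\end{align*}
with $\epsilon_-=\min\{\epsilon_A,\epsilon_B\}$ and $\epsilon_+=\max\{\epsilon_A,\epsilon_B\}$. Inserting the quantile asymptotics produces both the rate $t^{-\kappa(\beta,\gamma)}$ with $\kappa(\beta,\gamma)=(1+\lambda)\max(\beta,\gamma)-\lambda(\beta+\gamma)$ and the three limits in~\eqref{xyjlim:eq}: for $\beta\neq\gamma$ one has $\epsilon_-/\epsilon_+\to0$, so only the term $\epsilon_-^{1/|\lambda|}\epsilon_+$ survives, with $\epsilon_-=\epsilon_A$ iff $\beta>\gamma$; for $\beta=\gamma$ the ratio $\epsilon_A/\epsilon_B$ equals the constant $(\upm/\lowm)^\lambda$ and both bracketed terms contribute, producing the stated $\min/\max$ expression. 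Slow variation of $\xyj$ is automatic from the existence of a positive finite limit. The main delicacy lies in the case-by-case bookkeeping for which of $\epsilon_A,\epsilon_B$ dominates, but the $\min/\max$ notation collapses the three cases cleanly, and the piecewise-linear structure of $v\mapsto V_i$ near $v=1/2$, guaranteed by Case~\ref{C:lnli}, keeps all arising integrals elementary.
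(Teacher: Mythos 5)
Your proposal is correct — it reproduces the exponent $\kappa(\beta,\gamma)$ and all three limiting constants in \eqref{xyjlim:eq} — but it organizes the computation differently from the paper, and the comparison is instructive. The paper first proves Proposition~\ref{l<0xmarg:prop} via the de Bruijn inversion of Lemma~\ref{geninv:lem} to obtain $1+\lambda\xq=\upm^\lambda t^{\lambda\beta}\{1+o(1)\}$, then splits the joint survivor \eqref{GPjointXY:eq1} at the data-dependent point $r(t)=\xq/\{\xq+\yq\}$ into $\intg{-}+\intg{+}$, performs an exact change of variables in each piece, and applies dominated convergence with the uniform limit $G^-_t(u)\to F_V'(1/2)/4$, handling $\beta<\gamma$, $\beta>\gamma$ and $\beta=\gamma$ (with sub-cases $\upm\lessgtr\lowm$) one by one. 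You instead work in the variables $\epsilon_A=1+\lambda x$, $\epsilon_B=1+\lambda y$, obtain the marginal asymptotics directly (no inversion lemma is needed because the slowly varying factor has a constant limit, though you should note that $x\mapsto\Prob(A>x)$ is continuous and strictly decreasing so that $\Prob\{A>\xq\}=t^{-\beta}$ exactly), and decompose the expectation by the two flat arcs of the $\|\cdot\|_m$-sphere on which $V_1\equiv1$ and $V_2\equiv1$ respectively — legitimate since the Case~\ref{C:lnli} hypothesis gives $\lep=1/2$ and Condition~\ref{Cond3} gives $\rep>1/2$ — keeping the minimum inside a single integral. This lands on the unified closed form $\tfrac{F_V'(1/2)}{4}\,\epsilon_-^{-1/\lambda}\bigl\{\epsilon_+-\tfrac{1+\lambda}{1-\lambda}\epsilon_-\bigr\}$, from which $\kappa$ and the three cases of \eqref{xyjlim:eq} drop out by substituting $\epsilon_A\sim\upm^\lambda t^{\lambda\beta}$, $\epsilon_B\sim\lowm^\lambda t^{\lambda\gamma}$; this collapses the paper's case analysis and makes the origin of the $\tfrac{1+\lambda}{1-\lambda}$ factor transparent. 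What the paper's route buys in exchange is rigour at the one step you pass over quickly: the expansion $(V_2-1+\epsilon_B)/V_2=\epsilon_B-4\tau+O(\tau^2+\tau\epsilon_B)$ requires the error terms to be controlled uniformly over the shrinking effective range and uniformly in the ratio $\epsilon_A/\epsilon_B$ (e.g.\ by sandwiching between $\epsilon_B-4(1\pm\delta)\tau$ before integrating), whereas the paper's exact substitution followed by dominated convergence avoids any Taylor-remainder bookkeeping. These are routine tightenings, not gaps in the idea.
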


A corollary when $\beta=\gamma$ is that $\eta=(1-\lambda)^{-1}$. Since $\eta<1$ we must have $\chi=0$, asymptotic independence.
\end{Case}

\begin{Remark}
\label{rmk:2}
 The ray dependence function~\eqref{eq:dHeta} in this case is
 \begin{align*}
  d(q) = \{q(1-q)\}^{\frac{1-\lambda}{2}} \frac{\min\{q\upm,(1-q)\lowm\}^{\lambda}\max\{q\upm,(1-q)\lowm\}^{-1} -\frac{1+\lambda}{1-\lambda}\max\{q\upm,(1-q)\lowm\}^{\lambda-1}}{\min(\upm,\lowm)^{\lambda}\max(\upm,\lowm)^{-1} -\frac{1+\lambda}{1-\lambda}\max(\upm,\lowm)^{\lambda-1}}; 
 \end{align*}
see Appendix~\ref{sec:AppB}. The density of the associated measure $H_\eta$ can be calculated as in \citet[][Section 9.5.3]{Beirlant04}.
\end{Remark}

\begin{Case}[$\lambda<0$ and $\|(1,1)\|_m > \|(1,1)\|_\infty$]
\label{C:lnlf}
In this case $\chi=0$, but the regular variation assumptions~\eqref{eq:eta} and~\eqref{eq:WT} are not satisfied. The marginal densities have upper endpoint $-1/\lambda$, i.e., $\xq, \yq \to -1/\lambda$ as $t\to\infty$, but the upper endpoint of the joint survivor function is strictly below $-1/\lambda$, as can be seen by substituting $x=\xq$, $y=\yq$ in~\eqref{eq:joint}; this probability will be exactly zero whenever
\begin{align}
 \max\left\{\xq/V_1, \yq/V_2\right\} \geq -1/\lambda,\quad (V_1,V_2) \in \mathcal{S}^m. \label{eq:lnineq1}
\end{align}
For $a,b,c,d>0$, $\max(a/b,c/d) \leq \max(a,c)/\min(b,d)$, yielding $\max(a,c) \geq \min(b,d)\max(a/b,c/d)$, so 
\[\max\{\xq/V_1,\yq/V_2\} \geq \min\{\xq,\yq\}\max(1/V_1,1/V_2).\]
Moreover $\max(1/V_1,1/V_2) = 1/\min(V_1,V_2) \geq \|(1,1)\|_m$, since  $\min(V_1,V_2)$ is largest when $V=1/2$. Combining these two observations we have
\[
 \max\left\{\xq/V_1, \yq/V_2\right\} \geq \min\left\{\xq,\yq\right\}\|(1,1)\|_m \to - \|(1,1)\|_m /\lambda > -1/\lambda,\quad t\to\infty, 
\]
so there is a $t_0<\infty$ such that~\eqref{eq:lnineq1} is satisfied for all $t>t_0$. It follows that $\chi=0$, whereas $\eta$ and $\kappa(\beta,\gamma)$ are ill-defined.

\end{Case}

Propositions~\ref{l>0xyj:prop},~\ref{xyj:prop},~\ref{l<0xyj:prop} and Remark~\ref{rmk:1} show how different combinations of $\lambda$, $F_V$ and $ \|\cdot\|_m$ influence extremal dependence properties, under the assumed conditions on the support of $V$ and type of norm. To summarize: asymptotic dependence is present when $\lambda>0$, with the dependence then described by $d(q)$ given in Remark~\ref{rmk:1}, determined by $\lambda$, $F_V$ and $\|\cdot\|_m$. Asymptotic independence is present when $\lambda\leq 0$; for $\lambda=0$, $\kappa$ is determined by the shape of $\|\cdot\|_m$, while for $\lambda<0$, hidden regular variation only arises if $\|(1,1)\|_m=1$. Overlap in dependence structures might seem to arise when $\lambda=0$ and $\|(\beta,\gamma)\|_m = \delta(\beta+\gamma) + (1-\delta)\max(\beta,\gamma)$, $\delta \in (0,1]$, since this matches the case $\lambda \in[-1,0)$ and $\|(1,1)\|_m = \|(1,1)\|_\infty$. However, Proposition~\ref{l<0xyj:prop} shows that in general the slowly varying function arising when $\lambda<0$ depends on the properties of the norm $\|\cdot\|_m$ used for a fixed distribution $F_V$, whereas the slowly varying function arising when $\lambda=0$  cannot change in this way.

 \subsection{Transition between dependence classes}
\label{sec:Transition}

Due to the focus on limits such as~\eqref{eq:chi}, the classification between asymptotic dependence and asymptotic independence is viewed as dichotomous: either the joint and marginal survivor probabilities decay at the same rate or they do not. Where existing modelling approaches are suitable for both dependence types, the transition between them occurs on the boundary of a parameter space, inducing an undesirable discontinuity in the extremal dependence features. For example, consider $\chi(u):=\Prob\{F_1(Z_1)>u \mid F_2(Z_2)>u\}$ ($u\in[0,1]$). In the Ramos--Ledford--Tawn approach, when $\eta=1$ there is an instant ``jump'' to $\chi(u)\equiv \chi>0$ for all $u$ above the level at which the model is assumed to hold, whereas when $\eta<1$,  $\chi(u)\to0$ as $u\to 1$. Similarly in the Heffernan--Tawn model, when $\alpha=0,\beta=1$, the value of $\chi(u)\equiv 1- \int_{0}^\infty G(-v)e^{-v}\,\mbox{d}v$ for all $u$ above the level at which the model is assumed to hold, where $G$ is as in limit~\eqref{eq:HT}, whereas $\chi(u)\to 0$ for all other values of $(\alpha,\beta)$. Consequently, \emph{any} decrease in an empirically estimated $\chi(u)$ suggests that asymptotic independence will be inferred under the Ramos--Ledford--Tawn and Heffernan--Tawn models.

  An elegant feature of model~\eqref{eq:mod1} is the smoothness of the transitions across dependence classes in $\lambda$, and the fact that asymptotic independence or dependence does not occur at boundary points for $\lambda$. In particular when $\lambda \to 0^+$, the function $\chi_\lambda$ defined in~\eqref{eq:chi_lambda} tends to zero, and the value of the function $\chi(u)\equiv\chi_{\lambda}(u)$ discussed in Section~\ref{sec:Transition} may depend on $u$ in regions where the model holds, thereby smoothing out some of the discontinuity discussed above. Furthermore, if $\|(1,1)\|_m = 1$, achieved if we set $\|\cdot\|_m=\|\cdot\|_\infty$, then $\chi_{\lambda}\to 0$ as $\lambda\to 0^+$ and $\eta$ decreases from 1 at $\lambda=0$ towards $0$ as $\lambda\to -\infty$. In this sense the model smoothly interpolates across the dependence classes. We will adopt these modelling choices in Section~\ref{sec:Inference}.

\section{Inference}
\label{sec:Inference}
\subsection{Likelihood and parameterization}
\label{sec:Likelihood}
 We now consider fitting~\eqref{eq:mod1} as a dependence model for extreme bivariate data by likelihood methods. Let $F_A, F_B$ and $f_A, f_B>0$ denote the pseudo-marginal distribution and density functions respectively, and let $f_{A,B}$ denote the joint density of $(A,B)$. The density corresponding to the copula $C(u_1,u_2)$ is
 \begin{align*}
 c(u_1,u_2) = \frac{f_{A,B}\{F_A^{-1}(u_1),F_B^{-1}(u_2)\}}{f_A\{F_A^{-1}(u_1)\} f_B\{F_B^{-1}(u_2)\} }, \quad 0\leq u_1,u_2\leq 1.
\end{align*}
Recall that $(V_1,V_2) = (V,1-V)/\|(V,1-V)\|_m $; we assume that $V$ has a Lebesgue density (thus Assumptions~\ref{As_supp} and~\ref{As_cd} are satisfied), denoted by $f_V$. Using the independence of $S$ and $V$ we obtain the joint density 
\begin{align*}
f_{A,B}(x,y) = \frac{\|(x,y)\|_m}{(x+y)^2} \left\{1 + \lambda \|(x,y)\|_m\right\}_+^{-1/\lambda - 1} f_V\left(\frac{x}{x+y}\right), \quad x,y>0.
\end{align*}
The pseudo-marginal density and distribution functions required to compute $c(u_1,u_2)$ are not explicit, requiring numerical evaluation of a one-dimensional integral.

We only wish to use model~\eqref{eq:mod1} for extreme dependence, so we must censor non-extreme data. Since the margins and dependence have a common parameterization, it is only straightforward to censor on regions that remain of the same form under marginal transformation. We therefore choose to censor data for which the maximum value on the uniform marginal scale is less than some $u$ close to 1. This translates to the uncensored variables having $\max(A,B)$ large, and by equivalence of norms, any $\|(A,B)\|_m$ will also be large. Thus the likelihood that we use for independent pairs $(u_{1,1},u_{2,1}),\ldots, (u_{1,n},u_{2,n})$ with uniform margins is
\begin{align} 
L(\bm{\zeta}) = \prod_{i: \max(u_{1,i},u_{2,i})>u} c(u_{1,i},u_{2,i};\bm{\zeta}) \prod_{i: \max(u_{1,i},u_{2,i})\leq u} C(u,u;\bm{\zeta}), \label{eq:lhood}
\end{align}
with $\bm{\zeta}$ a parameter vector. In practice the data must be transformed to uniform margins using the probability integral transform. One possibility is semiparametric transformation, using the empirical distribution below a high threshold and the asymptotically-motivated generalized Pareto distribution above it \citep{ColesTawn91}. A simpler alternative is to use the empirical distribution function throughout. The properties of censored two-stage parametric and semiparametric maximum likelihood estimators of copula parameters are explored in \citet{ShihLouis95}.

In this implementation, we constrain $\lambda\leq 1$. In order to fit the model, points must be transformed onto $A$, $B$ pseudo-margins using numerical inversion; if $\lambda$ is large,  then numerical instabilities may arise because the pseudo-margins are heavy-tailed. Considering the form of $h(\cdot;\lambda,f_V)$ given in Remark~\ref{rmk:1}, this still yields a slightly richer class of spectral densities than those defined simply by $f_V$. The complete set of parameters is determined by the choice of $f_V$ and any parameterization of the norm $\|\cdot\|_m$. Below we take
\begin{align}
V\sim \mbox{Beta}(\alpha,\alpha)\quad \mbox{and}\quad \|\cdot\|_m = \|\cdot\|_\infty, \label{eq:modchoice}
\end{align}
giving $\bm{\zeta}=(\lambda,\alpha)$. The beta distribution is chosen for its simplicity and flexibility of shape, but might be replaced by other distributions. As mentioned in Section~\ref{sec:ExtremalDependence}, \eqref{eq:modchoice} permits all possible $\chi$ and $\eta$ values; it also provides a simple model for the dependence structure in both asymptotic independence and dependence frameworks, through the attainable forms of $\kappa$, and ray dependence function $d$. Although~\eqref{eq:modchoice} represents a misspecification for each of the dependence structures to be used in Section~\ref{sec:Simulation}, our numerical results  suggest that it works reasonably well.

Recalling Section~\ref{sec:Uniqueness}, the choice of a fixed norm in model~\eqref{eq:mod1} is not as restrictive as might first appear. Since the extremal dependence depends on the combination of $\lambda$, $\|\cdot\|_m$ and the distribution of $V$, the fixing of the norm can be offset by the other model elements to yield a good representation of the data anyway.

An R package for fitting and checking model~\eqref{eq:mod1}, \texttt{EVcopula}, is available at \texttt{www.lancaster.ac.uk/$\sim$wadswojl/}.

\subsection{Parameter Identifiability}
\label{sec:Identifiability}

The parameters $(\lambda,\alpha)$ of the model defined by~\eqref{eq:mod1} and~\eqref{eq:modchoice} exhibit negative association, as increasing either parameter whilst fixing the other gives stronger dependence. When the data derive from an asymptotically dependent random vector exhibiting multivariate regular variation, this trade-off may be particularly strong, because  each $\lambda>0$ leads to a spectral density (in the sense described in Section 1, derived using standard Pareto margins and the $L_1$ norm) $h(\cdot;\lambda,f_V)$, as detailed in Remark~\ref{rmk:1}. With the modelling choices in~\eqref{eq:modchoice} the spectral density $h(\cdot;\lambda,f_V)$ simplifies to
\begin{align*}
 h(w;\lambda,\alpha)&=\frac{1}{2\mu} \frac{\lambda^{1-1/\lambda} w^{\lambda\alpha-1}(1-w)^{\lambda\alpha-1}}{\{w^\lambda+(1-w)^\lambda\}^{2\alpha} \max(w,1-w)}\frac{\Gamma(2\alpha)}{\Gamma(\alpha)^2}, \quad 0<w<1, 
\end{align*}
with $\mu=\mu_1=\mu_2$ due to symmetry. A dominant factor in maximum likelihood estimation of $(\lambda,\alpha)$ is thus the combination of these parameters giving a spectral density most similar to the underlying truth. Although the parameters have different roles, in practice there are many combinations that yield a similar $h(\cdot;\lambda,\alpha)$. To determine if the resulting identifiability issues matter in applications, we suggest inspection of the joint log-likelihood surface for $(\lambda,\alpha)$; we implement this in Section~\ref{sec:Applications}. More generally for other norms and choices of $f_V$, parameter identifiability must be considered.

\subsection{Simulation}
\label{sec:Simulation}

For three different dependence structures, we estimate the probability of lying in rectangular-shaped sets $(u_1,v_1)\times(u_2,v_2)$ on the copula scale, where $u_1<v_1$, $u_2<v_2$, and $u_2$ represents an extreme quantile. We call them Set~1, \ldots, Set~5, with $(u_1,v_1) = (0.05,0.2), (0.2,0.4), (0.4,0.6), (0.6, 0.8), (0.8,0.9999)$, and $(u_2,v_2)=(0.995,0.99995)$, respectively. We compare our method to that of \citet{HeffernanTawn04}, the only other approach easily able to estimate probabilities when the components may not both be extreme.

We simulate 100 replicate samples of size 1000 from (i) the bivariate extreme value distribution with symmetric logistic dependence structure \citep{ColesTawn91}; (ii) the inverted copula of (i) \citep{LedfordTawn97}; and (iii) the bivariate normal distribution. The first is asymptotically dependent, and the others are asymptotically independent. We use dependence parameters $\{0.2,0.4,0.6,0.8\}$, representing decreasing dependence for (i) and (ii) and increasing dependence for (iii): we label the dependence levels from 1--4 in order of increasing strength. The censoring threshold in likelihood~\eqref{eq:lhood} was $u=0.95$, and the data were transformed to uniformity using the empirical distribution function. Estimation for the \citet{HeffernanTawn04} method was based on all data whose $Y$ coordinate exceeded a 90\% quantile threshold. 

Table~\ref{tab:allresults} displays the root mean squared errors (RMSEs) of the log of all non-zero estimated probabilities. For our model, we define a probability to be zero if its estimate is less than twice machine epsilon in R, since numerical procedures are involved in the calculations; this can occasionally produce negative numbers, which we also set to zero. The number of probabilities estimated as zero is also provided in the table.  Overall the new model produces estimates with lower RMSEs than the Heffernan--Tawn model. Any exceptions arise when the Heffernan--Tawn model estimates only a very few non-zero probabilities. In general, estimation for sets closer to one of the axes is better when the dependence is lower. This seems natural as when dependence is high, few if any points in a dataset will be observed near the axes. Both models perform poorly under strong dependence for sets near the axes. Future work could explore whether a more sophisticated implementation of our approach, such as allowing different $f_V$, $\|\cdot\|_m$, or changing the censoring scheme, improves this. 

\begin{table}[t]
\centering

\caption{RMSEs of non-zero log-probabilities and number of zero estimated probabilities for the new model (New) and Heffernan and Tawn model (HT) for dependence structures (i)--(iii).}
\begin{tabular}{|l|l|lllll|lllll|}\hline
&&\multicolumn{5}{|c|}{RMSE}&\multicolumn{5}{c|}{Number of zeroes}\\\hline
Dep. / Method & Level & Set 1 & Set 2 & Set 3 & Set 4 & Set 5 & Set 1 & Set 2 & Set 3 & Set 4 & Set 5\\ \hline
(i) / New & 1 & 0.47 & 0.39 & 0.33 & 0.28 & 0.095 & 1 & 0 & 0 & 0 & 0 \\ 
&  2 & 1.70 & 1.00 & 0.71 & 0.52 & 0.023 & 0 & 0 & 0 & 0 & 0 \\ 
&  3 & 5.30 & 4.30 & 3.30 & 1.90 & 0.0011 & 41 & 2 & 0 & 0 & 0 \\ 
&  4 & 13.00 & 11.00 & 6.90 & 8.90 & 0.0009 & 95 & 97 & 85 & 62 & 0 \\ 
\hline
(i) / HT &  1 & 1.70 & 1.40 & 1.40 & 0.69 & 0.17 & 45 & 19 & 8 & 0 & 0 \\ 
&  2 & 1.10 & 1.10 & 1.00 & 1.50 & 0.033 & 98 & 87 & 57 & 18 & 0 \\ 
&  3 & -- & 4.40 & 3.70 & 2.00 & 0.02 & 100 & 99 & 99 & 89 & 0 \\ 
&  4 & -- & -- & -- & -- & 0.018 & 100 & 100 & 100 & 100 & 0 \\ 
\hline
(ii) / New&  1 & 0.25 & 0.15 & 0.13 & 0.10 & 0.16 & 0 & 0 & 0 & 0 & 0 \\ 
&  2 & 0.53 & 0.27 & 0.24 & 0.19 & 0.11 & 0 & 0 & 0 & 0 & 0 \\ 
&  3 & 2.50 & 1.30 & 0.66 & 0.41 & 0.043 & 5 & 0 & 0 & 0 & 0 \\ 
&  4 & 6.90 & 5.20 & 3.60 & 1.90 & 0.0041 & 20 & 11 & 7 & 0 & 0 \\ 
 \hline
 (ii) / HT & 1 & 1.60 & 0.93 & 0.40 & 0.29 & 0.31 & 16 & 4 & 0 & 0 & 0 \\ 
&   2 & 0.90 & 1.30 & 1.30 & 0.55 & 0.20 & 56 & 20 & 1 & 0 & 0 \\ 
&  3 & 2.10 & 0.92 & 1.20 & 1.20 & 0.067 & 94 & 73 & 30 & 1 & 0 \\ 
&  4 & -- & -- & -- & 1.20 & 0.02 & 100 & 100 & 100 & 82 & 0 \\  \hline 

(iii) / New & 1 & 0.24 & 0.17 & 0.14 & 0.13 & 0.20 & 0& 0& 0& 0& 0\\ 
&  2 & 0.52 & 0.38 & 0.29 & 0.18 & 0.17 & 0& 0& 0& 0& 0\\ 
&  3 & 1.20 & 0.75 & 0.56 & 0.35 & 0.095 & 1 & 0& 0& 0& 0\\ 
&  4 & 3.60 & 2.10 & 1.40 & 0.88 & 0.016 & 19 & 3 & 0& 0& 0\\ \hline
(iii) / HT & 1 & 1.70 & 0.86 & 0.41 & 0.29 & 0.37 & 13 & 2 & 0& 0& 0\\ 
&  2 & 1.20 & 1.40 & 0.79 & 0.33 & 0.21 & 51 & 15 & 1 & 0 & 0 \\ 
&  3 & 2.70 & 1.30 & 1.40 & 1.00 & 0.10 & 93 & 69 & 27 & 1 & 0 \\ 
&  4 & -- & -- & 2.30 & 1.50 & 0.021 & 100 & 100 & 96 & 54 & 0 \\
 \hline
\end{tabular}
\label{tab:allresults}
\end{table}

As a diagnostic for the model fit, we also consider the extremal dependence functions $\chi(u)$, defined in Section~\ref{sec:Transition}, and $\bar{\chi}(u):=2\log (1-u)/\log\{\Prob(F_1(Z_1)>u,F_2(Z_2)>u)\} - 1$ for $u\in(0.9,0.999)$ \citep{Coles99}. As $u\to 1$, $\chi(u) \to \chi$, as in~\eqref{eq:chi}, whilst $\bar{\chi}(u) \to \bar{\chi}=2\eta-1 \in [-1,1]$. The value of $\chi$ thus gives some discrimination between different asymptotically dependent copulas, whilst $\bar{\chi}$ can discriminate between different asymptotically independent copulas. As functions of $u$, $\chi(u)$ and $\bar{\chi}(u)$ are useful for checking model fits under either dependence scenario. Figure~\ref{fig:ChiAll} displays pointwise medians and 90\% confidence intervals of $\chi(u), \bar{\chi}(u)$ for each dependence structure and for both methods of inference. Small biases of the new model are typically offset by lower variability and better performance away from the diagonal, i.e., away from the region on which $\chi(u)$ and $\bar{\chi}(u)$ focus.

\begin{figure}[htbp]
\centering
\includegraphics[width=0.95\textwidth, angle=90]{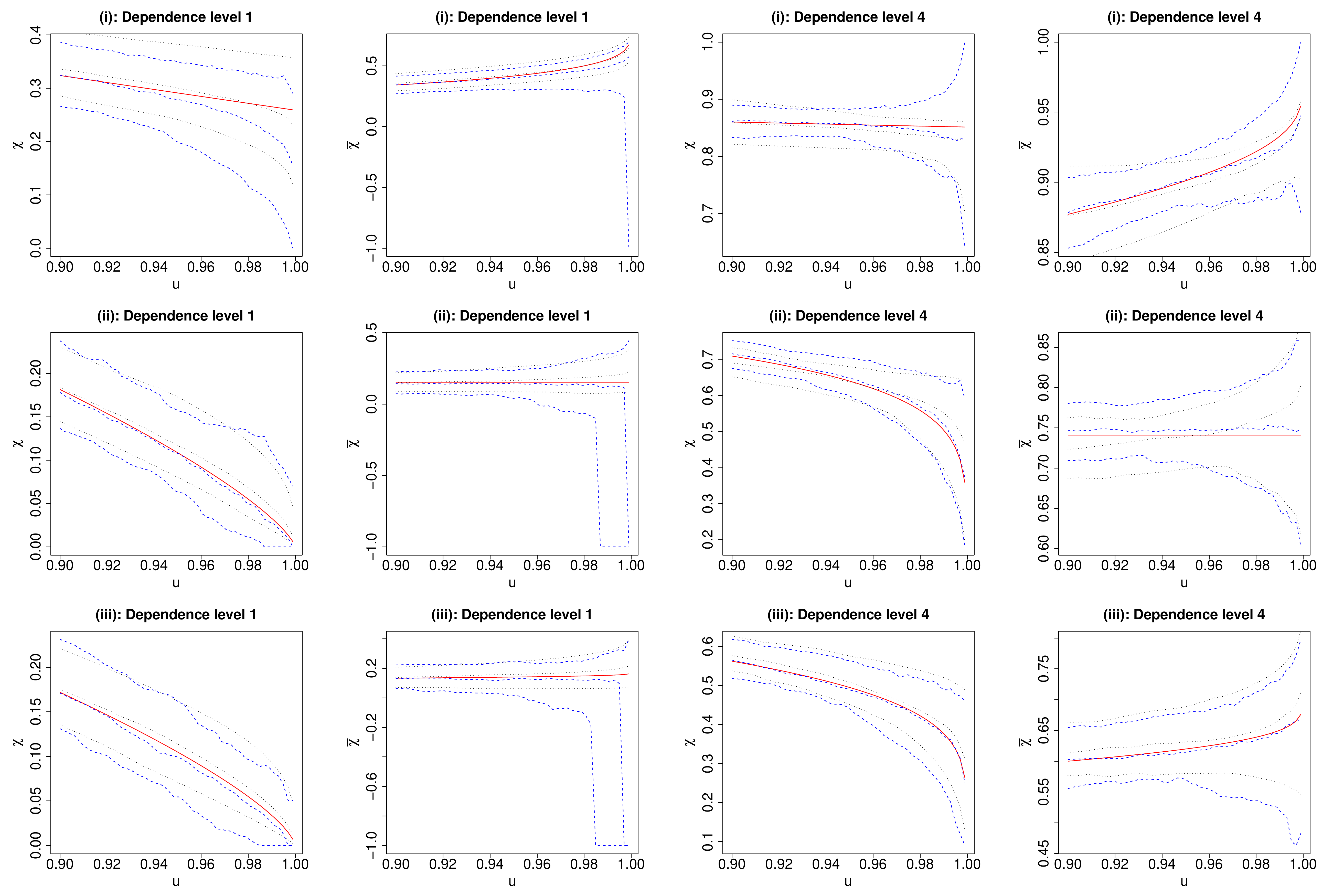}
\caption{Estimates of $\chi(u)$ (left) and $\bar{\chi}(u)$ (right) for dependence levels 1 and 4 of dependence structures (i)--(iii) using the new model (dotted lines) and the Heffernan--Tawn model (dashed lines). The three lines represent pointwise means and upper 95\% and lower 5\% quantiles of the 100 repetitions. Red solid line: true value for the copula. The dependence structures and levels are given as the figure title.}
\label{fig:ChiAll}
\end{figure}

\section{Environmental application}
\label{sec:Applications}
We consider an oceanographic dataset comprising measurements of wave height, surge and wave period recorded at Newlyn, U.K., filtered to correspond to a 15-hour time window for approximate temporal independence, and previously analyzed by \citet{ColesTawn94}, \citet{Bortot00} and \citet{ColesPauli02}. \citet{ColesTawn94} noted the presence of seasonality, which was not taken into account in their, or subsequent, analyses; for ease of comparison we also ignore it. \citet{ColesTawn94} used an asymptotically dependent model for these data, whilst \citet{Bortot00} used an asymptotically independent Gaussian tail model. \citet{ColesPauli02} employed a mixture-type model, able to encompass both dependence types, with asymptotic dependence arising at a boundary point. The literature appears to have reached a consensus that there is strong, but not overwhelming, evidence for asymptotic dependence between wave height and surge, and fairly strong evidence for asymptotic independence between the other two pairs.

Here we fit the simple symmetric model~\eqref{eq:modchoice}, with dependence threshold $u=0.95$ in likelihood~\eqref{eq:lhood}. Marginal transformations to uniformity were carried out using the semiparametric procedure of \citet{ColesTawn91} described in Section~\ref{sec:Likelihood}, but the dependence parameter estimates were almost the same using the fully empirical marginal transformation.

Table~\ref{tab:wave} gives maximum likelihood estimates and confidence intervals for the dependence parameters. The estimate $\hat{\lambda}=0.54$ suggests asymptotic dependence between wave height and surge, whilst the values  $\hat{\lambda}=-0.21$  and $\hat{\lambda}=-0.43$ indicate asymptotic independence for the pairs involving period; this is supported by the confidence intervals. The likelihood surfaces plotted in Figure~\ref{fig:lhood} show that the parameters are identifiable and give an appreciation of the joint asymptotic confidence regions. Figure~\ref{fig:wavechi} shows the empirical and fitted functions $\chi(u)$ and $\bar{\chi}(u)$, which suggest a reasonable fit to the data. Fits from the Heffernan--Tawn model are also displayed, conditioned on each variable in turn, and show potential discrepancies in the inferred strength of the dependence; by having only a single model, we can avoid such discrepancies and the need to decide which variable should be chosen for conditioning upon.

\begin{table}[t]
\begin{center}
\caption{Maximum likelihood estimates and  95\% profile likelihood confidence intervals for $(\lambda,\alpha)$, for the pairs of oceanographic variables.}
\begin{tabular}{lccc}
  \hline
 &  Height--Surge & Height--Period &  Surge--Period  \\ 
  \hline
$\hat{\lambda}$ & $0.54$ $(0.26,0.90)$ & $-0.21$ $(-0.40,-0.03)$ & $-0.43$ $(-0.66,-0.16)$ \\ 
 $\hat{\alpha}$& $0.58$ $(0.40,0.81)$ & $2.21$ $(1.58,3.06)$& $0.68$ $(0.50,0.90)$ \\ 
   \hline
\end{tabular}
   \label{tab:wave}
\end{center}
\end{table}

\begin{figure}[p]
\centering
\includegraphics[width=0.3\textwidth]{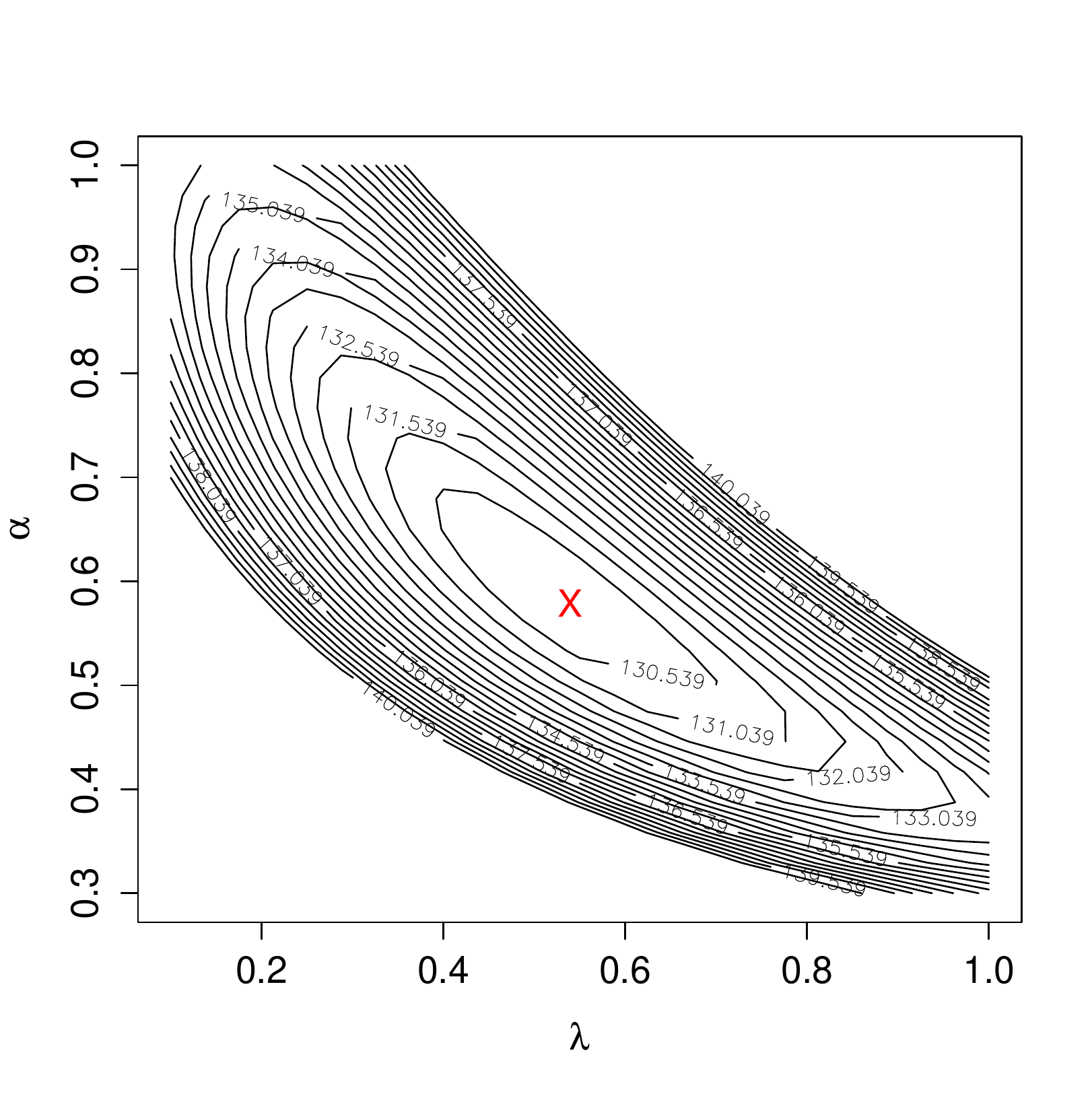}
\includegraphics[width=0.3\textwidth]{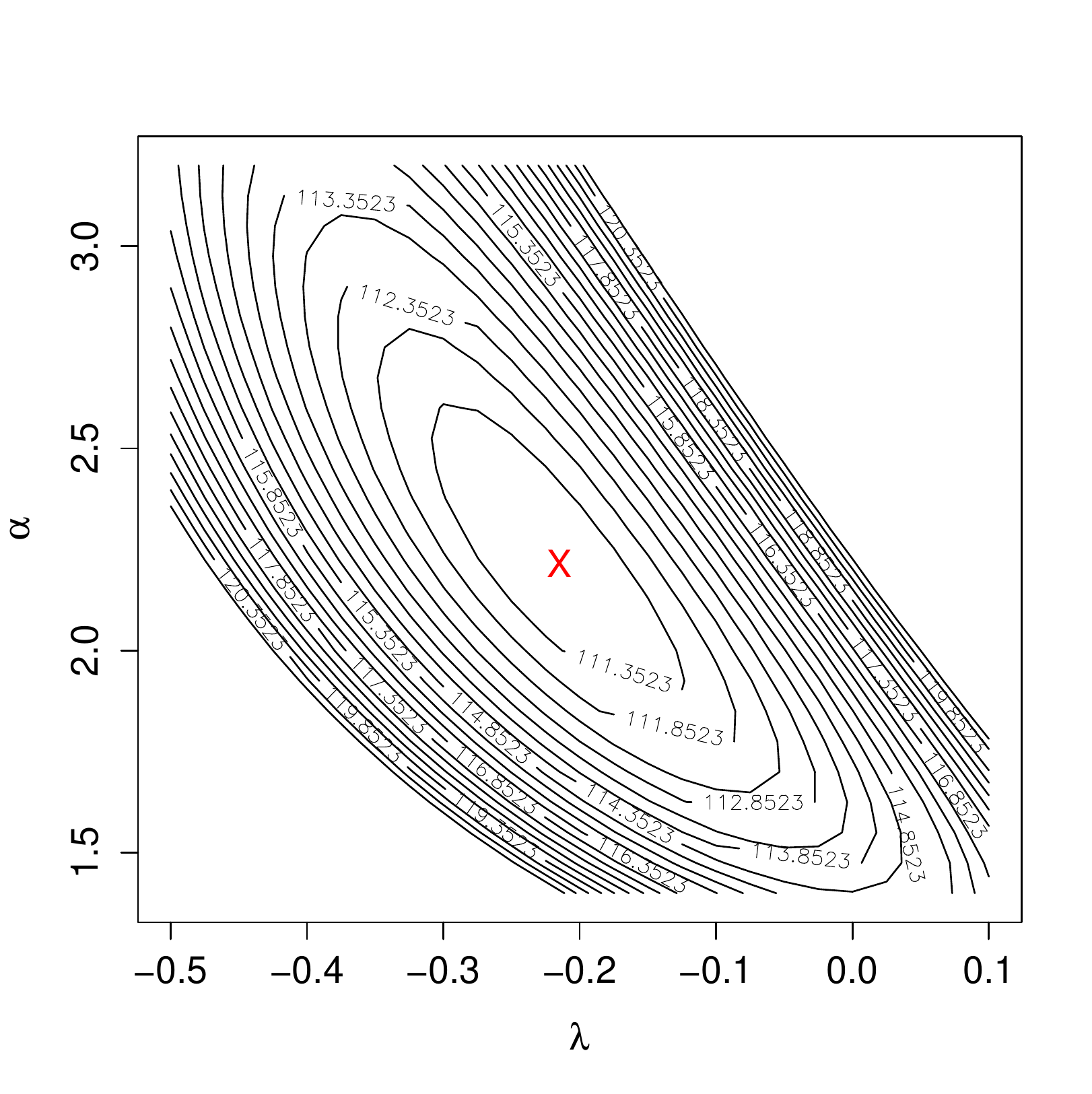}
\includegraphics[width=0.3\textwidth]{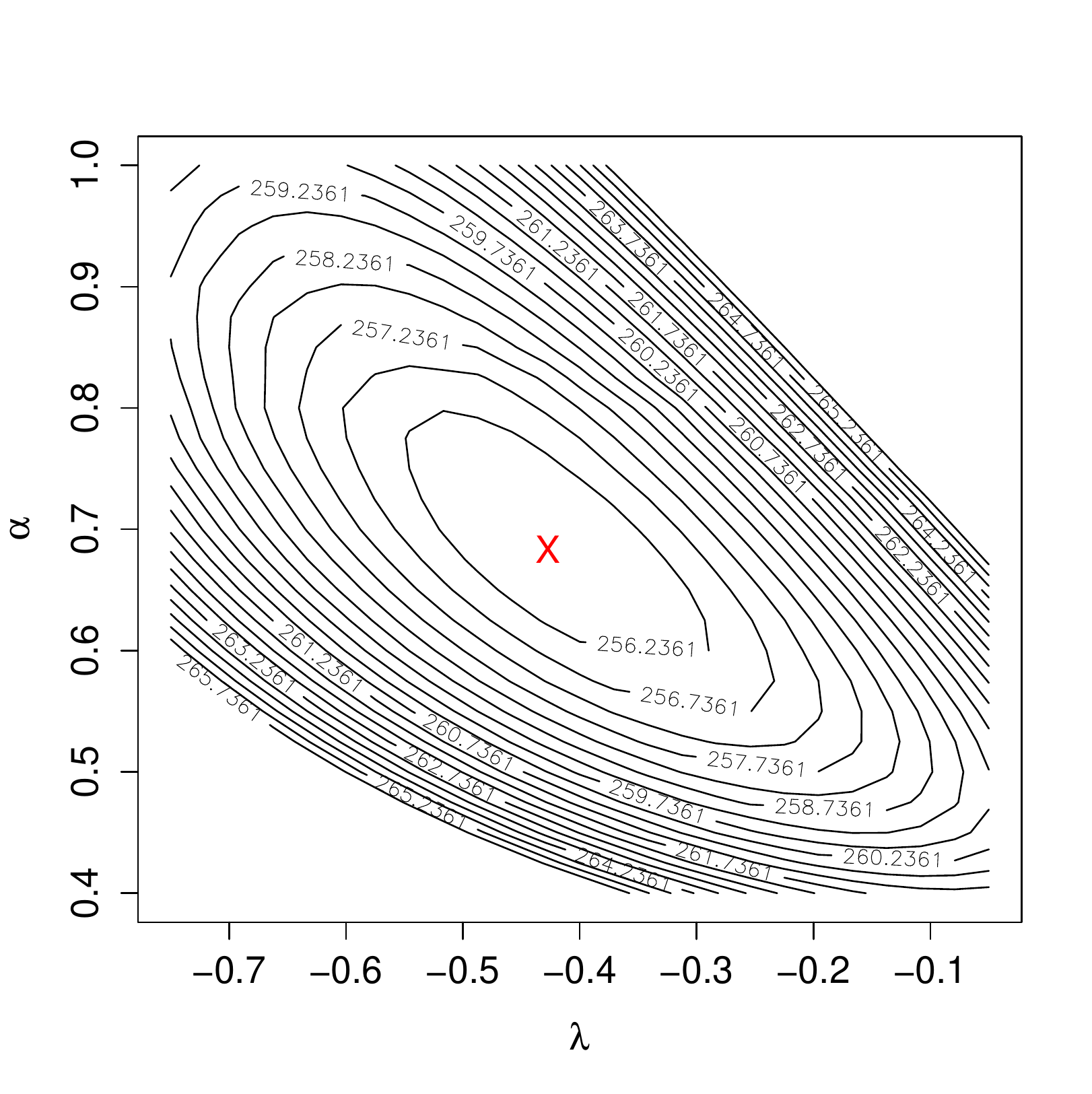}

\caption{Negative log-likelihood surfaces for height--surge, height--period and surge--period, respectively. Contours are in steps of 0.5. Crosses show maximum likelihood estimates.}
\label{fig:lhood}
\end{figure}

A further diagnostic is presented in Figures~\ref{fig:WaveRW}~(a) and~(b), where ``fitted'' values of $\hat{S}=\max(\hat{A},\hat{B})$, and $\hat{V}=\hat{A}/(\hat{A}+\hat{B})$ are plotted for the pairs height--surge and period--surge, on a uniform scale. Plots for height--period are similar to those for period--surge, and hence are omitted. Here $(\hat{A},\hat{B}) = [\hat{F}_A^{-1}\{\tilde{F}_1(Z_1)\},\hat{F}_B^{-1}\{\tilde{F}_2(Z_2)\}]$, where $\hat{F}_A=\hat{F}_B$ is the fitted common pseudo-marginal distribution, and $\tilde{F}_1,\tilde{F}_2$ are the estimated true marginals. Points are plotted corresponding to $(\hat{S},\hat{V})$ where $\hat{S}$ exceeds its 90\% quantile. A lack of discernible patterns in Figures~\ref{fig:WaveRW}~(a) and (b) suggests that independence of $S$ and $V$ is a reasonable approximation. For comparison, Figures~\ref{fig:WaveRW}~(c) and~(d) show equivalent plots with $\max(X_P,Y_P)$ and $X_P/(X_P+Y_P)$ on a uniform scale, $(X_P,Y_P) = [\{1-\tilde{F}_1(Z_1)\}^{-1},\{1-\tilde{F}_2(Z_2)\}^{-1}]$; this would be the approach to modelling under asymptotic dependence \citep{ColesTawn91}. The patterns in Figure~\ref{fig:WaveRW}~(c) suggest that asymptotic dependence is plausible, but that a higher threshold is required for independence of $\max(X_P,Y_P)$ and $X_P/(X_P+Y_P)$. Figure~\ref{fig:WaveRW}~(d) shows that these variables would be dependent at any finite threshold.

\begin{figure}[p]
\centering
\includegraphics[width=0.25\textwidth]{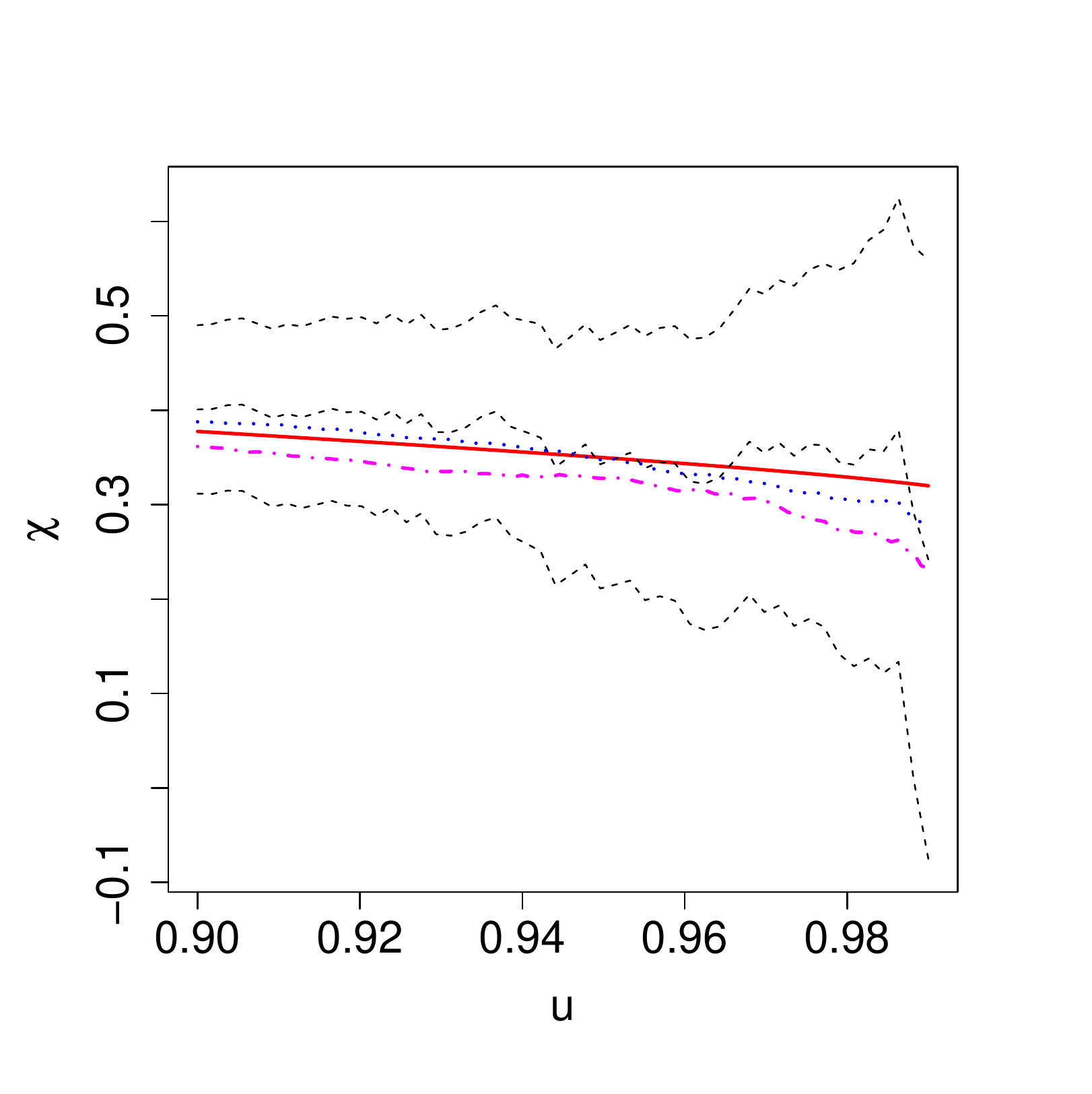}
\includegraphics[width=0.25\textwidth]{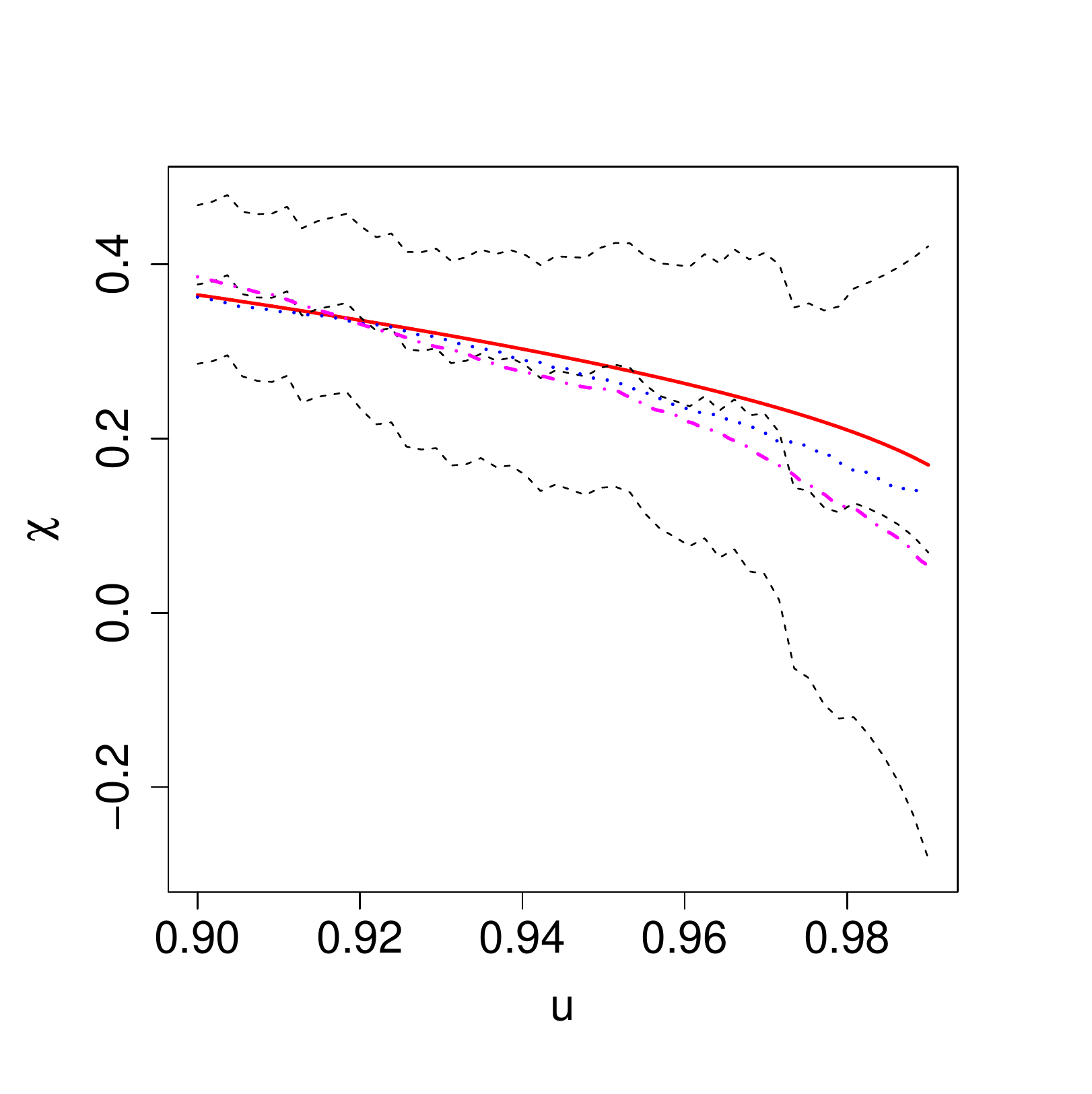}
\includegraphics[width=0.25\textwidth]{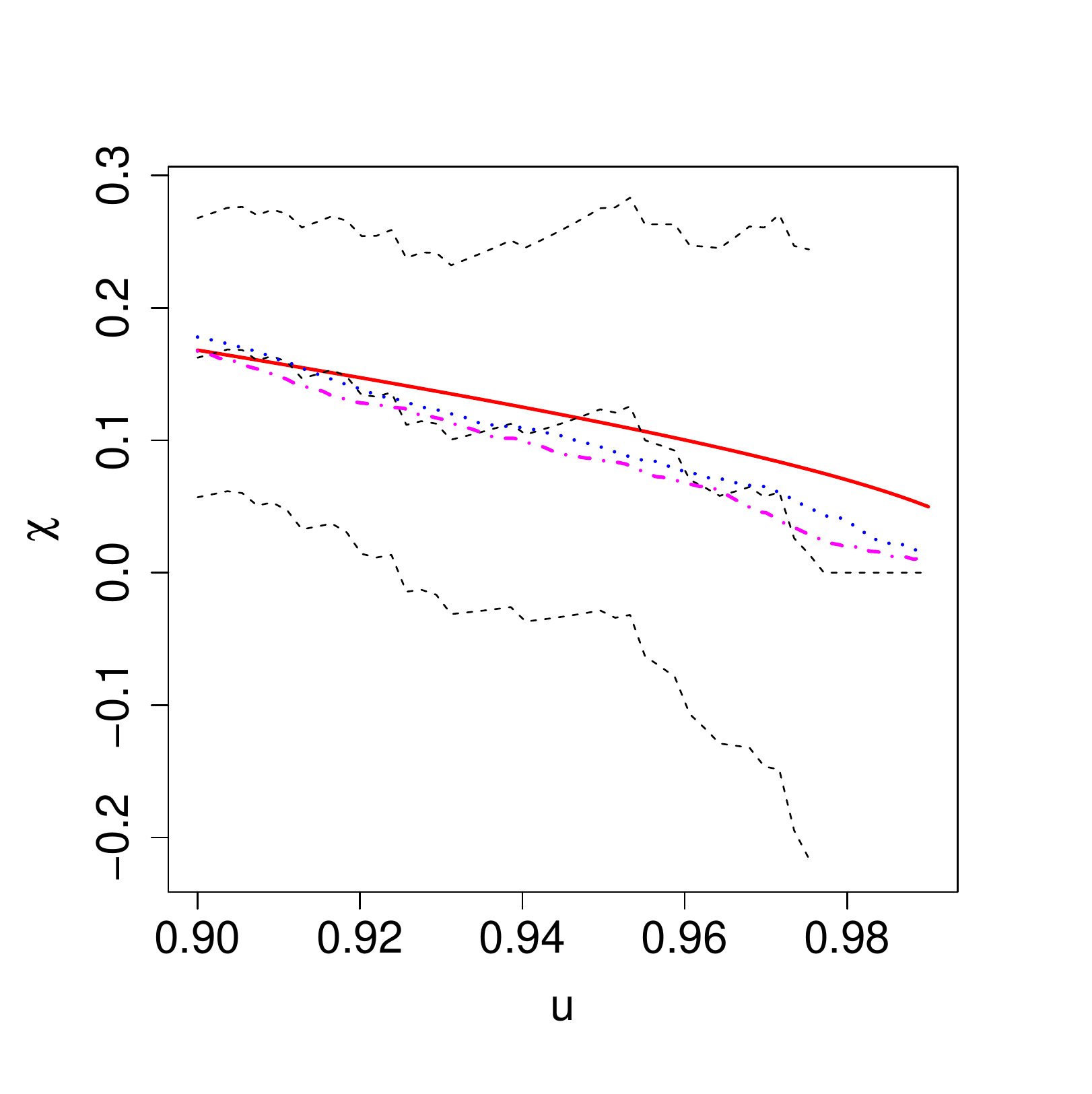}\\
\includegraphics[width=0.25\textwidth]{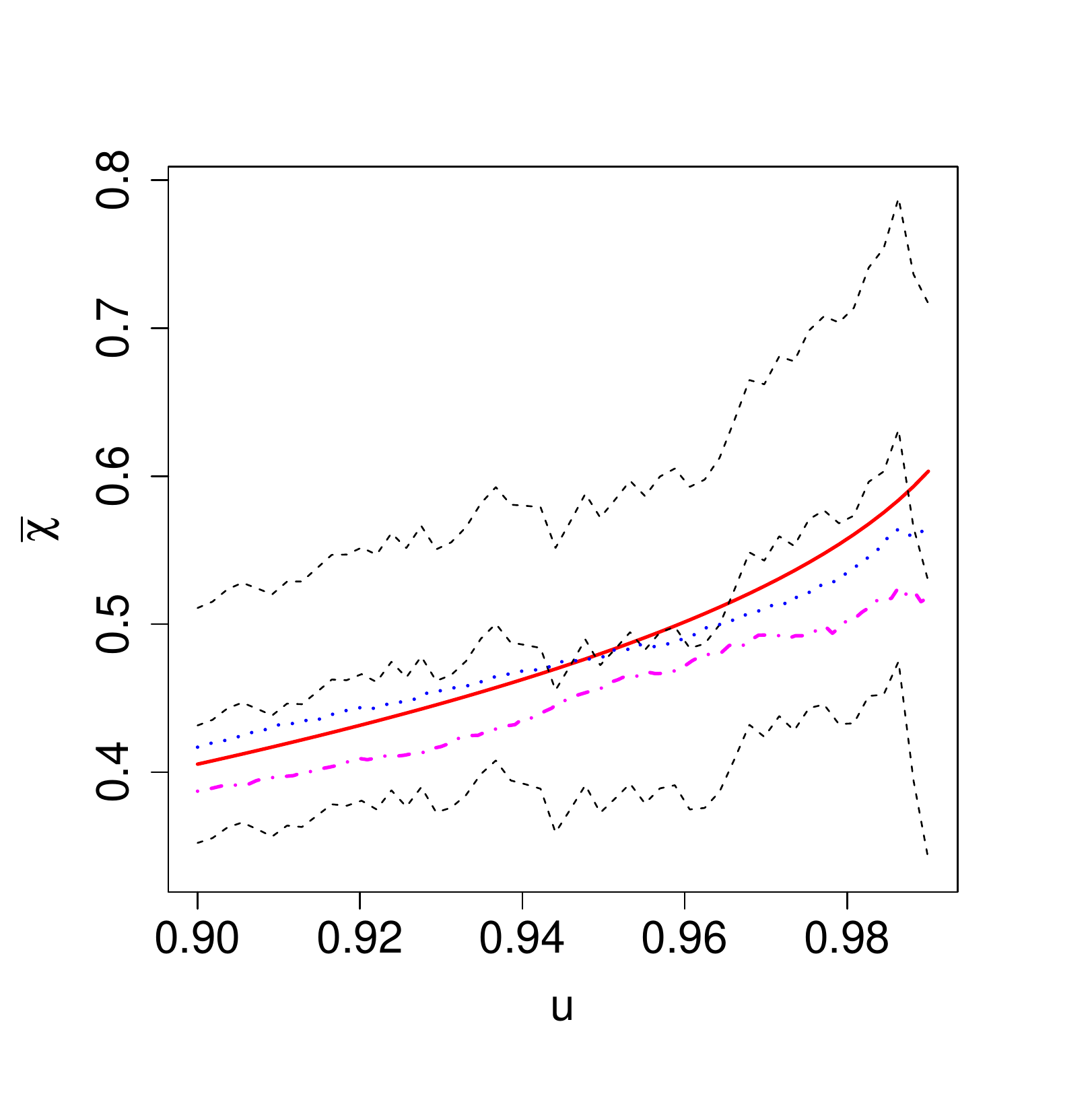}
\includegraphics[width=0.25\textwidth]{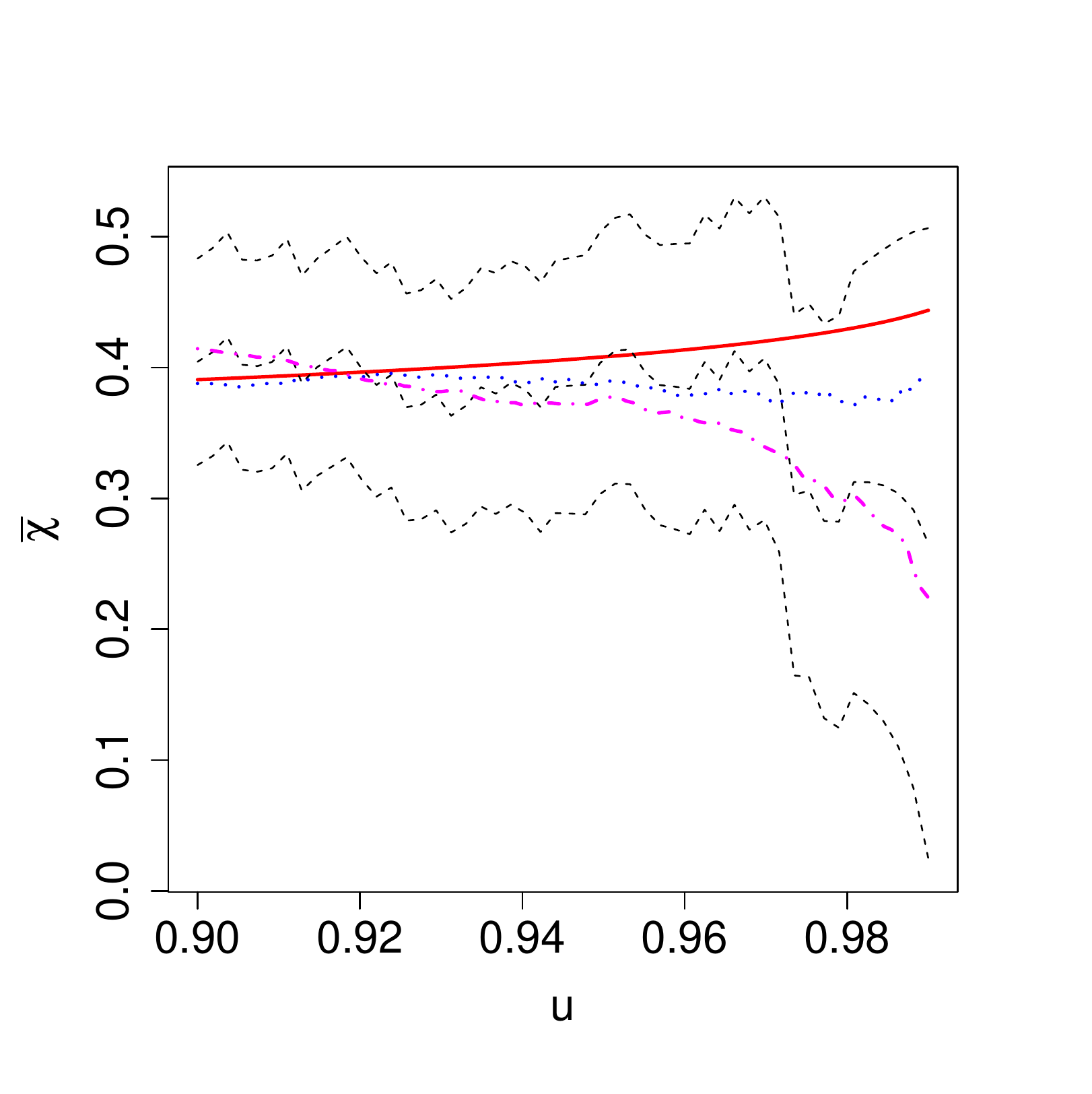}
\includegraphics[width=0.25\textwidth]{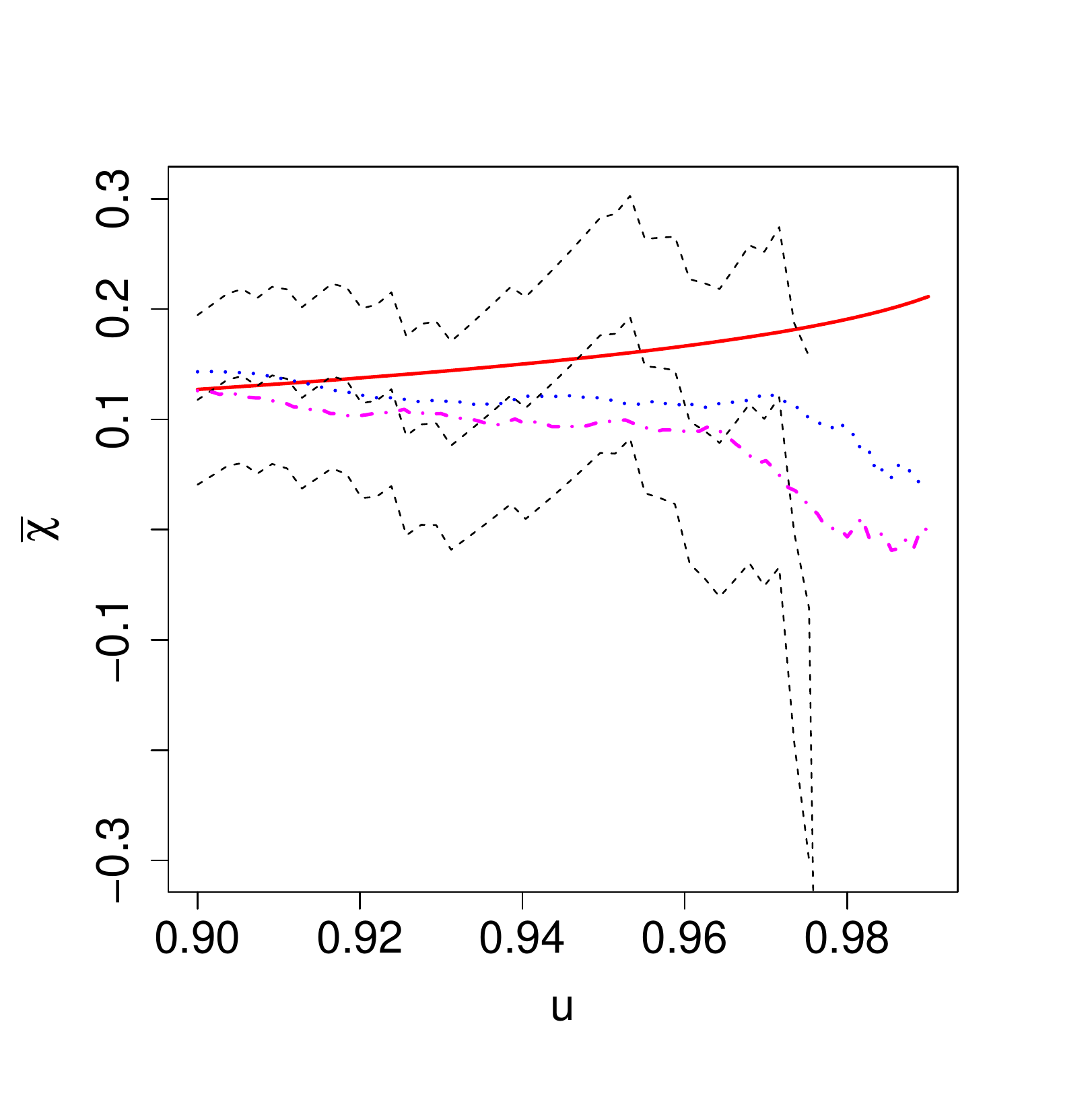}

\caption{Empirical (dashed lines, with approximate 95\% pointwise confidence intervals) and fitted (solid line) estimates of $\chi(u)$ (top row) and $\bar{\chi}(u)$ (bottom row), for $u\in(0.9,0.99)$. From left to right the pairs are height--surge, height--period and period--surge. Dot-dash lines: Heffernan--Tawn fit conditioning on the first variable of the pair. Dotted lines: Heffernan--Tawn fit conditioning on the second variable of the pair.}
\label{fig:wavechi}
\end{figure}

\begin{figure}[p]
\centering
\includegraphics[width=0.2\textwidth]{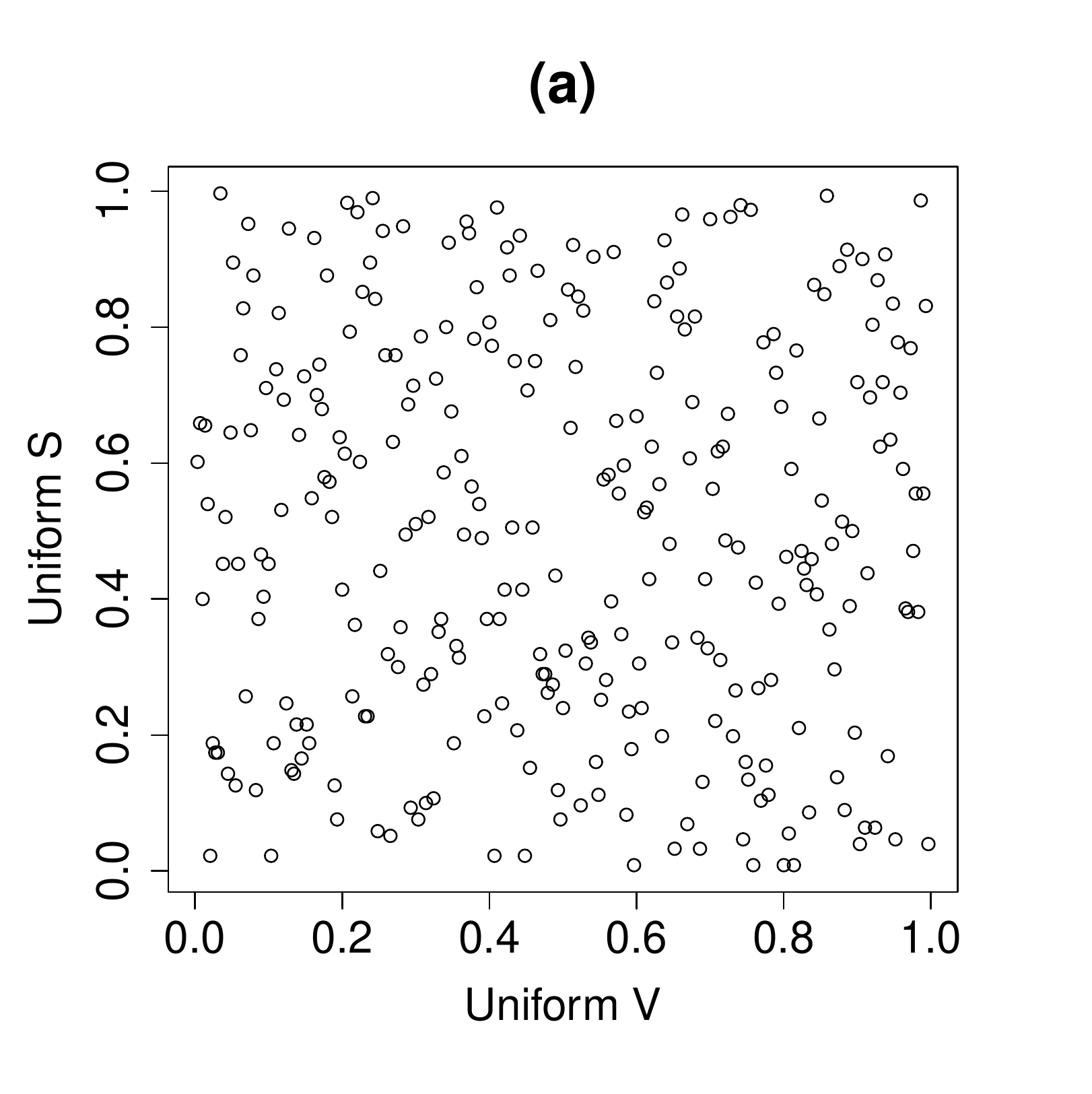}
\includegraphics[width=0.2\textwidth]{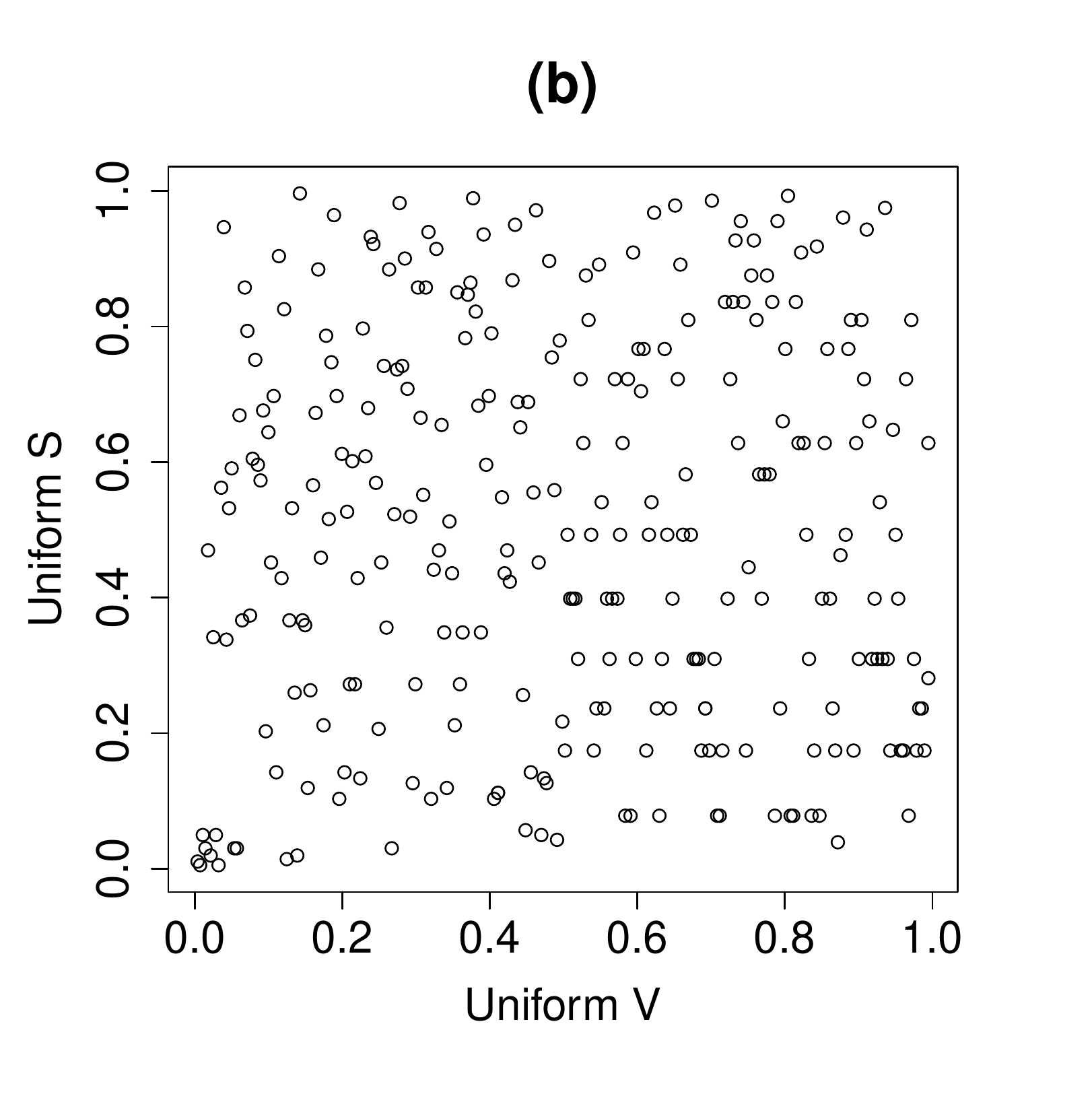}
\includegraphics[width=0.2\textwidth]{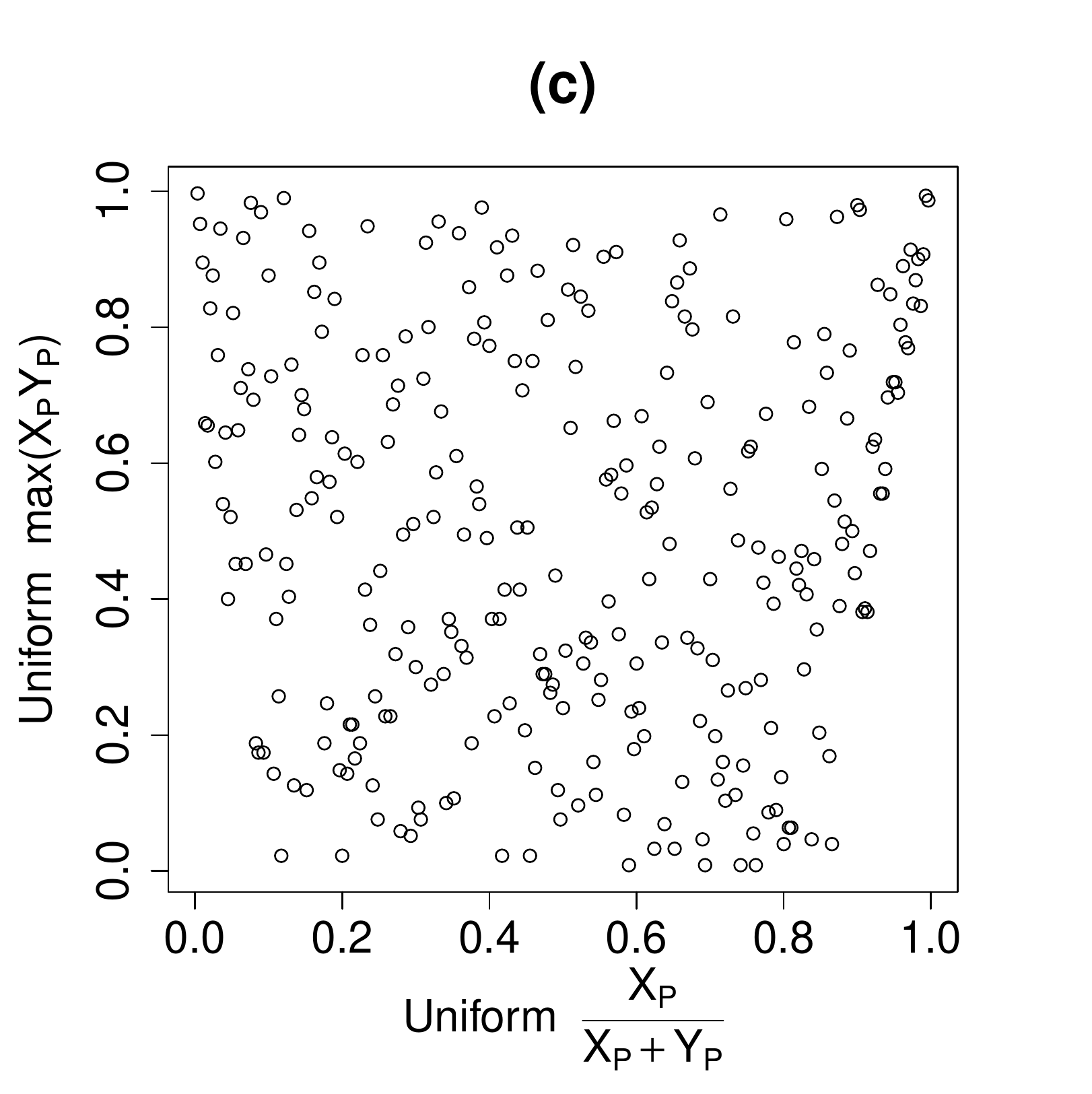}
\includegraphics[width=0.2\textwidth]{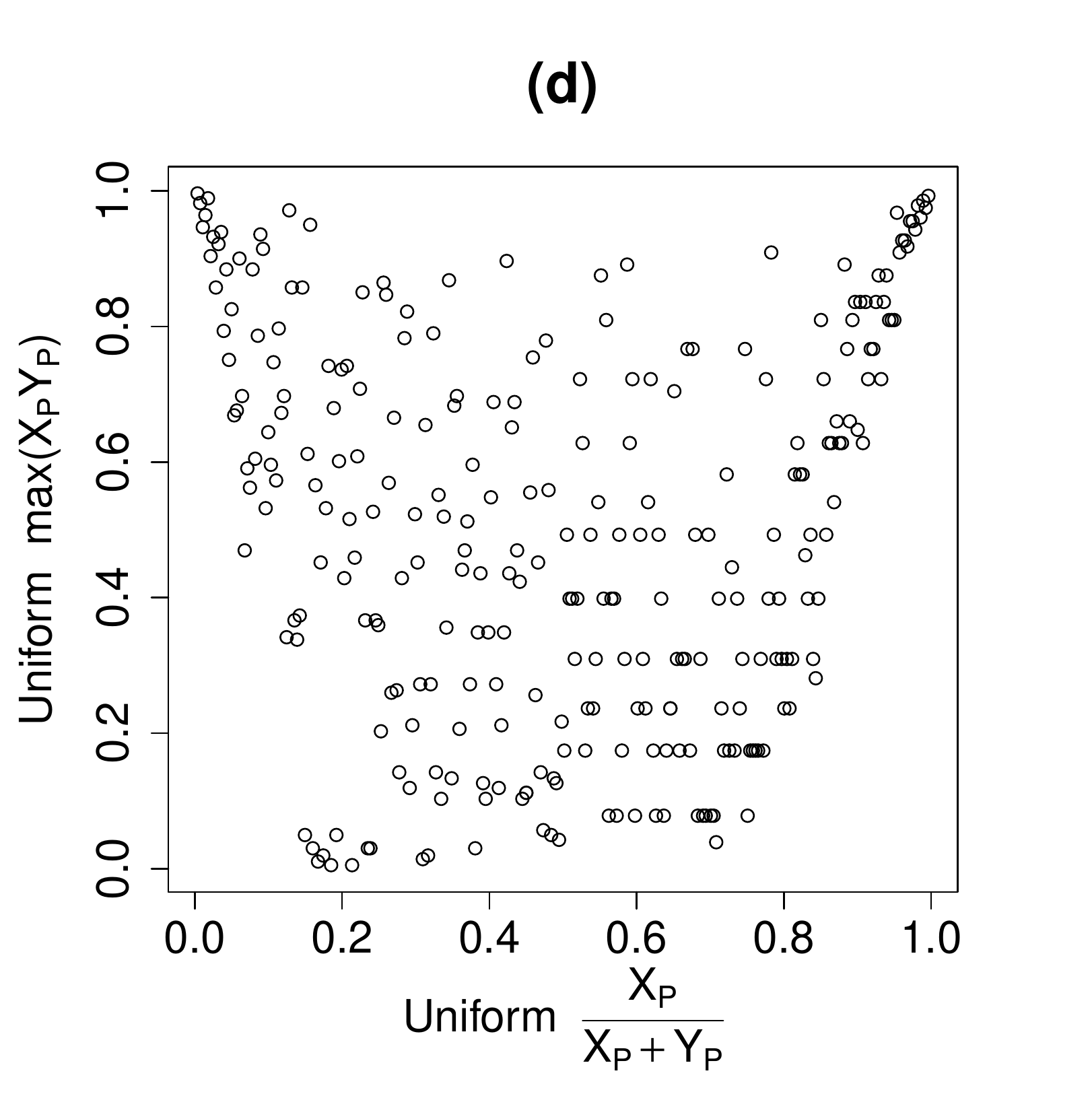}
\caption{Fitted $\hat{S}$ and $\hat{V}$, on a uniform scale, for (a) height--surge, and (b) period--surge. For comparison, $\max(X_P,Y_P)$ and $X_P/(X_P+Y_P)$ defined from the variables transformed to a standard Pareto scale are given for the same pairs in (c) and (d), respectively. Points which are aligned on the $\hat{S}$ / $\max(X_P,Y_P)$ axis are due to rounding of the data.}
\label{fig:WaveRW}
\end{figure}

\section{Extensions and discussion}
\label{sec:Discussion}

We have provided an alternative limit representation for bivariate extremes, which motivates a statistical model that can capture a wide spectrum of asymptotically dependent and asymptotically independent behaviour. An obvious question concerns extensions to higher dimensions. Assumption~\eqref{eq:RWgen} is indeed simple to extend to the multivariate case: in some common margins, $F$, the vector of positive random variables $\bm{X}=(X_1,\ldots,X_k) = [F^{-1}\{F_1(Z_1)\},\ldots, F^{-1}\{F_k(Z_k)\}]$ satisfies
\begin{align}
\lim_{t\to\infty} \Prob\left( \left. \frac{\bm{X}}{\sum_{i=1}^k X_i}\leq \bm{w}, \|\bm{X}\|_*> a(t)r + b(t) ~\right|~  \|\bm{X}\|_*>  b(t) \right) = J(\bm{w})\bar{K}(r),\quad r\geq 0, \label{eq:RWMV}
\end{align}
at continuity points of $J$, with $J$ placing mass on the interior of $\mathcal{S}_{k-1}^1 = \{\bm{w}\in\mathbb{R}^k_+ :\|\bm{w}\|_1 = 1\}$, and $\bar{K}$ as in~\eqref{eq:gp}. This is a more general assumption than multivariate regular variation, the $k$-dimensional extension of~\eqref{eq:RWAD}, that underpins much of classical multivariate extreme value theory \citep{deHaandeRonde98}.

However, the practical applicability of assumption~\eqref{eq:RWMV} in higher dimensions is more limited than in the bivariate case. The assumption that the distribution of $\bm{W}:=\bm{X}/\sum_{i=1}^k X_i$ has mass on the interior of $\mathcal{S}_{k-1}^1$ requires a certain regularity in the multivariate dependence structure, which is present in many theoretical examples, such as in the multivariate extensions of Examples~\ref{eg:mvrv}--\ref{eg:elliptical}, but often absent in datasets. For example, the data analyzed in Section~\ref{sec:Applications} exhibited asymptotic dependence between one pair of variables, but asymptotic independence between the other two pairs. The only existing model which can handle this is that of \citet{HeffernanTawn04}. However there are obvious issues with the curse of dimensionality when using a semiparametric model for higher dimensions. The simulation study in Section~\ref{sec:Inference} demonstrated a tendency for the semiparametric distribution estimator not to cover all parts of the plane, and this drawback would be exacerbated in higher dimensions.

We have assumed throughout that the radial variable $R = \|(X,Y)\|_*$ is defined by a norm, following the development of much of classical multivariate extreme value theory. In fact the convexity property does not appear necessary, and some recent articles on multivariate extremes have shifted focus on to positive homogeneous functions rather than norms \citep[e.g.][]{DombryRibatet15, SchefflerStoev15}. For our model the convexity property of $\|\cdot\|_m$ was used in some of the derivations; further work could explore more deeply the consequences of relaxing this assumption.

A simple extension to the practical modelling introduced in Sections~\ref{sec:Inference} and~\ref{sec:Applications} is to allow an asymmetric dependence structure. Our theoretical results in Section~\ref{sec:Model} already cover this scenario, but for simplicity of implementation we assumed the distribution of $V$ to be symmetric, so that the pseudo-marginals of $A,B$ were equal. As noted in Remark~\ref{rmk:1}, the implied $H$ incorporates the necessary moment constraint for any $F_V$.

In essence our approach is intermediate between assuming multivariate regular variation and the approach of \citet{HeffernanTawn04}. With the former, both the marginal distribution and the form of the normalization of each marginal variable, i.e., $\bm{X}_P/\|\bm{X}_P\|$, are fixed. This is restrictive, but allows for simpler characterization of the consequences of the assumption. With the latter, the margins are fixed to be of exponential type, but the form of the normalization of each marginal variable, $\{\bm{X}_E-\bm{b}(Y_E)\}/\bm{a}(Y_E)$, is not fixed. This permits great flexibility in the variety of distributions that satisfy the assumption, but leaves $k$ possible limits, each with $2(k-1)$ parameters to estimate, and a $(k-1)$-dimensional empirical distribution. Our main assumption does not fix the form of the margins, but does fix the form of the normalization of the variables $\bm{X}/\|\bm{X}\|_*$. This offers greater flexibility than multivariate regular variation, and although less flexible than the model of \citet{HeffernanTawn04} has the benefit of giving only a single limit. In the bivariate case, model~\eqref{eq:mod1}, inspired by~\eqref{eq:RWgen}, permits inference across both extremal dependence classes, with a smooth transition between them.

\subsubsection*{Acknowledgements}
This work was undertaken whilst JLW was based at EPFL and the University of Cambridge. We thank the Swiss National Science Foundation for funding, and the referees and associate editor for comments that have greatly improved the work.

\appendix
\section{Auxiliary results and proofs}
\label{sec:AuxiliaryResults}
\subsection{Link between~\eqref{eq:RWgen} and~\eqref{eq:dfcondn}}
\label{sec:EqConv}
\begin{prop}
 Let $W=X/(X+Y)$, $R=\|(X,Y)\|_*$, and assume that $W$ and $R$ have a joint density. Further assume $R$ to be in the domain of attraction of a generalized Pareto distribution, with normalization functions $a(t)>0$, $b(t)$. Then, provided that the limit on the right exists,
\begin{align*}
\lim_{t\to\infty} \Prob\left\{W\leq w, R > a(t)r + b(t) \mid R >  b(t) \right\} = J(w)\bar{K}(r)
\quad\Leftrightarrow\quad \lim_{t\to\infty} \Prob\{W\leq w \mid R = b(t)\} = J(w).
\end{align*}
\end{prop}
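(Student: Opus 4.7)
The substantive direction is $\Leftarrow$: passing from the conditional distribution of $W$ given $R=b(t)$ to the joint tail probability. My plan is to represent
\begin{equation*}
\Prob\{W\leq w,\,R>a(t)r+b(t)\} = \int_{a(t)r+b(t)}^{\infty} \Prob\{W\leq w\mid R=s\}\,f_R(s)\,\dsp s,
\end{equation*}
and exploit the fact that the first factor of the integrand is close to $J(w)$ once $s$ is large.

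First I would promote the assumed sequential limit to a continuous one: since $b(t)=F_{R}^{-1}(1-1/t)$ is monotonic and (because $R$ has a density) continuous, every $s$ sufficiently close to $r^F:=\sup\{r:F_R(r)<1\}$ can be written as $b(t)$ with $t=1/\{1-F_R(s)\}$, so the hypothesis yields $\Prob\{W\leq w\mid R=s\}\to J(w)$ as $s\to r^F$. Given $\epsilon>0$, choose $s_\epsilon$ such that $\bigabs{\Prob\{W\leq w\mid R=s\} - J(w)}<\epsilon$ for all $s>s_\epsilon$. Once $t$ is large enough that $a(t)r+b(t)>s_\epsilon$, the integral representation gives
\begin{equation*}
\bigabs{\Prob\{W\leq w,\,R>a(t)r+b(t)\} - J(w)\,\Prob\{R>a(t)r+b(t)\}} \leq \epsilon\,\Prob\{R>a(t)r+b(t)\}.
\end{equation*}
Dividing through by $\Prob\{R>b(t)\}$ and invoking the generalized Pareto domain of attraction, $\Prob\{R>a(t)r+b(t)\}/\Prob\{R>b(t)\}\to\bar K(r)$, I obtain
\begin{equation*}
\limsup_{t\to\infty}\bigabs{\Prob\{W\leq w,\,R>a(t)r+b(t)\mid R>b(t)\} - J(w)\bar K(r)} \leq \epsilon\bar K(r),
\end{equation*}
and letting $\epsilon\downarrow 0$ closes the argument.

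The reverse implication then follows by uniqueness. If the right-hand limit $\tilde J(w):=\lim_{t\to\infty}\Prob\{W\leq w\mid R=b(t)\}$ exists, then applying the $\Leftarrow$ direction just established identifies the joint limit as $\tilde J(w)\bar K(r)$; matching this with the hypothesised $J(w)\bar K(r)$ at any $r$ for which $\bar K(r)\neq 0$ forces $\tilde J(w)=J(w)$.

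The main delicate step is the uniform-in-$s$ control $\bigabs{\Prob\{W\leq w\mid R=s\}-J(w)}<\epsilon$ on the entire domain of integration $(a(t)r+b(t),\infty)$. This is immediate from the definition of the limit as $s\to r^F$ once monotonicity of $b$ converts the sequential hypothesis into a continuous one; no rate of convergence is needed because the factor $\Prob\{R>a(t)r+b(t)\}/\Prob\{R>b(t)\}$ tends to the finite $\bar K(r)$, so the $\epsilon$-bound contracts to $\epsilon\bar K(r)$ and vanishes with $\epsilon$. I do not anticipate any other serious obstacle.
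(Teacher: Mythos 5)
Your proof is correct, and it reaches the result by a somewhat different technical route from the paper. The paper's argument is organised around l'H\^opital's rule: the right-hand statement is rewritten as convergence of the ratio of derivatives $\frac{\partial}{\partial b(t)}\Prob\{W\le w, R>b(t)\}\big/\frac{\partial}{\partial b(t)}\Prob\{R>b(t)\}$, the generalized ($0/0$) form of l'H\^opital then gives $\Prob\{W\le w, R>b(t)\}/\Prob\{R>b(t)\}\to J(w)$, and the conditional probability is factorised as the joint tail over the tail of $R$ times the generalized Pareto tail ratio; for the converse, the paper sets $r=0$ and invokes l'H\^opital in the reverse direction, which is legitimate only because of the proviso that the limit on the right exists. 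You replace the first l'H\^opital step by the explicit representation $\Prob\{W\le w, R>u\}=\int_u^\infty \Prob\{W\le w\mid R=s\}f_R(s)\,\dsp s$ plus an $\epsilon$-argument, and you replace the reverse l'H\^opital step by a uniqueness-of-limits argument that recycles your $\Leftarrow$ direction; your use of the existence proviso is exactly parallel to the paper's. Your version is more elementary and, in one respect, more careful: the paper implicitly needs the ratio $\Prob\{W\le w, R>a(t)r+b(t)\}/\Prob\{R>a(t)r+b(t)\}$ — evaluated at the shifted threshold, not at $b(t)$ — to converge to $J(w)$, and your promotion of the sequential hypothesis to a continuous limit in $s$ makes that step explicit. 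Two minor caveats: the inversion $s=b\bigl(1/\{1-F_R(s)\}\bigr)$ requires $F_R$ to have no flat stretch at $s$, which is harmless since such $s$ carry no density mass and drop out of the integral; and you take $b(t)=F_R^{-1}(1-1/t)$, which is indeed the normalisation intended in Section 3.1 of the paper even though the proposition's wording mentions only generic normalizing functions (the paper's own differentiation with respect to $b(t)$ likewise presumes this regularity).
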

\begin{proof}
 Right to left: The statement on the right is equivalent to 
\begin{align*}
\lim_{t\to\infty} \frac{\frac{\partial}{\partial b(t)} \Prob\{W\leq w, R> b(t)\}}{\frac{\partial}{\partial b(t)} \Prob\{R>b(t)\}} = J(w).
\end{align*}
Since both $\lim_{t\to\infty} \Prob\{W\leq w, R>b(t)\}$ and $\lim_{t\to\infty} \Prob\{R>b(t)\}$ equal zero, but the ratio of the derivatives has limit $J(w)$, the general form of l'H\^{o}pital's rule states that
\begin{align*}
\lim_{t\to\infty} \frac{\Prob\{W\leq w, R>b(t)\}}{\Prob\{R>b(t)\}} = J(w).
\end{align*}
Consequently, as $t\to\infty$,
\begin{align*}
 \Prob\left\{\left. W\leq w, R > a(t)r + b(t) ~\right|~  R >  b(t) \right\} 
&= \frac{\Prob\{W\leq w, R > a(t)r + b(t)\}}{\Prob\{R > a(t)r + b(t)\}} \frac{\Prob\{R > a(t)r + b(t)\}}{\Prob\{R > b(t)\}}\rightarrow J(w)\bar{K}(r).
\end{align*}
Left to right: Set $r=0$ in the left-hand statement, yielding
\begin{align*}
\lim_{t\to\infty} \frac{\Prob\left\{ W\leq w, R > b(t) \right\}}{\Prob\{R >  b(t)\}} = J(w)\bar{K}(0),
\end{align*}
and note that $\bar{K}(0) = 1$. Then applying l'H\^{o}pital's rule again provides
\begin{align*}
\lim_{t\to\infty} \frac{\frac{\partial}{\partial b(t)} \Prob\{W\leq w, R>b(t)\}}{\frac{\partial}{\partial b(t)} \Prob\{R>b(t)\}} = J(w).
\end{align*}
\end{proof}

\subsection{Proofs of Propositions~\ref{l>0xyj:prop}--\ref{l<0xyj:prop}}
\label{sec:ProofProps}

We prove Propositions~\ref{l>0xyj:prop}--\ref{l<0xyj:prop}, giving the values of $\chi$, $\eta$ and $\kappa$ claimed in Section~\ref{sec:ExtremalDependence}. The following lemma on inversion of regularly varying functions will be useful throughout.
\begin{lem}
\label{geninv:lem}
Suppose $\Iind>0$ and $\Isvi$ is a slowly varying function such that $s\mapsto s^{-\Iind}\Isvi(s)$ defines a continuous strictly decreasing function from $[s_0,\infty)$ onto $(0,1]$ for some $s_0$. Then we can find a slowly varying function $\Isvo$ defined on $[1,\infty)$ such that $s^{-\Iind}\Isvi(s)=t^{-\Iind}$ whenever $s=t\Isvo^{1/\Iind}(t)$. Furthermore $\Isvo(t)\to c$ as $t\to\infty$ iff $\Isvi(s)\to c$ as $s\to\infty$ (here $c$ can be any value in the extended range $[0,+\infty]$).
\end{lem}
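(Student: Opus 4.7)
The plan is to \emph{define} $\Isvo$ by inverting the stated relation. For each $t\geq 1$, the function $s\mapsto s^{-\Iind}\Isvi(s)$ maps $[s_0,\infty)$ continuously and strictly decreasingly onto $(0,1]$, so there is a unique $s=s(t)\in[s_0,\infty)$ with $s^{-\Iind}\Isvi(s)=t^{-\Iind}$. Set $\Isvo(t):=\Isvi\bigl(s(t)\bigr)$. Then $s^\Iind=t^\Iind\Isvi(s)=t^\Iind\Isvo(t)$, so $s(t)=t\,\Isvo(t)^{1/\Iind}$, which is the required identity. With this definition the task reduces to two things: (a) $\Isvo$ is slowly varying on $[1,\infty)$; (b) the limit $\Isvo(t)\to c$ as $t\to\infty$ holds iff $\Isvi(s)\to c$ as $s\to\infty$.

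For (a), the cleanest route is to recognise $f(s):=s^\Iind/\Isvi(s)$ as regularly varying of index $\Iind>0$, and to apply the standard inversion theorem for regularly varying functions (e.g.\ Bingham--Goldie--Teugels, Theorem~1.5.12): the generalised inverse $f^{-1}$ is regularly varying of index $1/\Iind$, so we can write $f^{-1}(u)=u^{1/\Iind}\ell(u)$ for some slowly varying $\ell$ on $[1,\infty)$. Since $s(t)=f^{-1}(t^\Iind)=t\,\ell(t^\Iind)$, we identify $\Isvo(t)=\ell(t^\Iind)^\Iind$; slow variation of $t\mapsto\ell(t^\Iind)$ follows directly from $\ell(\lambda^\Iind t^\Iind)/\ell(t^\Iind)\to 1$, which is just slow variation of $\ell$ applied along the subsequence $t^\Iind$. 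Raising to the power $\Iind$ preserves slow variation, so $\Isvo$ is slowly varying.

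For (b), the equivalence is essentially bookkeeping: $\Isvo(t)=\Isvi\bigl(s(t)\bigr)$ and the monotonicity assumption forces $s(t)\to\infty$ as $t\to\infty$. Hence if $\Isvi(s)\to c\in[0,\infty]$, then $\Isvo(t)\to c$. Conversely, the map $t\mapsto s(t)$ is a continuous strictly increasing bijection of $[1,\infty)$ onto $[s_0,\infty)$; parametrising by $s$ gives $\Isvi(s)=\Isvo\bigl(t(s)\bigr)$ with $t(s)\to\infty$ as $s\to\infty$, so $\Isvo(t)\to c$ transfers to $\Isvi(s)\to c$.

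The only place where real work could creep in is invoking the inversion theorem in step~(a); a self-contained alternative is to show directly that $s(\lambda t)/s(t)\to\lambda$ for each fixed $\lambda>0$, by writing $\bigl(s(\lambda t)/s(t)\bigr)^\Iind=\lambda^\Iind\,\Isvi(s(\lambda t))/\Isvi(s(t))$ and using Potter's bounds together with $\Iind>0$ to rule out any bounded ratio of $s$-values drifting away from $\lambda$. Once that is in hand, $\Isvo(\lambda t)/\Isvo(t)=\Isvi(s(\lambda t))/\Isvi(s(t))\to 1$ follows from slow variation of $\Isvi$, and the rest of the argument is unchanged. I expect the inversion-theorem route to be the cleanest for the paper.
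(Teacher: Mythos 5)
Your proposal is correct and follows essentially the same route as the paper: both define $\Isvo$ via the exact inverse of the (power-transformed) regularly varying map $s\mapsto s^{-\Iind}\Isvi(s)$, appeal to the standard fact that the inverse of a regularly varying function is regularly varying to get slow variation of $\Isvo$, and obtain the limit equivalence from $s(t)\to\infty$. The paper just works with the index-$1$ function $s\Isvi^{-1/\Iind}(s)$ instead of your index-$\Iind$ function $s^{\Iind}/\Isvi(s)$, a purely cosmetic difference.
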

\noindent
The slowly varying functions $\Isvi^{-1/\Iind}$ and $\Isvo^{1/\Iind}$ are de Bruijn conjugates.
\begin{proof}
The expression $s\mapsto s\Isvi^{-1/\Iind}(s)$ defines a strictly increasing continuous map $[s_0,\infty)\to[1,\infty)$ which is regularly varying with index $1$ (note that $\Isvi^{-1/\Iind}$ is slowly varying). Let $\sigma:[1,\infty)\to[s_0,\infty)$ denote the corresponding inverse, which is also regularly varying with index $1$, and set $\Isvo(t)=t^{-\Iind}\sigma^{\Iind}(t)$ for all $t\ge1$; it follows that $\Isvo$ is continuous and slowly varying. Setting $s=\sigma(t)=t\Isvo^{1/\Iind}(t)$ we then get
\[
t=s\Isvi^{-1/\Iind}(s)=t\Isvo^{1/\Iind}(t)\,\Isvi^{-1/\Iind}\{t\Isvo^{1/\Iind}(t)\}
\ \Longrightarrow\ \Isvo(t)=\Isvi\{t\Isvo^{1/\Iind}(t)\}=\Isvi(s).
\]
The final part of the result follows (note that $t\Isvo^{1/\Iind}(t)\to\infty$ as $t\to\infty$ since $\Isvo$ is slowly varying).
\end{proof}

Define $\nf:[0,1]\to[1,\infty]$ as the reciprocal of $T:[0,1]\to[0,1]$ defined in Section~\ref{sec:ExtremalDependence}, i.e., $\nf(v)=\norm{(v,1-v)}/v$, so that $\nf(V) = 1/V_1$ and $\nf(1-V)=1/V_2$. Using this notation equation~\eqref{eq:marg} becomes
\begin{align}
 \Prob(A > x)&=\int_0^1 \{1+\lambda x\nf(v)\}_+^{-1/\lambda}\dsp F_V(v),& \Prob(B>y)&=\int_0^1 \{1+\lambda y\nf(1-v)\}_+^{-1/\lambda}\dsp F_V(v), \label{GPmargXY:eq1}
\end{align}
where the upper endpoint of the support is $\esupp=+\infty$ if $\lambda\ge0$ and $\esupp=-1/\lambda$ if $\lambda<0$; and~\eqref{eq:joint} becomes
\begin{subequations}
\begin{align}
\Prob(A > x,\,B > y)&=\int_0^1 \bigl[1+\lambda\max\{x\nf(v),\,y\nf(1-v)\}\bigr]_+^{-1/\lambda}\dsp F_V(v) \label{GPjointXY:eq1a}\\
&=\int_0^{x/(x+y)} \{1+\lambda x\nf(v)\}_+^{-1/\lambda}\dsp F_V(v) +\int_{x/(x+y)}^1 \{1+\lambda y\nf(1-v)\}_+^{-1/\lambda}\dsp F_V(v).\label{GPjointXY:eq1}
\end{align}
\end{subequations}
The expressions $x\mapsto\Prob(A > x)$ and $y\mapsto\Prob(B> y)$ define continuous strictly decreasing functions from $[0,\esupp)$ onto $(0,1]$; this observation can be used to help justify the conditions for Lemma~\ref{geninv:lem} when it is used below.

From Condition~\ref{Cond2}, $\nf(v)\ge(1-v)/v>1$ for $v<1/2$, while Conditions~\ref{Cond1} and~\ref{Cond3} imply $\nf(v)=1$ for some $v\in[1/2,1]$.   
Set $\muI=\{v\in[0,1]:\nf(v)=1\}$. Now $\nf(v)=\tilde{\tau}(1/v)$ where $\tilde{\tau}:[1,\infty]\to[1,\infty]$ is the continuous convex function defined by $\tilde{\tau}(u)=\norm{(1,u-1)}$. It follows that $\muI$ is a closed subinterval of $[1/2,1]$, so $\muI=[\lep,\rep]$ with $\lep$, $\rep$ as defined in Section~\ref{sec:ExtremalDependence}. Also note that $1/2\le\lep\le\rep\le1$,

\begin{equation}
\label{shapemu:eq}
\text{$\nf$ is strictly decreasing on $[0,\lep]$ and strictly increasing on $[\rep,1]$,}
\end{equation}
and
\begin{equation}
\label{muwwu1-wcomp:eq}
v\lessgtr\frac{x}{x+y}
\ \Longleftrightarrow\ yv\lessgtr x(1-v)
\ \Longleftrightarrow\ y\nf(1-v)\lessgtr x\nf(v).
\end{equation}

The quantities $\upm$ and $\lowm$ as given in Proposition~\ref{l<0xyj:prop} can be expressed
\[
\upm=\int_{\muI}\!\dsp F_V(v)=F_V(\rep)-F_V(\lep)
\ \ \text{and}\ \ 
\lowm=\int_{1-\muI}\!\dsp F_V(v)=F_V(1-\lep)-F_V(1-\rep);
\]
by Assumption~\ref{As_supp}, $\upm,\lowm>0$ iff $\lep\neq\rep$. We proceed with Cases~\ref{C:lplf}--\ref{C:lnli} in turn, firstly by establishing the form of the quantile functions $\xq$ and $\yq$, followed by proofs of the main Propositions concerning the behaviour of the joint survivor functions.

\subsubsection{Case 1: $\lambda>0$}
Recall the positive quantities $\upn, \lown$ defined in Remark~\ref{rmk:1}; these can be expressed $\upn=\lambda^{-1/\lambda}\int_0^1\nf^{-1/\lambda}(v)\dsp F_V(v)$, and $\lown=\lambda^{-1/\lambda}\int_0^1\nf^{-1/\lambda}(1-v)\dsp F_V(v)$.

\begin{prop}
\label{l>0xmarg:prop}
Let $\beta, \gamma>0$. Then there exist slowly varying functions $\ixmarg$, $\iymarg$ such that $\xq=t^{\lambda\beta}\ixmarg(t)$ and $\yq=t^{\lambda\gamma}\iymarg(t)$ for all $t\ge1$. Furthermore $\ixmarg(t)\to\upn^\lambda$ and $\iymarg(t)\to\lown^\lambda$ as $t\to\infty$.
\end{prop}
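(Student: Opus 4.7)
\medskip
\noindent\textbf{Proof plan for Proposition~\ref{l>0xmarg:prop}.}

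The plan is to first establish that the marginal survivor function $\Prob(A>x)$ is regularly varying of index $-1/\lambda$ with a slowly varying component that converges to $\upn$, and similarly for $B$ with limit $\lown$; then invert using Lemma~\ref{geninv:lem} to obtain the stated form for the quantile functions $\xq$ and $\yq$.

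First I would work from~\eqref{GPmargXY:eq1} and study
\[
x^{1/\lambda}\Prob(A>x)=\int_0^1 \bigl(1/x+\lambda\nf(v)\bigr)^{-1/\lambda}\dsp F_V(v),\qquad x>0.
\]
Since $\nf(v)\ge1$ by Condition~\ref{Cond2}, the integrand is dominated by $(\lambda\nf(v))^{-1/\lambda}\le\lambda^{-1/\lambda}$, which is integrable, and converges pointwise to $(\lambda\nf(v))^{-1/\lambda}$ as $x\to\infty$. Dominated convergence gives
\[
x^{1/\lambda}\Prob(A>x)\longrightarrow \lambda^{-1/\lambda}\int_0^1 \nf(v)^{-1/\lambda}\dsp F_V(v)=\upn\in(0,\infty),
\]
positivity coming from $\nf(v)<\infty$ for $v>0$ and $F_V$ having full support. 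Hence $\Prob(A>x)=x^{-1/\lambda}\Isvi_A(x)$ with $\Isvi_A(x)\to\upn$, so $\Isvi_A$ is slowly varying. The analogous argument applied to the integral representation of $\Prob(B>y)$ yields $\Prob(B>y)=y^{-1/\lambda}\Isvi_B(y)$ with $\Isvi_B(y)\to\lown$.

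Next I would verify the hypotheses of Lemma~\ref{geninv:lem} for the map $x\mapsto x^{-1/\lambda}\Isvi_A(x)=\Prob(A>x)$. Strict monotonicity in $x$ follows because $(1+\lambda x\nf(v))^{-1/\lambda}$ is strictly decreasing in $x$ for each $v$ (using $\lambda>0$ and $\nf(v)\ge1$), and continuity is inherited from the continuity of this integrand and dominated convergence; moreover $\Prob(A>0)=1$ and $\Prob(A>x)\to0$, so the map surjects onto $(0,1]$. Lemma~\ref{geninv:lem} (with the index $\Iind=1/\lambda$ and slowly varying part $\Isvi_A$) then furnishes a slowly varying $\Isvo_A$ on $[1,\infty)$ satisfying $s^{-1/\lambda}\Isvi_A(s)=\sigma^{-1/\lambda}$ precisely when $s=\sigma\Isvo_A^{\lambda}(\sigma)$, together with $\Isvo_A(\sigma)\to\upn$ as $\sigma\to\infty$.

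Setting $\sigma=t^{\lambda\beta}$, the equation becomes $\Prob(A>s)=t^{-\beta}$, so that $s=\xq$; therefore
\[
\xq=t^{\lambda\beta}\,\Isvo_A(t^{\lambda\beta})^{\lambda},
\]
and defining $\ixmarg(t):=\Isvo_A(t^{\lambda\beta})^{\lambda}$ gives a slowly varying function (composition of a slowly varying function with $t\mapsto t^{\lambda\beta}$, raised to a power) converging to $\upn^{\lambda}$. The identical argument with $\Isvi_B,\Isvo_B$ and $\sigma=t^{\lambda\gamma}$ produces $\iymarg(t)=\Isvo_B(t^{\lambda\gamma})^{\lambda}$ with limit $\lown^{\lambda}$.

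The only mildly delicate points are the positivity and finiteness of $\upn$ and $\lown$ (which reduce to elementary observations about $\nf$ and the support of $F_V$) and the verification that the map $x\mapsto\Prob(A>x)$ satisfies the monotonicity/surjectivity hypotheses of Lemma~\ref{geninv:lem}; neither is truly an obstacle, so the main content is packaged in the lemma itself.
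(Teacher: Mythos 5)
Your proof is correct and takes essentially the same route as the paper: the identical dominated-convergence computation shows the slowly varying factor of the survivor function converges to $\upn$ (resp.\ $\lown$), and the inversion is handled by Lemma~\ref{geninv:lem} exactly as in the paper. The only difference is cosmetic — you apply the lemma with index $1/\lambda$ to $\Prob(A>x)$ and substitute $\sigma=t^{\lambda\beta}$ afterwards, whereas the paper folds the substitution in first via $\xmarg(s)=s^{\beta}\Prob(A>s^{\lambda\beta})$ and applies the lemma with index $\beta$.
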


\begin{proof}
We have
\[
\xmarg(s):=s^{\beta}\Prob(A > s^{\lambda\beta})
=\int_0^1 \{s^{-\lambda\beta}+\lambda\nf(v)\}_+^{-1/\lambda}\dsp F_V(v).
\]
As $s$ increases from $0$ to $\infty$, $s^{-\lambda\beta}+\lambda\nf(v)$ decreases monotonically to $\lambda\nf(v)\ge\lambda$; hence $\{s^{-\lambda\beta}+\lambda\nf(v)\}_+^{-1/\lambda}$ increases monotonically to $\{\lambda\nf(v)\}^{-1/\lambda}\le\lambda^{-1/\lambda}$. Dominated convergence then gives
\[
\lim_{s\to\infty}\xmarg(s)=\int_0^1 \{\lambda\nf(v)\}^{-1/\lambda}\dsp F_V(v)=\upn.
\]
Since this limit is non-zero it follows that $\xmarg$ is slowly varying. The result for $\xq$ now follows from Lemma~\ref{geninv:lem} (with $\ixmarg=\Isvo^{\lambda}$). The $\yq$ case is similar.
\end{proof}

\begin{proof}[Proof of Proposition~\ref{l>0xyj:prop}]
Firstly suppose $\beta=\gamma$. From \eqref{GPjointXY:eq1a}, $\xyj$ as defined in Proposition~\ref{l>0xyj:prop}, is
\[
\xyj(t)=\int_0^1 \bigl[t^{-\lambda\beta}+\lambda\max\{\ixmarg(t)\nf(v),\,\iymarg(t)\nf(1-v)\}\bigr]_+^{-1/\lambda}\dsp F_V(v).
\]
Since $\nf\ge1$, and $\ixmarg(t)$ (or $\iymarg(t)$) has a non-zero limit as $t\to\infty$, we can bound $\max\{\ixmarg(t)\nf(v),\,\iymarg(t)\nf(1-v)\}$ uniformly away from $0$ for all sufficiently large $t$. Furthermore Proposition \ref{l>0xmarg:prop} implies $\max\{\ixmarg(t)\nf(v),\,\iymarg(t)\nf(1-v)\}\to\max\{\upn^\lambda\nf(v),\,\lown^\lambda\nf(1-v)\}$ as $t\to\infty$. Applying dominated convergence and using the definitions of $\upn$ and $\lown$ then gives
\[
\lim_{t\to\infty}\xyj(t)=\int_0^1\bigl[\lambda\max\{\upn^\lambda\nf(v),\,\lown^\lambda\nf(1-v)\}\bigr]^{-1/\lambda}\dsp F_V(v)=\chi_\lambda.
\]
The fact that this limit is non-zero implies $\xyj$ is slowly varying. Now assume $\beta<\gamma$ (the case $\beta>\gamma$ can be handled similarly). Then
\[
r(t):=\frac{\xq}{\xq+\yq}=\Bigl\{1+t^{\lambda(\gamma-\beta)}\frac{\iymarg(t)}{\ixmarg(t)}\Bigr\}^{-1}\to0,\quad t\to\infty.
\]
If $v\le r(t)$ then \eqref{muwwu1-wcomp:eq} gives
\begin{align*}
&\xq\nf(v)\ge \yq \nf(1-v)
\ \Longrightarrow\ 1 + \lambda\xq\nf(v)\ge 1 + \lambda\yq\nf(1- v)>\lambda\yq>0 \\
&\qquad\Longrightarrow\ 0<\bigl\{1+\lambda\yq\nf(1-v)\bigr\}_+^{-1/\lambda}-\bigl\{1+\lambda\xq\nf(v)\bigr\}_+^{-1/\lambda}\le\bigl\{\lambda\yq\bigr\}^{-1/\lambda}.
\end{align*}
Combined with \eqref{GPmargXY:eq1} and \eqref{GPjointXY:eq1} we thus have
\begin{align*}
0& \le\Prob\{B > \yq\}-\Prob\bigl\{A > \xq,B > \yq\bigr\} \\
 &=\int_0^{r(t)}\bigl[\bigl\{1+\lambda\yq\nf(1-v)\bigr\}_+^{-1/\lambda}-\bigl\{1+\lambda\xq\nf(v)\bigr\}_+^{-1/\lambda}\bigr]\dsp F_V(v)\\
&\le\int_0^{r(t)}\{\lambda\yq\}^{-1/\lambda}\dsp F_V(v)
= t^{-\gamma}\{\lambda\iymarg(t)\}^{-1/\lambda}F_V\{r(t)\}.
\end{align*}
The continuity of $F_V$ at $0$ gives $F_V\{r(t)\}\to F_V(0)=0$ as $t\to\infty$. Since $\Prob\{B>\yq\} = t^{-\gamma}$ we then get $\Prob\{A>\xq,B>\yq\}=t^{-\gamma}\{1+o(1)\}$ as $t\to\infty$. The result follows.
\end{proof}

\begin{proof}[Proof of Proposition~\ref{chi:prop}]
Note that by Condition~\ref{Cond3}, $\rep>1/2$. From~\eqref{eq:chi_lambda} we get $\chi_\lambda\le\mathcal{R}_-+\mathcal{R}_+$ where
\[
\mathcal{R}_-=\frac{\int_0^{1/2}\nf^{-1/\lambda}(v)\dsp F_V(v)}{\int_0^1\nf^{-1/\lambda}(v)\dsp F_V(v)}
\quad\text{and}\quad
\mathcal{R}_+=\frac{\int_{1/2}^1\nf^{-1/\lambda}(1-v)\dsp F_V(v)}{\int_0^1\nf^{-1/\lambda}(1-v)\dsp F_V(v)}.
\]
Now $\nf(v)\ge1$ with equality iff $v\in\muI$. Since $\muI\subseteq[1/2,1]$ dominated convergence then gives 
\[
\lim_{\lambda\to0^+}\int_0^{1/2}\!\nf^{-1/\lambda}(v)\dsp F_V(v)=0
\quad\text{and}\quad
\lim_{\lambda\to0^+}\int_0^1\!\nf^{-1/\lambda}(v)\dsp F_V(v)=\int_{\muI}\!\dsp F_V(v)=\upm.
\]
If $\lep=1/2<\rep$ then $\upm>0$ so $\mathcal{R}_-\to0$ as $\lambda\to0^+$. 
Otherwise $\lep>1/2$, in which case we can find $\delta>0$ so that $\nf(v)\ge1+\delta$ when $v\in[0,1/2]$. 
Setting $I_\delta=\{v\in[0,1]:\nf(v)\le1+\delta/2\}$ we then get
\[
\mathcal{R}_-\le\frac{\int_0^{1/2}(1+\delta)^{-1/\lambda}\dsp F_V(v)}{\int_{I_\delta}(1+\delta/2)^{-1/\lambda}\dsp F_V(v)}
\le\frac{(1+\delta)^{-1/\lambda}}{(1+\delta/2)^{-1/\lambda}\cm}
=\cm^{-1}\rho^{1/\lambda}
\]
where $\rho=1-\delta/(2+2\delta)\in(0,1)$ and $\cm:=\int_{I_\delta}\dsp F_V(v)>0$ (positivity follows from Assumption~\ref{As_supp} and the fact that the interval length $\abs{I_\delta}>0$). As $\lambda\to0^+$, $\rho^{1/\lambda}\to0$ and hence $\mathcal{R}_-\to0$. A similar argument shows $\mathcal{R}_+\to0$.
\end{proof}

\subsubsection{Case 2: $\lambda=0$}
Let $\beta,\gamma>0$ and set $\omega=\beta/(\beta+\gamma)\in(0,1)$. 
Then $\beta\nf(\omega)=\norm{(\beta,\gamma)}=\gamma\nf(1-\omega)$ while \eqref{muwwu1-wcomp:eq} gives
\begin{equation}
\label{nurelomega:eq}
\mnf(v):=\max\{\beta\nf(v),\gamma\nf(1-v)\}
=\begin{cases}
\beta\nf(v)&\text{if $0\le v\le\omega$,}\\
\gamma\nf(1-v)&\text{if $\omega\le v\le1$.}
\end{cases}
\end{equation}
The function $\mnf$ is a positive, continuous and convex function on $[0,1]$, with $\mnf(0)=+\infty=\mnf(1)$. Set $\widehat{\nu}:=\min\{\mnf(v):v\in[0,1]\}$ and $\nuI:=\{v\in[0,1]:\mnf(v)=\widehat{\nu}\}$; in particular, $\nuI$ is a non-empty closed subinterval of $[0,1]$. The general shape of $\mnf$ and key properties of $\widehat{\nu}$ and $\nuI$ can be deduced from \eqref{shapemu:eq}:
\begin{description}
\item{\emph{C1: $\omega\in[1-\lep,\lep{]}$.}}
Then $\beta\nf(v)$ is strictly decreasing on $[0,\omega]$, $\gamma\nf(1-v)$ is strictly increasing on $[\omega,1]$ and these quantities are equal when $v=\omega$. It follows that $\nuI=\{\omega\}$ and $\widehat{\nu}=\beta\nf(\omega)=\gamma\nf(1-\omega)=\norm{(\beta,\gamma)}$. 
\item{\emph{C2: $\omega\in(\lep,1)$.}}
Then $\mnf(v)=\beta\nf(v)$ is strictly decreasing on $[0,\lep]$ and $\mnf(v)=\beta\nf(v)=\beta$ (a constant) on $[\lep,\min\{\omega,\rep\}]$. Also $\beta\nf(v)$ is strictly increasing on $[\rep,1]$ and $\omega>\lep\ge1-\lep$ so $\gamma\nf(1-v)$ is strictly increasing and not less than $\beta\nf(v)$ on $[\omega,1]$; hence $\mnf(v)=\max\{\beta\nf(v),\gamma\nf(1-v)\}$ is strictly increasing on $[\min\{\omega,\rep\},1]$. It follows that $\nuI=[\lep,\min\{\omega,\rep\}]$ and $\widehat{\nu}=\beta=\norm[\infty]{(\beta,\gamma)}$ (note that, $\omega>\lep\ge1/2$ which implies $\beta>\gamma$). 
\item{\emph{C3: $\omega\in(0,1-\lep)$.}}
By a similar argument to \emph{C2}, $\nuI=[\max\{\omega,1-\rep\},1-\lep]$ and $\widehat{\nu}=\gamma=\norm[\infty]{(\beta,\gamma)}$. 
\end{description}
The main results in this case are built from the following lemma.
\begin{lem}
\label{intregvar:lem}
Suppose $\Lind:[0,1]\to[0,\infty]$ is continuous,
$\Lrvi$ is regularly varying at infinity with index $\Lii>0$,
and $\Lintv,\Lintv[s]\subseteq[0,1]$ for $s\ge0$ is a collection of closed intervals with the interval length $\abs{\Lintv}>0$ and $\Lintv[s]\to\Lintv$ as $s\to\infty$.
Define $\Lrvo$ by
\[
\Lrvo(s)=\int_{\Lintv[s]} \Lrvi^{-\Lind(v)}(s)\dsp F_V(v)
\]
for each $s\ge0$, and set $\LimA=\min\{\Lind(v):v\in\Lintv\}$. Then $\Lrvo$ is regularly varying with index $-\LimA\Lii$. 
\end{lem}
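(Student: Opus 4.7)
The natural factorisation is $\Lrvo(s)=\Lrvi^{-\LimA}(s)\,J(s)$, where
\begin{equation*}
J(s) := \int_{\Lintv[s]}\Lrvi^{-g(v)}(s)\,\dsp F_V(v),\qquad g(v):=\Lind(v)-\LimA,
\end{equation*}
so $g\ge 0$ on $\Lintv$ with nonempty zero set. Since $\Lrvi^{-\LimA}$ is regularly varying with index $-\LimA\Lii$, and a product of a regularly varying function with a slowly varying one is regularly varying of the same index, it is enough to show $J$ is slowly varying. Introduce $T(s):=\log\Lrvi(s)\to\infty$, so the integrand is $e^{-T(s)g(v)}$, and note that the slowly varying component of $\Lrvi$ yields the translation identity $T(cs)-T(s)\to\Lii\log c$ as $s\to\infty$ for each fixed $c>0$.

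\textbf{Laplace-type concentration.} The key quantitative input is that for every $\epsilon>0$, the set $\Lintv\cap\{g<\epsilon\}$ has strictly positive $F_V$-measure: indeed $g$ is continuous and attains $0$ somewhere on $\Lintv$, so $\Lintv\cap\{g<\epsilon\}$ contains a subinterval of positive length, which has positive $F_V$-measure by Assumption~\ref{As_supp}. This gives, for any $0<\epsilon_1<\epsilon_2$,
\begin{equation*}
J(s)\ge e^{-T(s)\epsilon_1}\,F_V\bigl(\Lintv\cap\{g<\epsilon_1\}\bigr),\qquad \int_{\Lintv[s]\cap\{g\ge\epsilon_2\}}\!e^{-T(s)g}\,\dsp F_V\le e^{-T(s)\epsilon_2},
\end{equation*}
whose ratio is $o(1)$ as $s\to\infty$. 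Hence the part of $J(s)$ coming from $\{g\ge\epsilon_2\}$ is negligible, and $J(s)$ concentrates on arbitrarily small $F_V$-neighbourhoods of $\{g=0\}$.

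\textbf{Deduction of slow variation.} To obtain $J(cs)/J(s)\to 1$ for fixed $c>0$, write
\begin{equation*}
J(cs)=\int_{\Lintv[cs]}\exp\bigl(-[T(cs)-T(s)]\,g(v)\bigr)\,\exp\bigl(-T(s)g(v)\bigr)\,\dsp F_V(v).
\end{equation*}
On $\{g<\epsilon\}$ the first factor lies within an error that vanishes as $s\to\infty$ and then $\epsilon\to0$ of $1$, by the translation identity for $T$; on $\{g\ge\epsilon\}$ the integral is $o(J(s))$ by the concentration step. Combining and letting $\epsilon\to 0$ gives $J(cs)/J(s)\to 1$, so $J$ is slowly varying.

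\textbf{Main obstacle.} The principal technical difficulty lies in the varying integration domain: $\Lintv[s]\neq\Lintv$ in general, and a priori $g$ can be negative on points of $\Lintv[s]\setminus\Lintv$, where the integrand is large. However, $\Lintv[s]\to\Lintv$ forces any such rogue points to cluster near $\Lintv$, and continuity of $\Lind$ then forces them to cluster near $\{g=0\}$; combined with $F_V(\Lintv[s]\triangle\Lintv)\to 0$ from continuity of $F_V$, a Laplace estimate of the same order as $J(s)$ absorbs this contribution into the $o(J(s))$ remainder. Careful book-keeping between the domains $\Lintv[cs]$ and $\Lintv[s]$ in the slow-variation step is the main point to verify cleanly; once this is in place, the argument reduces to the fixed-interval calculation above.
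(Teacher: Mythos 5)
Your overall route is the paper's route in light disguise: both arguments isolate the set where $a(v)$ is within $\delta$ of the minimum $\alpha$, show the complementary region contributes negligibly, and control the main region via regular variation of $u$; factoring out $u^{-\alpha}(s)$ and proving slow variation of $J$ is only a repackaging of that scheme. The concentration and translation steps are fine up to minor book-keeping (e.g.\ your lower bound should be taken over $I_s\cap\{g<\epsilon_1\}$ rather than $I\cap\{g<\epsilon_1\}$, which is easily repaired since $I_s\to I$ and $I\cap\{g<\epsilon_1\}$ contains an interval of positive length, as in the paper's Claim~2).

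The genuine gap is exactly at the point you call the main obstacle, and the resolution you sketch does not close it. The fact that $F_V(I_{cs}\setminus I_s)\to 0$ does not make that region's contribution $o\{J(s)\}$, because $J(s)$ itself tends to zero: when the moving endpoint sits near the minimiser of $a$, the integrand on the sliver $I_{cs}\setminus I_s$ is of the same order as on the shrinking set that carries essentially all of $J(s)$, so the sliver can contribute a fixed positive fraction of $J(s)$. Indeed no argument from the stated hypotheses alone can succeed: take $u(s)=s$, $a(v)=v$, $F_V$ uniform, $I=[0,1/2]$ and $I_s=[\theta(s),1/2]$ with $\theta(s)\downarrow 0$ a step function whose sparse jumps have size of order $1/\log s$; every hypothesis holds with $\alpha=0$, yet $\phi(cs)/\phi(s)$ oscillates along subsequences between $1$ and a constant strictly greater than $1$, so $\phi$ is not regularly varying. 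A complete proof must therefore exploit quantitative information on how far $I_s$ moves between $s$ and $ls$ relative to the scale $1/\log u(s)$ on which the integrand varies --- information available in the application (in Proposition~\ref{xyj:prop} the endpoint $r(t)$ moves by $o(1/\log t)$ between $t$ and $lt$) but not implied by Hausdorff convergence of $I_s$, nor by continuity of $F_V$. For what it is worth, the paper's own write-up elides the same point: its final ``integration'' step compares $\phi_\delta(ls)$ and $\phi_\delta(s)$, whose integrals run over the different domains $I_{ls}\cap J_\delta$ and $I_s\cap J_\delta$, as though these coincided. So your instinct to flag the moving domain is right, but the deferred book-keeping is not routine; it is where the real content (and, as stated, a missing hypothesis) lies.
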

Note that by $\Lintv[s]\to\Lintv$ we mean that the Hausdorff distance between $\Lintv[s]$ and $\Lintv$ tends to $0$; equivalently, the end points of $\Lintv[s]$ converge to the end points of $\Lintv$.


\begin{proof}
For each $\delta>0$ set $J_\delta=\{v\in[0,1]:\Lind(v)\le\LimA+\delta\}$. 

\smallskip

\noindent
\emph{Claim 1: there exists $S_{1,\delta}$ such that $\abs{\Lind(v)-\LimA}\le\delta$ when $s\ge S_{1,\delta}$ and \mbox{$v\in\Lintv[s]\cap J_\delta$.}} The continuity of $\Lind$ implies $U:=\{v\in[0,1]:\Lind(v)>\alpha-\delta\}$ is an open neighbourhood of $\Lintv\cap J_\delta\neq\emptyset$. Since $\Lintv[s]\to\Lintv$ as $s\to\infty$ it follows that $\Lintv[s]\cap J_\delta\subseteq U$ for all sufficiently large $s$. 

\smallskip

\noindent
\emph{Claim 2: there exists $S_{2,\delta}$ and $\cm>0$ such that $\int_{\Lintv[s]\cap J_{\delta/4}}\dsp F_V(v)\ge C_\delta$ for all $s\ge S_{2,\delta}$.} Choose $\tilde{v}\in\Lintv$ and $\delta_0>0$ so that $\Lind(\tilde{v})=\LimA$ and $J':=[\tilde{v}-\delta_0,\tilde{v}+\delta_0]\subseteq J_{\delta/4}$. Then $\Lintv\cap J'$ is an interval of length at least $\delta_1=\min(\delta_0,\abs{\Lintv})>0$ (recall that $\Lintv$ is an interval). Since $\Lintv[s]$ is an interval converging to $\Lintv$ it follows that, for all sufficiently large $s$, $\Lintv[s]\cap J'$ is an interval of length at least $\delta_1/2$, which is contained in $\Lintv[s]\cap J_{\delta/4}$. 
We can then let $\cm$ be the infimum of $\int_K\dsp F_V(v)$, taken over all intervals $K\subseteq[0,1]$ of length at least $\delta_1/2$; this quantity is positive by Assumption~1.

\smallskip

Setting
\[
\Lrvo[\delta](s)=\int_{\Lintv[s]\cap J_\delta}\Lrvi^{-\Lind(v)}(s)\dsp F_V(v)
\quad\text{and}\quad
\Lcrvo[\delta](s)=\int_{\Lintv[s]\setminus J_\delta}\Lrvi^{-\Lind(v)}(s)\dsp F_V(v)
\]
we clearly have
\begin{equation}
\label{splitxmarg:eq}
\Lrvo(s)=\Lrvo[\delta](s)+\Lcrvo[\delta](s).
\end{equation}

\smallskip

\noindent
\emph{Claim 3: there exists $S_{3,\delta}$ such that}
\begin{equation}
\label{margrat2:eq}
1\le\frac{\Lrvo(s)}{\Lrvo[\delta](s)}
\le1+\cm^{-1}s^{-\Lii\delta/4}
\quad\text{\emph{for $s\ge S_{3,\delta}$.}}
\end{equation}
Set $\sigma=\Lii\delta/\{4(\LimA+\delta)\}\in(0,\Lii/4]$. Since $\Lrvi$ is regularly varying with index $\Lii$ there exists $S_{3,\delta}'\ge1$ such that
\[
s^{\Lii-\sigma}\le \Lrvi(s)\le s^{\Lii+\sigma}\quad\text{for $s\ge S_{3,\delta}'$.}
\]
If $v\in J_{\delta/4}$ then $\Lind(v)\le\LimA+\delta/4$ so
\[
\Lind(v)(\Lii+\sigma)
\le\LimA\Lii+\sigma(\LimA+\delta/4)+\Lii\delta/4
\le\LimA\Lii+\sigma(\LimA+\delta)+\Lii\delta/4
=\LimA\Lii+\Lii\delta/2
\]
so, for any $s\ge S_{3,\delta}'$, 
\[
\Lrvi^{-\Lind(v)}(s)\ge s^{-\Lind(v)(\Lii+\sigma)}\ge s^{-\LimA\Lii-\Lii\delta/2}.
\]
When $s\ge\max\{S_{2,\delta},S_{3,\delta}'\}$, Claim 2 then leads to
\[
\Lrvo[\delta](s)
\ge\Lrvo[\delta/4](s)=\int_{\Lintv[s]\cap J_{\delta/4}}\Lrvi^{-\Lind(v)}(s)\dsp F_V(v)
\ge s^{-\LimA\Lii-\Lii\delta/2}\int_{\Lintv[s]\cap J_{\delta/4}}\dsp F_V(v)\ge\cm s^{-\LimA\Lii-\Lii\delta/2}.
\]
On the other hand, if $v\notin J_\delta$ then $\Lind(v)\ge\LimA+\delta$ so
\[
\Lind(v)(\Lii-\sigma)\ge(\LimA+\delta)(\Lii-\sigma)=\LimA\Lii-\sigma(\LimA+\delta)+\Lii\delta=\LimA\Lii+3\Lii\delta/4,
\]
and thus, for any $s\ge S_{3,\delta}'$, 
\[
\Lrvi^{-\Lind(v)}(s)\le s^{-\Lind(v)(\Lii-\sigma)}\le s^{-\LimA\Lii-3\Lii\delta/4}.
\]
When $s\ge S_{3,\delta}'$ it follows that
\[
\Lcrvo[\delta](s)
=\int_{\Lintv[s]\setminus J_\delta}\Lrvi^{-\Lind(v)}(s)\dsp F_V(v)
\le s^{-\LimA\Lii-3\Lii\delta/4}\int_{\Lintv[s]\setminus J_\delta}\dsp F_V(v)\le s^{-\LimA\Lii-3\Lii\delta/4}.
\]
When $s\ge\max(S_{2,\delta},S_{3,\delta}')$ our estimates for $\Lrvo[\delta](s)$ and $\Lcrvo[\delta](s)$ can be combined with \eqref{splitxmarg:eq} to give \eqref{margrat2:eq}.

\medskip

Let $l\ge1$ and $\epsilon>0$. Choose $\delta\in(0,1]$ so that $(1+\delta)^{\LimA+\delta}l^{\Lii\delta}\le1+\epsilon$.
Since $u$ is regularly varying with index $\Lii$ we can find $S_{4,\delta}$ such that
\[
(1+\delta)^{-1}l^{\Lii}\le\frac{\Lrvi(ls)}{\Lrvi(s)}\le(1+\delta)l^{\Lii}
\quad\text{for $s\ge S_{4,\delta}$.}
\]
If $v\in\Lintv[s]\cap J_{\delta}$ and $s\ge\max\{S_{1,\delta},S_{4,\delta}\}$, Claim 1 gives $\LimA-\delta\le\Lind(v)\le\LimA+\delta$ and so
\begin{align*}
&(1+\epsilon)^{-1}l^{-\LimA\Lii}
\le(1+\delta)^{-(\LimA+\delta)}l^{-(\LimA+\delta)\Lii}
\le(1+\delta)^{-\Lind(v)}l^{-\Lind(v)\Lii}\\
&\qquad{}\le\frac{\Lrvi^{-\Lind(v)}(ls)}{\Lrvi^{-\Lind(v)}(s)}
\le(1+\delta)^{\Lind(v)}l^{-\Lind(v)\Lii}
\le(1+\delta)^{\LimA+\delta}l^{-(\LimA-\delta)\Lii}
\le(1+\epsilon)l^{-\LimA\Lii}.
\end{align*}
Integration then gives
\begin{equation}
\label{margrat1:eq}
\frac{\Lrvo[\delta](ls)}{\Lrvo[\delta](s)}\in[(1+\epsilon)^{-1}l^{-\LimA\Lii},(1+\epsilon)l^{-\LimA\Lii}].
\end{equation}

Choose $S\ge\max\{S_{1,\delta},\dots,S_{4,\delta}\}$ so that $S^{-\Lii\delta/4}\le\cm\epsilon$. Now
\[
\frac{\Lrvo(ls)}{\Lrvo(s)}
=\frac{\Lrvo(ls)}{\Lrvo[\delta](ls)}\,\frac{\Lrvo[\delta](ls)}{\Lrvo[\delta](s)}\,\frac{\Lrvo[\delta](s)}{\Lrvo(s)}.
\]
For $s\ge S$ the middle term on the right hand side belongs to $[(1+\epsilon)^{-1}l^{-\LimA\Lii},(1+\epsilon)l^{-\LimA\Lii}]$ by \eqref{margrat1:eq}, while the first and third terms belong to $[1,1+\epsilon]$ and $[(1+\epsilon)^{-1},1]$ respectively by \eqref{margrat2:eq} (note that, $l\ge1$ so $ls\ge s\ge S$). Thus $\Lrvo(ls)/\Lrvo(s)\in[(1+\epsilon)^{-2}l^{-\LimA\Lii},(1+\epsilon)^2l^{-\LimA\Lii}]$ for any $s\ge S$. Since $\epsilon>0$ was arbitrary it follows that $\Lrvo(ls)/\Lrvo(s)\to l^{-\LimA\Lii}$ as $s\to\infty$; hence $\Lrvo$ is regularly varying with index $-\LimA\Lii$. 
\end{proof}


\begin{prop}
\label{xmarg:prop}
Let $\beta, \gamma>0$. Then there exist slowly varying functions $\ixmarg$, $\iymarg$ such that $\xq=\log\{t^\beta\ixmarg(t)\}$ and $\yq=\log\{t^\gamma\iymarg(t)\}$ for all $t\ge1$. Furthermore $\ixmarg$, $\iymarg$ are continuous, take values in $[\upm,1]$ and $[\lowm,1]$ respectively, and satisfy $\ixmarg(t)\to\upm$ and $\iymarg(t)\to\lowm$ as $t\to\infty$.
\end{prop}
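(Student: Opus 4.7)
The plan is to reduce the problem to the inversion lemma, Lemma~\ref{geninv:lem}, via the exponential change of variable that turns the exponential-type marginal into a power. Setting $\lambda=0$ in~\eqref{GPmargXY:eq1} gives $\Prob(A>x)=\int_0^1 e^{-x\nf(v)}\dsp F_V(v)$; factoring out $e^{-x}$ and substituting $s=e^x$ yields $\Prob(A>\log s)=s^{-1}\phi(s)$ where $\phi(s):=\int_0^1 s^{1-\nf(v)}\dsp F_V(v)$. Because $\nf\ge 1$ with equality precisely on $\muI$, the integrand is bounded above by $1$ and equals $1$ on $\muI$ for every $s\ge 1$, so $\phi(s)\in[\upm,1]$, $\phi$ is continuous by dominated convergence, and $\phi(s)\to\upm$ as $s\to\infty$.

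The first substantive step is to confirm that $\phi$ is slowly varying, which is needed before Lemma~\ref{geninv:lem} can be invoked. I would apply Lemma~\ref{intregvar:lem} with $\Lind(v)=\nf(v)-1$ (continuous as a map $[0,1]\to[0,\infty]$ since $\nf(v)\to\infty$ at the endpoints), $\Lrvi(s)=s$ (regularly varying of index $1$), and $\Lintv[s]=\Lintv=[0,1]$. The minimum of $\Lind$ over $[0,1]$ equals $0$ and is attained on $\muI$, so the lemma delivers regular variation of $\phi$ with index $0$. This argument handles both the generic case $\upm>0$ and the degenerate case $\upm=0$, where a direct dominated-convergence estimate would give only $\phi(s)\to 0$ with no information about slow variation.

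Now I would apply Lemma~\ref{geninv:lem} with exponent $1$ and slowly varying function $\phi$. Since $x\mapsto\Prob(A>x)$ is continuous and strictly decreasing from $[0,\infty)$ onto $(0,1]$, composing with $\log$ shows that $s\mapsto s^{-1}\phi(s)$ has the same properties on $[1,\infty)$, so $s_0=1$ satisfies the lemma's hypothesis. The lemma produces a continuous slowly varying $u$ on $[1,\infty)$ with $s^{-1}\phi(s)=1/t$ precisely when $s=tu(t)$, and with $u(t)\to\upm$ because $\phi(s)\to\upm$. The defining relation is exactly $\Prob(A>\log s)=1/t$, so $q_A(t)=\log\{tu(t)\}$; substituting $t^\beta$ for $t$ and setting $\ixmarg(t):=u(t^\beta)$ delivers $q_A(t^\beta)=\log\{t^\beta\ixmarg(t)\}$ with $\ixmarg$ slowly varying (composition of a slowly varying function with a power). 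The explicit construction inside Lemma~\ref{geninv:lem}'s proof gives $u(t)=\phi(\sigma(t))$ for a continuous $\sigma:[1,\infty)\to[1,\infty)$, so $\ixmarg$ is continuous and $\ixmarg(t)\in[\upm,1]$. The proof for $\iymarg$ and $\yq$ is identical after replacing $\nf(v)$ by $\nf(1-v)$ throughout (equivalently, running the argument for $1-V$), producing $\iymarg(t)\in[\lowm,1]$ with limit $\lowm$.

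I expect the slow-variation step above to be the only subtle point, since in the degenerate case $\upm=0$ one cannot conclude slow variation from the limit of $\phi$ alone; every other step is a direct consequence of the two lemmas and standard manipulations.
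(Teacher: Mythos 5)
Your proposal is correct and follows essentially the same route as the paper: rewrite the survivor function as a slowly varying factor times a power (the paper uses $\xmarg(s)=s^{\beta}\Prob(A>\beta\log s)$ with index $\beta$, you use index $1$ and substitute $t^{\beta}$ afterwards, which is equivalent), establish slow variation via Lemma~\ref{intregvar:lem} — correctly identified as the step that cannot be skipped when $\upm=0$ — and invert with Lemma~\ref{geninv:lem}. The only quibble is the parenthetical claim that $\nf(v)\to\infty$ at both endpoints: $\nf(1)=\norm{(1,0)}$ is finite, though continuity of $\nf-1$ as a map into $[0,\infty]$ still holds, so nothing is affected.
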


\begin{proof}
For $s\ge1$, using \eqref{GPmargXY:eq1},
\[
\xmarg(s):=s^\beta\Prob(A > \beta\log s)
=s^\beta\int_0^1 e^{-\beta\nf(v)\log s}\dsp F_V(v)
=\int_0^1 s^{-\beta\{\nf(v)-1\}}\dsp F_V(v).
\]
Now $\beta\{\nf(v)-1\}\ge0$ with equality iff $v\in \muI$. Dominated convergence then gives
\[
\lim_{s\to\infty}\xmarg(s)=\int_0^1\lim_{s\to\infty}s^{-\beta\{\nf(v)-1\}}\dsp F_V(v)=\int_{\muI}\dsp F_V(v)=\upm.
\]
By Lemma \ref{intregvar:lem} we know that $\xmarg$ is slowly varying. 
The result for $\xq$ now follows from Lemma \ref{geninv:lem} (with $\ixmarg=\Isvo$). The $\yq$ case is similar.
\end{proof}

\begin{proof}[Proof of Proposition~\ref{xyj:prop}]
Setting
\begin{equation}
\label{l0rt:eq}
r(t)=\frac{\xq}{\xq+\yq}=\frac{\beta\log t+\log\ixmarg(t)}{(\beta+\gamma)\log t+\log\ixmarg(t)\iymarg(t)}
\end{equation}
we have $r(t)\to\omega$ as $t\to\infty$ (note that $\ixmarg$ and $\iymarg$ are slowly varying).
Furthermore \eqref{GPjointXY:eq1} gives
\begin{align}
\Prob\bigl\{A>\xq,\,B>\yq\bigr\} &=\int_0^{r(t)}e^{-\nf(v)\log\{t^\beta\ixmarg(t)\}}\dsp F_V(v)+\int_{r(t)}^1 e^{-\nf(1-v)\log\{t^\gamma\iymarg(t)\}}\dsp F_V(v)\nonumber\\
\label{jointXY:eq2}
&=\int_0^{r(t)}\bigl\{t^\beta\ixmarg(t)\bigr\}^{-\nf(v)}\dsp F_V(v)+\int_{r(t)}^1\bigl\{t^\gamma\iymarg(t)\bigr\}^{-\nf(1-v)}\dsp F_V(v).
\end{align}
Now assume $\beta\le\gamma$ (the case $\beta\ge\gamma$ can be handled similarly).
Then $\omega\le1/2\le\lep$ so \eqref{shapemu:eq} gives $\min\{\nf(v):v\in[0,\omega]\}=\nf(\omega)=\norm{(\beta,\gamma)}/\beta$. 
Furthermore, $\nuI\subseteq[\omega,1]$ (recall the description of $\mnf$ at the beginning of this section) so $\min\{\nf(1-v):v\in[\omega,1]\}
=\gamma^{-1}\min\{\mnf(v):v\in[\omega,1]\}
=\widehat{\nu}/\gamma$. Lemma \ref{intregvar:lem} can now be applied to show that the integrals on the right hand side of \eqref{jointXY:eq2} are regularly varying functions, the first with index $-\norm{(\beta,\gamma)}\le-\widehat{\nu}$ and the second with index $-\widehat{\nu}$. By the forms of $\widehat{\nu}$ described in \emph{C1--C3} immediately preceding Lemma~\ref{intregvar:lem}, the result follows.
\end{proof}

The fact that $\chi=0$ when $\eta=1$ in this case is given by the following.
\begin{prop}
\label{prop:lam0chi}
If $\lep\neq\rep$ (equivalently $\upm,\lowm>0$) and $1-\lep\le\omega\le\lep$ then $\lim_{t\to\infty}\xyj(t)= 0$.
\end{prop}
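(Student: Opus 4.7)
The aim is to show $t^{\kappa(\beta,\gamma)}\Prob\{A > \xq, B > \yq\} \to 0$, where $\kappa := \kappa(\beta,\gamma) = \|(\beta,\gamma)\|_m = \beta\nf(\omega) = \gamma\nf(1-\omega)$ by Proposition~\ref{xyj:prop}. The plan is to work from the integral representation~\eqref{jointXY:eq2}, which writes $\Prob\{A > \xq, B > \yq\}$ as a sum of two integrals $I_1(t)+I_2(t)$, and to bound each by $o(t^{-\kappa})$.

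First I quantify the rate at which $r(t)$ approaches $\omega$. Rearranging~\eqref{l0rt:eq} gives
\[
r(t) - \omega = \frac{\gamma\log\ixmarg(t) - \beta\log\iymarg(t)}{(\beta+\gamma)\bigl[(\beta+\gamma)\log t + \log\bigl(\ixmarg(t)\iymarg(t)\bigr)\bigr]}.
\]
By Proposition~\ref{xmarg:prop}, $\ixmarg(t)\to\upm$ and $\iymarg(t)\to\lowm$; the hypothesis $\lep\neq\rep$ forces $\upm,\lowm>0$, so the numerator is bounded and the denominator is $\sim(\beta+\gamma)^2\log t$. Hence $r(t)-\omega = O(1/\log t)$.

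By the symmetry $v\leftrightarrow 1-v$ (which swaps $\beta\leftrightarrow\gamma$ and $\omega\leftrightarrow 1-\omega$, both lying in $[1-\lep,\lep]$), it suffices to show $t^\kappa I_1(t)\to 0$, where $I_1(t) = \int_0^{r(t)}\{t^\beta\ixmarg(t)\}^{-\nf(v)}\dsp F_V(v)$. Split $I_1(t)$ at $v=\omega$. On $[0,\omega]\subseteq[0,\lep]$, where $\nf$ is strictly decreasing, the identity
\[
t^\kappa\{t^\beta\ixmarg(t)\}^{-\nf(v)} = e^{-(\nf(v)-\nf(\omega))\xq}\,\ixmarg(t)^{-\nf(\omega)}
\]
shows the integrand is bounded by $\ixmarg(t)^{-\nf(\omega)}\to\upm^{-\nf(\omega)}$ and tends pointwise to zero for $v<\omega$; continuity of $F_V$ (Assumption~\ref{As_supp}) makes $\{\omega\}$ an $F_V$-null set, so dominated convergence gives $t^\kappa\int_0^\omega\{\ldots\}^{-\nf(v)}\dsp F_V(v)\to 0$.

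For the remaining piece over $[\omega, r(t)]$ (nonempty only when $r(t)>\omega$), I use that $r(t)\to\omega\le\lep<\rep$ forces $r(t)<\rep$ for large $t$. Hence $\min\{\nf(v):v\in[\omega,r(t)]\} \ge \nf(\omega) - O(|r(t)-\omega|) = \nf(\omega) - O(1/\log t)$, by local Lipschitz continuity of $\nf$ at the interior point $\omega\in(0,1)$, inherited from convexity of $v\mapsto\|(v,1-v)\|_m$. Consequently the factor $t^{\beta(\nf(\omega)-\min\nf)} = e^{O(1)}$ stays bounded, so $t^\kappa\{t^\beta\ixmarg(t)\}^{-\nf(v)}$ is uniformly $O(1)$ on this interval, and combined with $F_V(r(t))-F_V(\omega)\to 0$ the piece is $o(t^{-\kappa})$. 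Therefore $t^\kappa I_1(t)\to 0$, and the symmetric bound gives $t^\kappa I_2(t)\to 0$, so $\xyj(t)\to 0$.

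The main technical obstacle is the overshoot piece $[\omega, r(t)]$: one must marry the precise $O(1/\log t)$ rate for $r(t)-\omega$ with the local Lipschitz behaviour of $\nf$ near $\omega$, so that $t^{\beta(\nf(\omega)-\nf(r(t)))}$ does not blow up. Without this matched pair of estimates, the overshoot could contribute at the full order $t^{-\kappa}$ rather than $o(t^{-\kappa})$, collapsing the argument.
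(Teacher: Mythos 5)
Your proof is correct, but it follows a genuinely different route from the paper's. The paper argues directly from the min representation \eqref{GPjointXY:eq1a}: by Proposition~\ref{xyj:prop} one writes $\xyj(t)=\int_0^1 \jsvf(t)\dsp F_V(v)$ with $\jsvf(t)=t^{\widehat{\nu}}\min\bigl\{t^{-\beta\nf(v)}\ixmarg^{-\nf(v)}(t),\,t^{-\gamma\nf(1-v)}\iymarg^{-\nf(1-v)}(t)\bigr\}$, notes that $\jsvf$ is uniformly bounded (using $\ixmarg\ge\upm>0$, $\iymarg\ge\lowm>0$, which is where $\lep\neq\rep$ enters) and that $\jsvf(t)\to0$ for every $v\neq\omega$, since under $1-\lep\le\omega\le\lep$ the minimizing set is $\nuI=\{\omega\}$, so $\widehat{\nu}<\max\{\beta\nf(v),\gamma\nf(1-v)\}$ off $\omega$; continuity of $F_V$ makes $\{\omega\}$ a null set and dominated convergence concludes. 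You instead start from the split form \eqref{jointXY:eq2}, whose moving split point $r(t)$ obliges you to supply two extra matched estimates: the rate $r(t)-\omega=O(1/\log t)$ (correctly derived from \eqref{l0rt:eq} and Proposition~\ref{xmarg:prop}, again using $\upm,\lowm>0$), and local Lipschitz continuity of $\nf$ near $\omega$ (valid, by convexity of $v\mapsto\norm{(v,1-v)}$ and $\omega\in(0,1)$), so that $t^{\beta\{\nf(\omega)-\nf(v)\}}$ stays bounded on the sliver $[\omega,r(t)]$; for completeness one should also note $\nf(v)\le\nf(\omega)$ there once $r(t)\le\rep$, so that $\ixmarg(t)^{-\nf(v)}\le\upm^{-\nf(\omega)}$, but that is immediate. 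The remainder of your argument — dominated convergence on $[0,\omega]$ using strict decrease of $\nf$ on $[0,\lep]$ and the null set $\{\omega\}$, plus the $v\leftrightarrow1-v$ symmetry — is sound and uses the hypotheses exactly where needed. In short, both proofs run on the same engine (bounded convergence plus the fact that $F_V$ assigns no mass to the single minimizing angle $\omega$), but the paper's use of the min form makes the overshoot problem you wrestle with disappear entirely, needing no rate for $r(t)$ and no regularity of $\nf$ beyond continuity, whereas your version pays for the split representation with the Lipschitz/$O(1/\log t)$ bookkeeping and in return makes the vanishing of the boundary sliver's contribution explicit and quantitative.
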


\begin{proof}
From \eqref{GPjointXY:eq1a} and Proposition~\ref{xmarg:prop} we have
\begin{align*}
\Prob\{A>\xq,\,B>\yq\} &=\int_0^1\min\bigl[e^{-\nf(v)\log\{t^\beta\ixmarg(t)\}},\,e^{-\nf(1-v)\log\{t^\gamma\iymarg(t)\}}\bigr]\dsp F_V(v).
\end{align*}
By Proposition~\ref{xyj:prop} we then get $\xyj(t)=\int_0^1\jsvf(t)\dsp F_V(v)$ where
\[
\jsvf(t)=t^{\widehat{\nu}}\min\big\{t^{-\beta\nf(v)}\ixmarg^{-\nf(v)}(t),\,t^{-\gamma\nf(1-v)}\iymarg^{-\nf(1-v)}(t)\bigr\}.
\]
Now $\nf\ge1$ so $\ixmarg^{-\nf(v)}(t),\,\iymarg^{-\nf(1-v)}(t)\le C=\max\{\upm^{-1},\lowm^{-1}\}$ using Proposition \ref{xmarg:prop}.
Furthermore $\widehat{\nu}\le\max\{\beta\nf(v),\,\gamma\nf(1-v)\}$ (by definition) leading to $\jsvf(t)\le C$ for all $v$ and $t\ge1$. If $v\notin\nuI$ then $\widehat{\nu}<\max\{\beta\nf(v),\,\gamma\nf(1-v)\}$ so $\jsvf(t)\to0$ as $t\to\infty$. In particular, if $\omega\in[1-\lep,\lep]$ it follows that $\nuI=\{\omega\}$ and hence $\jsvf(t)\to0$ as $t\to\infty$ whenever $v\neq\omega$; dominated convergence then gives $\lim_{t\to\infty}\xyj(t)=0$.
\end{proof}

\subsubsection{Case 3: $\lambda<0,$ $\norm{(1,1)}=\|(1,1)\|_\infty$, with Assumption~\ref{As_cd}}

\begin{prop}
\label{l<0xmarg:prop}
Let $\beta,\gamma>0$. Then there exist slowly varying functions $\ixmarg$, $\iymarg$ such that $\xq=\esupp-t^{\lambda\beta}\ixmarg(t)$ and $\yq=\esupp-t^{\lambda\gamma}\iymarg(t)$ for all $t\ge1$. Furthermore $\ixmarg(t)\to\esupp\upm^{\lambda}$ and $\iymarg(t)\to\esupp\lowm^{\lambda}$ as $t\to\infty$.
\end{prop}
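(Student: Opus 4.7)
The plan is to mirror the structure of the proofs of Propositions~\ref{l>0xmarg:prop} and~\ref{xmarg:prop}: introduce a continuous monotone function of $s$ encoding the marginal survivor probability, show that it has a positive finite limit (hence is slowly varying), and then invert via Lemma~\ref{geninv:lem}. Concretely, for $s\ge 1$ I set
\[
\xmarg(s)=s^{\beta}\,\Prob\{A>\esupp-s^{\lambda\beta}\},
\]
so that, since $\lambda\beta<0$, the argument $s^{\lambda\beta}$ tends to $0^+$ as $s\to\infty$. It then suffices to establish $\xmarg(s)\to\upm\,\esupp^{1/\lambda}$ and to read off $\ixmarg$ from the inversion.

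The crux is the asymptotics of $\Prob\{A>\esupp(1-\epsilon)\}$ as $\epsilon\to 0^+$. Using~\eqref{GPmargXY:eq1} together with the identities $\lambda\esupp=-1$ and $-1/\lambda=\esupp$, the integrand becomes $\{1-(1-\epsilon)\nf(v)\}_+^{\esupp}$. I split the integral over $\muI$ and $\muI^c$. On $\muI$, $\nf(v)\equiv 1$, so the integrand is the constant $\epsilon^{\esupp}$, giving the leading contribution $\upm\,\epsilon^{\esupp}$. On $\muI^c$ the integrand is positive only on the shrinking set $\{v\in\muI^c:\nf(v)<1/(1-\epsilon)\}$ and is bounded pointwise by $\{\epsilon\,\nf(v)\}^{\esupp}\le\epsilon^{\esupp}(1-\epsilon)^{-\esupp}$. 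Since $\nf$ is continuous and the level set $\{\nf<1/(1-\epsilon)\}$ decreases to $\muI$ as $\epsilon\to 0^+$, the continuity of $F_V$ (Assumption~\ref{As_supp}) forces the $F_V$-measure of this region to vanish, so the $\muI^c$ contribution is $o(\epsilon^{\esupp})$. Hence
\[
\Prob\{A>\esupp(1-\epsilon)\}=\upm\,\epsilon^{\esupp}\{1+o(1)\},\qquad \epsilon\to 0^+.
\]

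Substituting $\epsilon=s^{\lambda\beta}/\esupp$ and noting that $\beta+\lambda\beta\esupp=0$, I obtain $\xmarg(s)\to\upm\,\esupp^{-\esupp}=\upm\,\esupp^{1/\lambda}$, a positive constant, so $\xmarg$ is slowly varying. The map $s\mapsto s^{-\beta}\xmarg(s)=\Prob\{A>\esupp-s^{\lambda\beta}\}$ is continuous and strictly decreasing from $(0,1]$ onto $(0,1]$ for $s\ge 1$, so Lemma~\ref{geninv:lem} produces a slowly varying $\Isvo$ with $\Isvo(t)\to\upm\,\esupp^{1/\lambda}$ and $\xq=\esupp-t^{\lambda\beta}\Isvo^{\lambda}(t)$. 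Setting $\ixmarg(t)=\Isvo^{\lambda}(t)$ gives a slowly varying function with $\ixmarg(t)\to(\upm\,\esupp^{1/\lambda})^{\lambda}=\upm^{\lambda}\esupp$, as required. The argument for $\yq$ is identical after replacing $\nf(v)$ by $\nf(1-v)$, equivalently $\muI$ by $1-\muI$ and $\upm$ by $\lowm$.

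The only non-trivial step is controlling the $\muI^c$ contribution. One could split it into the behaviour near $v=1/2$ (where the norm locally coincides with $\|\cdot\|_\infty$, so $\nf$ has an explicit form) and near $v=\rep$ when $\rep<1$, but the uniform pointwise bound $\{\epsilon\,\nf\}^{\esupp}$ combined with continuity of $F_V$ obviates this case analysis and, in particular, does not require Assumption~\ref{As_cd}, which is reserved for the finer asymptotics in Proposition~\ref{l<0xyj:prop}. Note also that Conditions~\ref{Cond2} and~\ref{Cond3} together with $\|(1,1)\|_m=\|(1,1)\|_\infty$ force $\lep=1/2<\rep$, so $\upm,\lowm>0$ under Assumption~\ref{As_supp}, which ensures that the stated limit $\upm^{\lambda}\esupp$ is both well-defined and finite.
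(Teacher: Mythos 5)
Your proof is correct and follows essentially the same route as the paper: you study the same function $\xmarg(s)=s^{\beta}\Prob(A>\esupp-s^{\lambda\beta})$, show it converges to the positive constant $\upm\esupp^{1/\lambda}=(-\lambda)^{-1/\lambda}\upm$, and invert via Lemma~\ref{geninv:lem} with $\ixmarg=\Isvo^{\lambda}$, exactly as in the paper. The only difference is cosmetic: you obtain the limit by an explicit split over $\muI$ and its complement with a vanishing-measure bound, where the paper applies dominated convergence directly (and, like the paper, you correctly note that Assumption~\ref{As_cd} is not needed here).
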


\begin{proof}
Set $S_0=\esupp^{1/(\lambda\beta)}$. For $s\ge S_0$ we get
\begin{align}
\xmarg(s):=s^{\beta}\Prob\bigl(A > \esupp-s^{\lambda\beta}\bigr) &=\int_0^1 \bigl[s^{-\lambda\beta}\bigl\{1-\lambda(1/\lambda+s^{\lambda\beta})\nf(v)\bigr\}\bigr]_+^{-1/\lambda}\dsp F_V(v)\nonumber\\
\label{l<0xmargin:eq1}
&=\int_0^1 \bigl[(s^{-\lambda\beta}+\lambda)\{1-\nf(v)\}-\lambda\bigr]_+^{-1/\lambda}\dsp F_V(v),
\end{align}
using \eqref{GPmargXY:eq1}.
For $s\ge S_0$ we have $(s^{-\lambda\beta}+\lambda)\{1-\nf(v)\}\le0$ (recall that $\nf(v)\ge1$) 
so the integrand in \eqref{l<0xmargin:eq1} is bounded above by $(-\lambda)^{-1/\lambda}$.
Also note that $s^{-\lambda\beta}\to+\infty$ as $s\to\infty$, so 
\[
\lim_{s\to\infty}\bigl[(s^{-\lambda\beta}+\lambda)\{1-\nf(v)\}-\lambda\bigr]_+
=\begin{cases}
0&\text{if $\nf(v)>1$,}\\
-\lambda&\text{if $\nf(v)=1$.} 
\end{cases}
\]
As $\{v:\nf(v)=1\}=\muI$, dominated convergence now gives
\[
\lim_{s\to\infty}\xmarg(s)=\int_{\muI} (-\lambda)^{-1/\lambda}\dsp F_V(v)=(-\lambda)^{-1/\lambda}\upm.
\]
Since this limit is non-zero it follows that $\xmarg$ is slowly varying. The result for $\xq$ now follows from Lemma~\ref{geninv:lem} (with $\ixmarg=\Isvo^{\lambda}$). The $\yq$ case is similar.
\end{proof}

Let $\Delta$ be a neighbourhood of $1/2$ on which $F'_V$ is continuous; in particular, $\dsp F_V(v)=F_V'(v)\dsp v$ for $v\in\Delta$.

\begin{proof}[Proof of Proposition~\ref{l<0xyj:prop}]
Set $r(t)=\xq/\{\xq+\yq\}$ so \eqref{GPjointXY:eq1} gives $\Prob\bigl\{A > \xq,\,B > \yq \bigr\}=\intg{-}+\intg{+}$ where
\[
\intg{-}\!=\!\int_0^{r(t)}\{1+\lambda\xq\nf(v)\}_+^{-1/\lambda}\dsp F_V(v)
\quad\text{and}\quad 
\intg{+}\!=\!\int_{r(t)}^1\{1+\lambda\yq\nf(1-v)\}_+^{-1/\lambda}\dsp F_V(v).
\]
To consider $\intg{-}$ firstly set $v_-(t)=\xq/\{\xq+\esupp\}$. 
Since $\yq < \esupp$ and $\xq,\yq\to\esupp$ as $t\to\infty$ we get $v_-(t)<r(t)$ while $v_-(t),r(t)\to1/2$ as $t\to\infty$.
As $\rep>1/2$ we can then choose $T_0$ so that $[v_-(t),r(t)]\subseteq[1-\rep,\rep]\cap\Delta$ whenever $t\ge T_0$.
For $t\ge T_0$ it follows that $\nf(v)=\max\{(1-v)/v,1\}$ when $v\in[v_-(t),r(t)]$; in particular $\nf\{v_-(t)\}=\esupp/\xq$.
Furthermore \eqref{shapemu:eq} implies $\nf(v)$ is decreasing on $[0,r(t)]$. For $v\in[v_-(t),r(t)]$ we thus have
\[
1+\lambda\xq\nf(v)>0
\ \Longrightarrow\ \nf(v)<\frac{1}{-\lambda\xq}=\nf\{v_-(t)\}
\ \Longrightarrow\ v>v_-(t).
\]
Therefore
\begin{align*}
\intg{-}&=\int_{v_-(t)}^{r(t)}\left\{1+\lambda\xq \max\Bigl(\frac{1-v}{v},\,1\Bigr)\right\}_+^{-1/\lambda}\dsp F_V(v)\\
&=\int_{v_-(t)}^{r(t)}\min\left\{1+\lambda\xq \frac{1-v}{v},\,1+\lambda\xq \right\}_+^{-1/\lambda}\dsp F_V(v).
\end{align*}
Consider the new variable $u=\{1+\lambda\yq\}^{-1}\bigl\{1+\lambda\xq(1-v)/v\bigr\}$, and its inverse
$v=-\lambda\xq[1-\lambda\xq-\{1+\lambda\yq\}u]^{-1}$. We have $u=0$ (respectively $u=1$) when $v=v_-(t)$ (respectively $v=r(t)$). Thus
\begin{subequations}
\label{Int-:eqs}
\begin{align}
\intg{-}
\label{Int-:eq1}
&=\{1+\lambda\yq\}^{1-1/\lambda}\int_0^1\min\left\{u,\,\frac{1+\lambda\xq}{1+\lambda\yq}\right\}^{-1/\lambda} G^-_t(u)\dsp u\\
\label{Int-:eq2}
&=\{1+\lambda\yq\}\,\{1+\lambda\xq\}^{-1/\lambda}\int_0^1\min\left\{\frac{1+\lambda\yq}{1+\lambda\xq}\,u,1\right\}^{-1/\lambda} G^-_t(u)\dsp u,
\end{align}
\end{subequations}
where
\[
G^-_t(u)=F'\left[\frac{-\lambda\xq }{1-\lambda\xq -\{1+\lambda\yq\}u}\right]\,\frac{-\lambda\xq}{\bigl[1-\lambda\xq-\{1+\lambda\yq\}u\bigr]^2}.
\]
As $t\to\infty$ we have $-\lambda\xq,-\lambda\yq \to1$ so $1-\lambda\xq -\{1+\lambda\yq\}u\to2$, uniformly for $u\in[0,1]$. 
Using Assumption~\ref{As_cd} it follows that $G^-_t(u)\to F_V'(1/2)/4=:\Gamma$, uniformly for $u\in[0,1]$. Hence the integrands in \eqref{Int-:eqs} are uniformly bounded for all sufficiently large $t$. Proposition \ref{l<0xmarg:prop} gives
\begin{equation}
\label{xqyqasylim:eq}
1+\lambda\xq=\upm^\lambda t^{\lambda\beta}\{1+o(1)\}
\ \quad\text{and}\quad\ 
1+\lambda\yq=\lowm^\lambda t^{\lambda\gamma}\{1+o(1)\}
\ \text{as $t\to\infty$.}
\end{equation}

\noindent
\emph{If $\beta<\gamma$:}\quad
As $t\to\infty$ we have $\{1+\lambda\xq\}/\{1+\lambda\yq\}\to+\infty$ by \eqref{xqyqasylim:eq}, so applying dominated convergence to \eqref{Int-:eq2} gives
\[
\intg{-}=\lowm^{\lambda-1} t^{\lambda\gamma-\gamma}\int_0^1u^{-1/\lambda}\Gamma\dsp u\;\{1+o(1)\}
=O(t^{\lambda\gamma-\gamma})=o(t^{\lambda\beta-\gamma}).
\]

\noindent
\emph{If $\beta>\gamma$:}\quad
As $t\to\infty$ we have $\{1+\lambda\yq\}/\{1+\lambda\xq\}\to+\infty$ by \eqref{xqyqasylim:eq}, so applying dominated convergence to \eqref{Int-:eq2} gives
\[
\intg{-}=\lowm^{\lambda}t^{\lambda\gamma}\upm^{-1}t^{-\beta}\int_0^11^{-1/\lambda}\Gamma\dsp u\;\{1+o(1)\}
=\Gamma\lowm^{\lambda}\upm^{-1}\,t^{\lambda\gamma-\beta}\{1+o(1)\}.
\]

\noindent
\emph{If $\beta=\gamma$:}\quad
As $t\to\infty$ we have $\{1+\lambda\xq\}/\{1+\lambda\yq\}\to\upm^{\lambda}/\lowm^{\lambda}$ by \eqref{xqyqasylim:eq}, so applying dominated convergence to \eqref{Int-:eq1} gives
\[
\intg{-}=\lowm^{\lambda-1} t^{\lambda\beta-\beta}\int_0^1\min\left(u,\,\frac{\upm^{\lambda}}{\lowm^{\lambda}}\right)^{-1/\lambda}\Gamma\dsp u\;\{1+o(1)\}.
\]
When $\upm\le\lowm$ this becomes $\intg{-}=-\Gamma \lambda(1-\lambda)^{-1}\lowm^{\lambda-1}\,t^{\lambda\beta-\beta}\{1+o(1)\}.$
When $\upm\ge\lowm$ we get
\begin{align*}
\intg{-}
&=\Gamma\lowm^{\lambda-1}\left(\int_0^{\frac{\upm^\lambda}{\lowm^\lambda}}u^{-1/\lambda}\dsp u
+\int_{\frac{\upm^\lambda}{\lowm^\lambda}}^1\frac{\lowm}{\upm}\dsp u\right)t^{\lambda\beta-\beta}\{1+o(1)\}=\Gamma\left(\lowm^\lambda-\frac{\upm^\lambda}{1-\lambda}\right)\upm^{-1}\,t^{\lambda\beta-\beta}\{1+o(1)\}.
\end{align*}
\noindent
A similar calculation for $\intg{+}$ leads to 
\[
\intg{+}
=\begin{cases}
\Gamma\upm^\lambda\lowm^{-1}\,t^{\lambda\beta-\gamma}\{1+o(1)\}&\text{if $\beta<\gamma$,}\\
o(t^{\lambda\gamma-\beta})&\text{if $\beta>\gamma$,}\\  
\Gamma\bigl(\upm^\lambda-\frac1{1-\lambda}\,\lowm^\lambda\bigr)\lowm^{-1}\,t^{\lambda\beta-\beta}\{1+o(1)\}&\text{if $\beta=\gamma$ and $\upm\le\lowm$,}\\[2pt]
\Gamma\,\frac{-\lambda}{1-\lambda}\,\upm^{\lambda-1}\,t^{\lambda\beta-\beta}\{1+o(1)\}&\text{if $\beta=\gamma$ and $\upm\ge\lowm$,}    
\end{cases}
\]
as $t\to\infty$.
This is combined with $\intg{-}$ to give \eqref{xyjlim:eq}. As the limit is non-zero in all cases, $\xyj$ is slowly varying.
\end{proof}

\section{Derivations of ray dependence functions ($\lambda>0$ and $\lambda<0$) and spectral density ($\lambda>0$)}
\label{sec:AppB}

\subsubsection*{Derivation of $d(q)$ for $\lambda>0$}

This follows simply by noting that Proposition~6 gives that marginal quantile functions are
\[
 q_A(tx)= (tx)^{\lambda}\ixmarg(tx),~~
 q_B(ty)= (ty)^{\lambda}\iymarg(ty),
\]
 for $tx, ty \ge 1$ so that using the same dominated convergence arguments as in $\lim_{t\to\infty}\theta(t)$ given in the proof of Proposition~1,
\begin{align}
\label{eq:dlim}
 \lim_{t\to\infty} t\Prob\{A>q_A(tx), B>q_B(ty)\} = \lambda^{-1/\lambda} \int_0^1 \min\left\{\frac{\nf(v)^{-1/\lambda}}{\upn x},\frac{\nf(1-v)^{-1/\lambda}}{\lown y}\right\}\, \dsp F_V(v).
\end{align}
Therefore $\Prob\{A>q_A(tq), B>q_B(t(1-q))\}/\Prob\{A>q_A(t), B>q_B(t)\}$ converges to $q^{-1/2}(1-q)^{-1/2} d(q)$ with $d$ the form claimed in Remark~1.\\ \vspace{0.1cm}

\subsubsection*{Derivation of $h$ for $\lambda>0$}

To derive $h$, consider~\eqref{eq:dlim}, with $\dsp F_V(v) = f_V(v) \dsp v$. This expression can be set equal to
\begin{align*}
\int_0^1 2\min\left(\frac{w^*}{x},\frac{1-w^*}{y}\right) h(w^*) \dsp w^* = \int_0^{\frac{x}{x+y}} \frac{2w^*}{x} h(w^*) \dsp w^* +  \int_{\frac{x}{x+y}}^1 \frac{2(1-w^*)}{x} h(w^*) \dsp w^*.
\end{align*}
By differentiating under the integral sign, we have
\begin{align*}
\frac{\partial^2}{\partial x\partial y} \left\{\int_0^{\frac{x}{x+y}} \frac{2w^*}{x} h(w^*) \dsp w^* +  \int_{\frac{x}{x+y}}^1 \frac{2(1-w^*)}{y} h(w^*) \dsp w^*\right\} = \frac{2}{(x+y)^3} h\left(\frac{x}{x+y}\right),
\end{align*}
so that $h$ is recovered upon setting $x=w,y=1-w$, and dividing by two. Thus we begin with
\begin{align*}
\lambda^{-1/\lambda}  \int_0^1 \min\left\{\frac{\nf(v)^{-1/\lambda}}{\upn x}, \frac{\nf(1-v)^{-1/\lambda}}{\lown y}\right\} f_V(v) \dsp v &= \lambda^{-1/\lambda} \int_0^{r(x,y)} \frac{\nf(v)^{-1/\lambda}}{\upn x} f_V(v) \dsp v \\& ~ + \lambda^{-1/\lambda} \int_{r(x,y)}^1 \frac{\nf(1-v)^{-1/\lambda}}{\lown y} f_V(v) \dsp v,
\end{align*}
with $r(x,y) = \frac{(x\upn)^\lambda}{(x\upn)^\lambda+(y\lown)^\lambda}$. Differentiating with respect to $x$ yields
\begin{align*}
 \lambda^{-1/\lambda} \Bigg\{\int_0^{r(x,y)} -\frac{\nf(v)^{-1/\lambda}}{\upn x^2} f_V(v) \dsp v + \frac{\nf\{r(x,y)\}^{-1/\lambda}}{\upn x}f_V\{r(x,y)\}\frac{\partial}{\partial x} r(x,y)& \\- \frac{\nf\{1-r(x,y)\}^{-1/\lambda}}{\lown y}f_V\{r(x,y)\}\frac{\partial}{\partial x} r(x,y)\Bigg\} 
 &=  \int_0^{r(x,y)} -\frac{\nf(v)^{-1/\lambda}}{\upn x^2} f_V(v) \dsp v,
\end{align*}
whilst differentiating what remains with respect to $y$ gives
\begin{align*}
 -\lambda^{-1/\lambda} \frac{\nf\{r(x,y)\}^{-1/\lambda}}{\upn x^2}f_V\{r(x,y)\}\frac{\partial}{\partial y} r(x,y).
\end{align*}
Substituting in $\nf$ and noting that 
\[
 \frac{\partial}{\partial y} r(x,y) =  \frac{\partial}{\partial y} \frac{(x\upn)^\lambda}{(x\upn)^\lambda+(y\lown)^\lambda} = -\lambda \frac{x^\lambda y^{\lambda-1} \upn^\lambda \lown^\lambda}{\{(x\upn)^\lambda+(y\lown)^\lambda\}^2}
\]
gives
\begin{align*}
\frac{x^{\lambda-1}y^{\lambda-1} \upn^\lambda \lown^\lambda}{\|(x\upn)^\lambda,(y\lown)^\lambda\|_m^{1/\lambda}\{(x\upn)^\lambda+(y\lown)^\lambda\}^2} f_V\left\{\frac{(x\upn)^\lambda}{(y\upn)^\lambda+(y\lown)^\lambda}\right\},
\end{align*}
so that substituting $x=w,y=1-w$ and dividing by two yields
\begin{align*}
 h(w)=\frac{\lambda^{1-1/\lambda}}{2}\frac{w^{\lambda-1}(1-w)^{\lambda-1} \upn^\lambda \lown^\lambda}{\|(w\upn)^\lambda,((1-w)\lown)^\lambda\|_m^{1/\lambda}\{(w\upn)^\lambda+((1-w)\lown)^\lambda\}^2} f_V\left\{\frac{(w\upn)^\lambda}{(w\upn)^\lambda+((1-w)\lown)^\lambda}\right\},\\ \vspace{0.1cm}
\end{align*}
which is denoted $h(\cdot;\lambda,f_V)$ in Remark~\ref{rmk:1}.

\subsubsection*{Derivation of $d(q)$ for $\lambda<0$}

This follows firstly by noting that Proposition~9 gives that marginal quantile functions are
\[
 q_A(tx)= \Lambda - (tx)^{\lambda}\ixmarg(tx),~~
 q_B(ty)= \Lambda - (ty)^{\lambda}\iymarg(ty),
\]
 for $tx, ty \ge 1$. The ray dependence function can be found by following the proof of Proposition~4 through with these $q_A(tx)$ and $q_B(ty)$, which reveals that 
\begin{align*}
\label{eq:dlim2}
 \lim_{t\to\infty} t^{1-\lambda} \Prob\{A>q_A(tx), B>q_B(ty)\} =\frac{F'_V(1/2)}{4} \bigl\{\min(x\upm,y\lowm)^\lambda-\frac{1+\lambda}{1-\lambda}\max(x\upm,y\lowm)^{\lambda}\bigr\}\max(x\upm,y\lowm)^{-1}.
\end{align*}
Therefore $\Prob\{A>q_A(tq), B>q_B(t(1-q))\}/\Prob\{A>q_A(t), B>q_B(t)\}$ converges to $q^{-\frac{1-\lambda}{2}}(1-q)^{-\frac{1-\lambda}{2}} d(q)$ with $d$ the form claimed in Remark~2.

\bibliographystyle{apalike}
\bibliography{ADAIBib}

\end{document}